\documentclass{article}
\usepackage[left=1in, right=1in, top=1in, bottom=1in]{geometry}
\usepackage{lineno}
\usepackage{enumitem}
\usepackage{authblk}
\usepackage{amsmath, amsthm, amssymb}
\usepackage{xspace}
\usepackage{comment}
\usepackage{hyperref}
\usepackage{xcolor}

\newcommand{\algprobm}[1]{\textsc{#1}\xspace}

\theoremstyle{plain}
\newtheorem{theorem}{Theorem}[section]
\newtheorem{proposition}[theorem]{Proposition}
\newtheorem{corollary}[theorem]{Corollary}
\newtheorem{lemma}[theorem]{Lemma}
\newtheorem{observation}[theorem]{Observation}

\theoremstyle{definition}
\newtheorem{definition}[theorem]{Definition}
\newtheorem{remark}[theorem]{Remark}
\newtheorem{question}[theorem]{Question}

\newcommand{\Lem}[1]{Lemma~\ref{#1}\xspace}
\newcommand{\Cor}[1]{Corollary~\ref{#1}\xspace}

\newcommand{\Thm}[1]{Theorem~\ref{#1}\xspace}

\DeclareMathOperator{\Aut}{Aut}

\DeclareMathOperator{\poly}{poly}

\title{On the Complexity of Identifying Strongly Regular Graphs}

\author[1]{Michael Levet
\footnote{
This work was partially supported by J. A. Grochow's startup funds and NSF award CISE-2047756. I wish to thank C.J. Colbourn for helping me to find a copy of \cite{colborun_1979}. I also wish to thank J.N. Cooper and J.A. Grochow for helpful discussions. 
}
}
\affil[1]{
Department of Computer Science- University of Colorado Boulder
}

\begin{document}
\maketitle

\begin{abstract}
In this paper, we show that \algprobm{Graph Isomorphism} (\algprobm{GI}) is not $\textsf{AC}^{0}$-reducible to several problems, including the \algprobm{Latin Square Isotopy} problem, isomorphism testing of several families of Steiner designs, and isomorphism testing of conference graphs. As a corollary, we obtain that $\algprobm{GI}$ is not $\textsf{AC}^{0}$-reducible to isomorphism testing of Latin square graphs and strongly regular graphs arising from special cases of Steiner $2$-designs. We accomplish this by showing that the generator-enumeration technique for each of these problems can be implemented in $\beta_{2}\textsf{FOLL}$, which cannot compute $\algprobm{Parity}$ (Chattopadhyay, Tor\'an, \& Wagner, \textit{ACM Trans. Comp. Theory}, 2013).  
\end{abstract}

\noindent \textbf{Keywords:} Latin squares, Quasigroups, Graph Isomorphism, Steiner designs, Conference graphs.

\thispagestyle{empty}

\newpage

\setcounter{page}{1}

\section{Introduction}
\label{sec:introduction}

In one of the original papers on \algprobm{Graph Isomorphism} (\algprobm{GI}), by Corneil \& Gotlieb, the authors observed that in practice, strongly regular graphs serve as difficult instances \cite{CorneilGotlieb}. A common approach for producing such instances is to construct strongly regular graphs from combinatorial objects, such as Latin squares, nets (partial geometries), and combinatorial designs \cite{CorneilGotlieb, SchmidtDruffel}. While strongly-regular graphs have been perceived as hard instances, there is little evidence to suggest that they are $\textsf{GI}$-complete.

There is a beautiful classification of strongly-regular graphs (of valency $\rho < n/2$, which is essentially without loss of generality by taking complements) due to  Neumaier \cite{Neumaier}. In using this classification in isomorphism testing, Spielman \cite{SpielmanSRGs} and Babai--Wilmes \cite{BabaiWilmes} organize it as follows:
\begin{enumerate}[label=(\alph*)]
\item Latin square graphs,
\item Line graphs of Steiner $2$-designs satisfying $\rho < f(n)$, for a certain function $f(n) \sim n^{3/4}$,
\item Strongly-regular graphs of degree $\rho = (n-1)/2$ (a.k.a. conference graphs),
\item Graphs satisfying a certain eigenvalue inequality referred to as \textit{Neumaier's claw bound}.
\end{enumerate} 

In this paper, we show (in particular) that (a) Latin square graphs, (b) line graphs arising from special cases of Steiner $2$-designs, and (c) conference graphs are not $\textsf{GI}$-hard under $\textsf{AC}^{0}$-computable many-one reductions. As with \cite{ChattopadhyayToranWagner}, we show this by improving the parallel complexity of isomorphism testing in these classes of graphs to $\beta_{2}\textsf{FOLL}$, which does not compute $\algprobm{Parity}$ (in the case of conference graphs, we achieve a stronger bound of $\beta_{2}\textsf{AC}^{0}$). As $\algprobm{Parity}$ is $\textsf{AC}^{0}$-reducible to $\algprobm{GI}$ \cite{Toran}, this rules out $\textsf{AC}^{0}$-reductions (and more strongly, for any $i, c \geq 0$, we rule out $\beta_{i}\textsf{FO}((\log \log n)^{c})$-reductions).

Prior to our work, there were few pieces of complexity-theoretic evidence to suggest that strongly-regular graphs are not $\textsf{GI}$-complete. Babai showed that there is no functorial reduction from \algprobm{GI} to the isomorphism testing of strongly-regular graphs \cite{BabaiFunctorialSRGs}. We note that almost all known reductions between isomorphism problems are functorial (c.f. \cite{BabaiFunctorialSRGs}). An example where this is not the case is the reduction from $1$-$\algprobm{Block Conjugacy}$ of shifts of finite type to $k$-\algprobm{Block Conjugacy} \cite[Theorem~18]{SchrockFrongillo}. In the case of \algprobm{Quasigroup Isomorphism}, Chattopadhyay, Tor\'an, and Wagner improved the upper bound to $\beta_{2}\textsf{L} \cap \beta_{2}\textsf{FOLL}$. In particular, they showed that for any $i, c \geq 0$, \algprobm{GI} is not $\beta_{i}\textsf{FO}((\log \log n)^{c})$-reducible to \algprobm{Quasigroup Isomorphism} \cite{ChattopadhyayToranWagner}.

Let us briefly discuss why problems such as \algprobm{Quasigroup Isomorphism}, \algprobm{Latin Square Isotopy}, and isomorphism testing of Steiner $2$-designs are important from a complexity-theoretic perspective. Several families of strongly regular graphs (e.g., Latin square graphs, and the block-intersection graphs of Steiner $2$-designs) arise naturally from the isomorphism problems from the underlying combinatorial objects. Precisely, many of these problems, including \algprobm{Quasigroup Isomorphism}, \algprobm{Latin Square Isotopy}, and isomorphism testing of Steiner $2$-designs, are polynomial-time (and in fact, $\textsf{AC}^{0}$) reducible to \algprobm{GI}. Polynomial-time solutions for these problems have been elusive. In particular, each of these problems are candidate $\textsf{NP}$-intermediate problems; that is, problems that belong to $\textsf{NP}$, but are neither in $\textsf{P}$ nor $\textsf{NP}$-complete \cite{Ladner}. 

There is considerable evidence suggesting that $\algprobm{GI}$ is not $\textsf{NP}$-complete \cite{Schoning, BuhrmanHomer, ETH, BabaiGraphIso, GILowPP, ArvindKurur, MATHON1979131}. As \algprobm{Quasigroup Isomorphism}, \algprobm{Latin Square Isotopy}, and isomorphism testing of Steiner $2$-designs reduce to $\algprobm{GI}$, this evidence also suggests that none of these problems is $\textsf{NP}$-complete. Furthermore, Chattopadhyay, Tor\'an, \& Wagner \cite{ChattopadhyayToranWagner} showed that \algprobm{Quasigroup Isomorphism} is not $\textsf{NP}$-complete under $\textsf{AC}^{0}$-reductions, and our result shows the same for \algprobm{Latin Square Isotopy}. In light of Babai's $n^{\Theta(\log^{2}(n))}$-time algorithm \cite{BabaiGraphIso}, these problems serve as key barries to placing \algprobm{GI} into \textsf{P}.  \\

\noindent \textbf{Main Results.} In this paper, we show that several families of isomorphism problems characterized by the generator enumeration technique are not $\textsf{GI}$-complete under $\beta_{i}\textsf{FO}((\log \log n)^{c})$-reductions, for any choice of $i, c \geq 0$. In particular, this rules out $\textsf{AC}^{0}$-reductions from $\algprobm{GI}$ to these problems. We also show that these results extend to the corresponding families of strongly regular graphs, providing complexity-theoretic evidence that isomorphism testing of strongly regular graphs may not be $\textsf{GI}$-complete.

The first problem we consider is the \algprobm{Latin Square Isotopy} problem, which takes as input two quasigroups $L_{1}, L_{2}$ given by their multiplication tables and asks if there exist bijections $\alpha, \beta, \gamma : L_{1} \to L_{2}$ such that $\alpha(x)\beta(y) = \gamma(xy)$ for all $x, y \in L_{1}$.

\begin{theorem} \label{thm:LatinSquareIsotopy}
$\algprobm{Latin Square Isotopy} \in \beta_{2}\textsf{L} \cap \beta_{2}\textsf{FOLL}$.
\end{theorem}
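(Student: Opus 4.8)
The plan is to reduce \algprobm{Latin Square Isotopy} to \algprobm{Quasigroup Isomorphism} in a resource-bounded way, and then to invoke the result of Chattopadhyay, Tor\'an, \& Wagner \cite{ChattopadhyayToranWagner} that the latter lies in $\beta_{2}\textsf{L} \cap \beta_{2}\textsf{FOLL}$. The bridge between the two problems is the classical theory of principal isotopes: given a quasigroup $(Q, \cdot)$ and elements $a, b \in Q$, the operation $x \circ y := (x/b)\cdot(a \backslash y)$ turns the underlying set into a \emph{loop} with identity $e = a\cdot b$, called the principal loop isotope of $Q$ at $(a,b)$. Albert's theorem states that a loop is isotopic to $Q$ if and only if it is isomorphic to one of these $n^{2}$ principal loop isotopes. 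Since isotopy is an equivalence relation, the principal loop isotopes of $Q$ (up to isomorphism) are exactly the loops in the isotopy class of $Q$; hence $L_{1}$ and $L_{2}$ are isotopic precisely when some principal loop isotope of $L_{1}$ is isomorphic to some principal loop isotope of $L_{2}$, as two isotopy classes that share a loop must coincide.

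This yields the following algorithm. First I would nondeterministically guess the four elements $a_{1}, b_{1} \in L_{1}$ and $a_{2}, b_{2} \in L_{2}$; this costs only $O(\log n)$ bits. From these I form the multiplication tables of the two principal loop isotopes. The only nontrivial ingredient is computing the left and right division maps $y \mapsto a \backslash y$ and $x \mapsto x/b$, which are the inverses of the (bijective) left- and right-translations; inverting a given permutation is $\textsf{AC}^{0}$-computable, so each entry $(x/b)\cdot(a\backslash y)$ of the new table can be produced in $\textsf{AC}^{0}$, hence in $\textsf{L} \cap \textsf{FOLL}$. It then remains to test whether the two resulting loops are isomorphic.

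For the isomorphism test I would appeal to the generator-enumeration strategy underlying \cite{ChattopadhyayToranWagner}; since loops are quasigroups, their result applies directly. The key structural fact is that every quasigroup of order $n$ is generated by $O(\log n)$ elements: if $H$ is a subquasigroup and $g \notin H$, then $H$ and $Hg$ are disjoint (as $hg \in H$ would force $g = h\backslash(hg) \in H$) and of equal cardinality, so adjoining a generator not already present at least doubles the order. Thus I guess a generating sequence $g_{1}, \ldots, g_{k}$ with $k = O(\log n)$ (the indices costing $O(\log^{2} n)$ bits) together with their intended images $h_{1}, \ldots, h_{k}$ in the second loop ($O(\log^{2} n)$ further bits), and then verify that the induced assignment $g_{i} \mapsto h_{i}$ extends to an isomorphism. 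As shown in \cite{ChattopadhyayToranWagner}, this verification---expanding each element as a canonical term in the generators, and checking that the assignment is well-defined, bijective, and operation-preserving---is computable in $\textsf{L} \cap \textsf{FOLL}$.

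Putting the pieces together, the total amount of nondeterminism is $O(\log n) + O(\log^{2} n) = O(\log^{2} n)$ bits, while the deterministic verification lies in $\textsf{L} \cap \textsf{FOLL}$; hence the whole procedure runs in $\beta_{2}\textsf{L} \cap \beta_{2}\textsf{FOLL}$, as claimed. The step I expect to require the most care is the correctness of the reduction rather than its complexity: I must argue via Albert's theorem that ranging over principal loop isotopes of \emph{both} inputs exactly captures the isotopy relation, and confirm that forming these isotopes neither breaks the $\textsf{AC}^{0}$ bound nor inflates the guessing budget beyond $O(\log^{2} n)$. Once this is in place, the bottleneck is entirely the quasigroup isomorphism subroutine, which already meets the desired bound.
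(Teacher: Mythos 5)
Your proposal is correct, but it takes a genuinely different route from the paper's own proof of Theorem~\ref{thm:LatinSquareIsotopy}. You reduce \algprobm{Latin Square Isotopy} to \algprobm{Quasigroup Isomorphism}: guess $(a_{1},b_{1})$ and $(a_{2},b_{2})$, form the principal loop isotopes $x \circ y = (x/b)\cdot(a\backslash y)$ in $\textsf{AC}^{0}$, and invoke \cite{ChattopadhyayToranWagner} as a black box, with correctness resting on Albert's theorem \cite{Albert} that every loop isotopic to a quasigroup is isomorphic to one of its $n^{2}$ principal loop isotopes. The paper instead applies cube generating sequences \emph{directly} to the isotopy relation, following Miller's second proof \cite{MillerTarjan}: it guesses cube generating sequences $A, B$ of $L_{1}$ and $A', B'$ of $L_{2}$, builds bijections $\varphi_{A}, \varphi_{B}$ from them, defines the relation $\varphi_{C} = \{(xy, \varphi_{A}(x)\varphi_{B}(y)) : x, y \in L_{1}\}$, and checks in $\textsf{AC}^{0}$ that $\varphi_{C}$ is a bijection (Lemma~\ref{lem:Surjectivity}), which by construction makes $(\varphi_{A},\varphi_{B},\varphi_{C})$ an isotopy. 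Notably, your route essentially coincides with the \emph{alternative} proof the paper itself sketches in Remark~\ref{rmk:MainClass}: Miller's normal forms of a Latin square are exactly the multiplication tables of its principal loop isotopes, so placing $L$ into its $n^{2}$ normal forms is your enumeration of loop isotopes, yielding an $\textsf{AC}^{0}$-computable disjunctive truth-table (equivalently, nondeterministic) reduction to \algprobm{Quasigroup Isomorphism}. What each approach buys: yours is more modular, keeping all cube-sequence machinery inside the black box and shifting the correctness burden onto classical isotopy theory; the paper's direct argument is chosen deliberately to exhibit that cube generating sequences can be applied to isotopy of algebraic structures rather than only to isomorphism. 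One imprecision worth flagging: the witnesses in \cite{ChattopadhyayToranWagner} are \emph{cube} generating sequences, whose canonical expressions $s_{0}s_{1}^{\epsilon_{1}} \cdots s_{k}^{\epsilon_{k}}$ with balanced parenthesization are what make the verification $\textsf{L} \cap \textsf{FOLL}$-computable; plain generating sequences of size $O(\log n)$, which is what your doubling argument produces, do not obviously support expanding each element as a short canonical term within those depth bounds. Since you defer that step entirely to \cite{ChattopadhyayToranWagner}, whose Theorem~3.4 is exactly the needed statement, the proof stands.
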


\begin{remark}
This improves the previous bound of $\beta_{2}\textsf{NC}^{2}$ due to Wolf \cite{Wolf}.
\end{remark}

To prove Theorem \ref{thm:LatinSquareIsotopy}, we leverage cube generating sequences, in a similar manner as Chattopadhyay, Tor\'an, \& Wagner \cite[Theorem~3.4]{ChattopadhyayToranWagner}. After non-deterministically guessing cube generating sequences, we can check in $\textsf{L} \cap \textsf{FOLL}$ whether the induced map extends to an isotopy of the Latin squares.

Now for any $i, c \geq 0$, we have that $\beta_{i}\textsf{FO}( (\log \log n)^{c})$ cannot compute \algprobm{Parity} \cite{ChattopadhyayToranWagner}. Thus, we obtain the following corollary.

\begin{corollary}
For any $i, c \geq 0$, \algprobm{GI} is not $\beta_{i}\textsf{FO}((\log \log n)^{c})$-reducible to \algprobm{Latin Square Isotopy}.
\end{corollary}

Latin squares yield a corresponding family of strongly regular graphs, known as \textit{Latin square graphs}, where two Latin square graphs $G_{1}$ and $G_{2}$ are isomorphic if and only if their corresponding Latin squares $L(G_{1})$ and $L(G_{2})$ are main class isotopic \cite[Lemma~3]{MillerTarjan} (that is, the Latin squares can be put into \textit{compatible} normal forms that correspond to isomorphic quasigroups). Miller previously showed that it is possible to recover the corresponding Latin square from a Latin square graph in polynomial-time \cite{MillerTarjan}, and Wolf strengthened the analysis to show that this reduction is in fact $\textsf{NC}^{1}$-computable \cite{Wolf}. A closer analysis yields that this reduction is actually $\textsf{AC}^{0}$-computable (see Lemma~\ref{RecoverLSGAC0}). Furthermore, we can place a Latin square into its main isotopy class in $\textsf{AC}^{0}$ (see Remark~\ref{rmk:MainClass}). Thus, we obtain the following corollary.

\begin{corollary} \label{cor:LatinSquareGI}
Deciding whether two Latin square graphs are isomorphic is in $\beta_{2}\textsf{L} \cap \beta_{2}\textsf{FOLL}$. Consequently, for any $i, c \geq 0$, \algprobm{GI} is not $\beta_{i}\textsf{FO}((\log \log n)^{c})$-reducible to isomorphism testing of Latin square graphs.
\end{corollary}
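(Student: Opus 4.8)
The plan is to prove Corollary~\ref{cor:LatinSquareGI} by reducing isomorphism testing of Latin square graphs to \algprobm{Latin Square Isotopy} (or more precisely, to main-class isotopy of the underlying Latin squares), using \Thm{thm:LatinSquareIsotopy} as the engine. The key structural input is the result of Miller \cite{MillerTarjan} that two Latin square graphs $G_1, G_2$ are isomorphic if and only if their associated Latin squares $L(G_1), L(G_2)$ are main-class isotopic, together with the fact that one can recover the Latin square $L(G)$ from the graph $G$ and normalize it into its main isotopy class. So the whole argument reduces to establishing that each of these steps lives in a complexity class closed under composition with $\beta_2\textsf{L} \cap \beta_2\textsf{FOLL}$, and that the normalization does not destroy the $\beta_2\textsf{L} \cap \beta_2\textsf{FOLL}$ bound.

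First I would invoke Lemma~\ref{RecoverLSGAC0}, which gives an $\textsf{AC}^0$-computable procedure recovering $L(G)$ from a Latin square graph $G$; the content here is a careful reexamination of Miller's combinatorial recovery (identifying the three parallel classes of cliques, labeling rows/columns/symbols) and observing that each of these identifications is definable by a constant-depth circuit rather than merely $\textsf{NC}^1$ as in Wolf \cite{Wolf}. Second, I would appeal to Remark~\ref{rmk:MainClass} to place the recovered Latin square into a canonical representative of its main isotopy class in $\textsf{AC}^0$; the point is that ``main class'' is generated by the isotopies together with the six conjugacy operations permuting the roles of rows, columns, and symbols, and selecting a normalized representative (e.g. fixing the first row and column in a canonical way) can be done in constant depth. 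Composing these two $\textsf{AC}^0$ maps with the $\beta_2\textsf{L} \cap \beta_2\textsf{FOLL}$ algorithm for \algprobm{Latin Square Isotopy} from \Thm{thm:LatinSquareIsotopy} then places Latin square graph isomorphism in $\beta_2\textsf{L} \cap \beta_2\textsf{FOLL}$, since both $\textsf{L} \cap \textsf{FOLL}$ and the nondeterministic prefix $\beta_2$ are robust under $\textsf{AC}^0$-preprocessing.

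The hardness consequence is then immediate and purely logical: by \cite{ChattopadhyayToranWagner}, the class $\beta_i\textsf{FO}((\log\log n)^c)$ cannot compute \algprobm{Parity} for any $i, c \geq 0$, while \algprobm{Parity} is $\textsf{AC}^0$-reducible to \algprobm{GI} \cite{Toran}. Hence if \algprobm{GI} were $\beta_i\textsf{FO}((\log\log n)^c)$-reducible to isomorphism testing of Latin square graphs, then composing with our $\beta_2\textsf{FOLL}$ upper bound would yield a $\beta_i\textsf{FO}((\log\log n)^c)$-type circuit computing \algprobm{Parity}, contradicting the lower bound; I would state this contrapositive explicitly, being careful that the composition of a $\beta_i\textsf{FO}((\log\log n)^c)$ reduction with a $\beta_2\textsf{FOLL}$ algorithm still falls within a class that cannot compute \algprobm{Parity}, which is exactly the closure property established in \cite{ChattopadhyayToranWagner}.

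I expect the main obstacle to be the verification, at the level of detail required for Lemma~\ref{RecoverLSGAC0} and Remark~\ref{rmk:MainClass}, that the recovery and main-class normalization genuinely drop to $\textsf{AC}^0$ rather than $\textsf{NC}^1$. The subtlety is that recovering a Latin square from its graph requires distinguishing the parallel classes of maximal cliques, and for small parameters there can be exceptional cliques; one must confirm that the relevant counting and clique-membership tests are first-order (constant-depth) definable and that the tie-breaking needed to canonicalize the main class does not secretly require ordered counting or connectivity of the sort that would push the complexity up. Provided those constant-depth implementations go through as claimed in the referenced lemma and remark, the remainder of the corollary is a routine composition-and-contrapositive argument.
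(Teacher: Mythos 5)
Your proposal is correct and follows essentially the same route as the paper: recover the Latin square via Lemma~\ref{RecoverLSGAC0} in $\textsf{AC}^{0}$, normalize into the main isotopy class via Remark~\ref{rmk:MainClass}, invoke Theorem~\ref{thm:LatinSquareIsotopy} together with Miller's characterization that Latin square graph isomorphism corresponds to main-class isotopy, and conclude the non-reducibility by the \algprobm{Parity} argument of Chattopadhyay--Tor\'an--Wagner and Tor\'an. The only cosmetic difference is that the paper's main-class normalization (Remark~\ref{rmk:MainClass}) is actually an $\textsf{AC}^{0}$-computable \emph{disjunctive truth-table} reduction (one square gets a single normal form, the other gets $n^{2}$ candidate normal forms tried in parallel) rather than a single canonical representative, but this does not affect the $\beta_{2}\textsf{L} \cap \beta_{2}\textsf{FOLL}$ bound.
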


\begin{remark}
It is also possible to construct a Latin square graph from a Latin square using an $\textsf{AC}^{0}$-circuit. Thus, \algprobm{Latin Square Isotopy} and isomorphism-testing of Latin square graphs are equivalent under $\textsf{AC}^{0}$-reductions.
\end{remark}

\begin{remark} \label{rmk:MainClass}
We sketch an alternative proof of \Thm{thm:LatinSquareIsotopy} and \Cor{cor:LatinSquareGI}. In Miller's \cite{MillerTarjan} first proof that \algprobm{Latin Square Isotopy} is solvable in time $n^{\log(n) + O(1)}$, Miller takes one Latin square $L'$ and places it in a normal form. Miller shows that this step is polynomial-time computable. A closer analysis shows that it is in fact $\textsf{AC}^{0}$-computable: we may label the first row and first column as $1, 2, \ldots, n$. We then in $\textsf{AC}^{0}$ fill in the remaining cells of the normal form. For the second Latin square $L$, Miller places $L$ in $n^{2}$ possible normal forms by guessing the element to label as $1$. We may consider all such guesses in parallel, and so this step is also $\textsf{AC}^{0}$-computable. Miller then checks whether the normal form of $L'$ and the normal form of $L$ are isomorphic as quasigroups. This step is $\beta_{2}\textsf{L} \cap \beta_{2}\textsf{FOLL}$-computable \cite{ChattopadhyayToranWagner}. Thus, we have an $\textsf{AC}^{0}$-computable disjunctive truth-table reduction (see Section~\ref{sec:Complexity}) from \algprobm{Latin Square Isotopy} to \algprobm{Quasigroup Isomorphism}, which places \algprobm{Latin Square Isotopy} into $\beta_{2}\textsf{L} \cap \beta_{2}\textsf{FOLL}$. We note that Wolf's \cite{Wolf} proof that $\algprobm{Latin Square Isotopy} \in \beta_{2}\textsf{NC}^{2}$ followed a similar strategy as Miller's \cite{MillerTarjan} proof that \algprobm{Quasigroup Isomorphism} can be solved in time $n^{\log(n) + O(1)}$.

As we can recover a Latin square from a Latin square graph in $\textsf{AC}^{0}$ (see Lemma~\ref{RecoverLSGAC0}), we also have an $\textsf{AC}^{0}$-computable disjunctive truth table reduction from isomorphism testing of Latin square graphs to \algprobm{Quasigroup Isomorphism}. So isomorphism testing of Latin square graphs belongs to $\beta_{2}\textsf{L} \cap \beta_{2}\textsf{FOLL}$.

Instead, the proof we give of \Thm{thm:LatinSquareIsotopy} (see Section 3) exhibits how the technique of cube generating sequences \cite{ChattopadhyayToranWagner} can be fruitfully applied directly to isotopy (rather than isomorphism) of algebraic structures, which we do by combining cube generating sequences with Miller's \cite{MillerTarjan} second proof that \algprobm{Latin Square Isotopy} can be solved in time $n^{\log(n) + O(1)}$.
\end{remark}

We now turn our attention to isomorphism testing of Steiner designs.

\begin{theorem} \label{ThmMainGenEnum}
For any $i, c \geq 0$, \algprobm{GI} is not $\beta_{i}\textsf{FO}((\log \log n)^{c})$-reducible to isomorphism testing of Steiner $(t, t+1)$-designs. 
\end{theorem}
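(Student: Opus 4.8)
The plan is to prove the theorem by first establishing the upper bound that isomorphism testing of Steiner $(t,t+1)$-designs lies in $\beta_{2}\textsf{L} \cap \beta_{2}\textsf{FOLL}$, and then invoking exactly the mechanism used for \Cor{cor:LatinSquareGI}: since for all $i,c \ge 0$ the class $\beta_{i}\textsf{FO}((\log\log n)^{c})$ cannot compute \algprobm{Parity} \cite{ChattopadhyayToranWagner}, while $\algprobm{Parity}$ is $\textsf{AC}^{0}$-reducible to $\algprobm{GI}$ \cite{Toran}, a $\beta_{i}\textsf{FO}((\log\log n)^{c})$-reduction from $\algprobm{GI}$ to a problem in $\beta_{2}\textsf{FOLL}$ would place $\algprobm{Parity}$ in a class provably unable to compute it. Thus the entire content lies in the upper bound, which I would obtain by implementing the generator-enumeration technique in the style of \cite{ChattopadhyayToranWagner}.

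First I would pass from the design to its $t$-ary block-completion operation $f$: for distinct points $p_{1},\dots,p_{t}$, let $f(p_{1},\dots,p_{t})$ be the unique point completing the unique block through $\{p_{1},\dots,p_{t}\}$. The key observation is that $f$ is a totally symmetric $t$-ary quasigroup operation, invertible in each coordinate: recovering any single argument from the output together with the remaining $t-1$ arguments again amounts to completing a unique block. Moreover a bijection of points is a design isomorphism if and only if it preserves $f$ (equivalently, sends blocks to blocks), so it suffices to enumerate over maps defined on an $f$-generating set of the point set.

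The heart of the argument is to produce a cube generating sequence for $(V,f)$, as \cite{ChattopadhyayToranWagner} do for binary quasigroups. I would fix a balanced $t$-ary evaluation tree with $L = \lceil \log_{2} n \rceil$ leaves—hence depth $O(\log_{t} L) = O(\log\log n)$—and fill its leaves greedily so that activating each successive leaf at least doubles the set of points realized as tree values; the coordinate-invertibility of $f$ is what guarantees this doubling, so after $L$ leaves the cube realizes every point of $V$. Each point of $V_{1}$ is then named by a length-$L$ bit vector, and the $\beta_{2}$ step is to guess such a cube for $D_{1}$ together with the images in $D_{2}$ of its $O(\log n)$ leaf generators, using $O(\log^{2} n)$ nondeterministic bits. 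Given those images, the induced map $\phi$ is evaluated at every point by running the same balanced tree in $D_{2}$, a formula of depth $O(\log\log n)$; checking that $\phi$ is a well-defined bijection preserving $f$ is a first-order condition layered on top of this evaluation, placing the verification in $\textsf{FOLL}$, while an analogous pointer-following computation of the cube places it in $\textsf{L}$. This yields Steiner $(t,t+1)$-design isomorphism in $\beta_{2}\textsf{L} \cap \beta_{2}\textsf{FOLL}$, and the theorem follows.

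I expect the main obstacle to be the cube generating sequence lemma of the previous paragraph. Unlike the group setting, $f$ is non-associative and $t$-ary, so one cannot rebalance a linear product into a shallow tree after the fact; the argument must instead be carried out inside a fixed balanced bracketing from the outset, and the doubling claim has to be proved directly from the coordinate-invertibility (Latin) property of $f$ while respecting that $f$ is defined only on tuples of distinct points and is totally symmetric. Verifying that the greedy leaf-filling both maintains the doubling bound and terminates with a cube that surjects onto $V$—and that these guarantees transfer to the guessed cube in $D_{2}$—is the crux. The remaining bijectivity and $f$-preservation checks, together with the reduction from the incidence structure to $f$, are routine $\textsf{AC}^{0}$/$\textsf{FOLL}$ computations.
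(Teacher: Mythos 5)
Your outer frame coincides with the paper's: the whole theorem reduces to showing that isomorphism testing of Steiner $(t,t+1)$-designs lies in a class that provably cannot compute \algprobm{Parity} (here $\beta_{2}\textsf{L} \cap \beta_{2}\textsf{FOLL}$), after which citing \cite{ChattopadhyayToranWagner} for the \algprobm{Parity} lower bound and \cite{Toran} for the $\textsf{AC}^{0}$-reduction of \algprobm{Parity} to \algprobm{GI} finishes the argument. The gap is in how you obtain the upper bound. Your entire route rests on a cube-generating-sequence lemma for the $t$-ary block-completion operation $f$, which you correctly identify as the crux but do not prove, and the sketch you give does not go through as stated. Concretely: (i) ``activating'' a leaf has no meaning in a structure without an identity element --- a $t$-ary tree node cannot be evaluated on fewer than $t$ arguments, so you must first totalize $f$ on tuples with repeated entries in a way that remains coordinate-wise bijective; this is the standard idempotent trick $x \cdot x = x$ when $t = 2$, but for general $t$ it is a construction requiring proof, not a remark, and without it intermediate tree values can collide and leave the evaluation undefined; (ii) even granting totalization, the greedy doubling claim is exactly the nontrivial content of the cube lemma of \cite{ChattopadhyayToranWagner}: coordinate invertibility gives equalities of the form $|A \cdot h| = |A|$, not the existence of a leaf value that grows the image by a constant factor under a \emph{fixed balanced bracketing}, and since one cannot rebalance a non-associative product after the fact, the simple ``append one element and take a union'' argument produces a caterpillar tree of depth $O(\log n)$, which ruins the $\textsf{FOLL}$ bound; (iii) $t$ is not a constant for this family --- the provable bound is only $t \in O(\log n)$ (the paper invokes this via \cite{BabaiWilmes}, citing \cite{ChaudhuriWilson}) --- so your lemma and all depth estimates must be uniform in $t$. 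As written, the heart of your proof is an unproven, research-level generalization.

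The paper sidesteps any such lemma. It guesses \emph{ordered} $(t-2)$-tuples $A \subseteq X_{1}$ and $B \subseteq X_{2}$, which costs only $O(\log^{2} n)$ nondeterministic bits since $t \in O(\log n)$, and passes to the derived designs $\mathcal{D}_{1}(A)$ and $\mathcal{D}_{2}(B)$, which for Steiner $(t,t+1)$-designs are Steiner triple systems. These are encoded as binary quasigroups in $\textsf{AC}^{0}$ (Observation~\ref{ThmSTSIsomorphism}), the search version of \algprobm{Quasigroup Isomorphism} from \cite{ChattopadhyayToranWagner} is invoked as a black box, and an $\textsf{AC}^{0}$ check verifies that the returned isomorphism of derived designs lifts to an isomorphism of the original designs; completeness holds because any isomorphism $\varphi : \mathcal{D}_{1} \to \mathcal{D}_{2}$ maps $\mathcal{D}_{1}(A)$ isomorphically onto $\mathcal{D}_{2}(\varphi(A))$, so some guess succeeds. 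To salvage your proposal, either prove your $t$-ary cube lemma (which would be a contribution of independent interest, giving generator enumeration directly on $t$-ary Steiner quasigroups), or replace it with this derivation-to-triple-system reduction, after which the remaining encoding and verification steps you describe are already covered by \cite{ChattopadhyayToranWagner}.
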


We prove Theorem \ref{ThmMainGenEnum} in two steps. First, we establish the case of $t = 2$, which is the case of Steiner triple systems. It is well-known that Steiner triple systems correspond to quasigroups, and that this reduction is polynomial-time computable \cite{MillerTarjan}. A careful analysis shows that this reduction is in fact $\textsf{AC}^{0}$-computable. Furthermore, we observe that the reduction in \cite[Theorem~33]{BabaiWilmes} from isomorphism testing of Steiner $t$-designs to isomorphism testing of Steiner $2$-designs is $\beta_{2}\textsf{AC}^{0}$-computable. As a corollary, isomorphism testing of Steiner $(t, t+1)$-designs is $\beta_{2}\textsf{AC}^{0}$-reducible to isomorphism testing of Steiner triple systems, and hence \algprobm{Quasigroup Isomorphism}.

Steiner designs yield a family of graphs known as \textit{block intersection graphs}. In the case of Steiner $2$-designs, these graphs are strongly regular. When the block size is bounded, we observe that we may recover a Steiner $2$-design from its block-incidence graph in $\textsf{AC}^{0}$. If the block size is not bounded, this reduction is $\textsf{TC}^{0}$-computable. In the case of Steiner triple systems, this yields the following corollary.

\begin{corollary}
Let $G_{1}, G_{2}$ be block-incidence graphs arising from Steiner triple systems. We can decide whether $G_{1} \cong G_{2}$ in $\beta_{2}\textsf{L} \cap \beta_{2}\textsf{FOLL}$. Consequently, for any $i, c \geq 0$, \algprobm{GI} is not $\beta_{i}\textsf{FO}((\log \log n)^{c})$-reducible to isomorphism testing of block-incidence graphs arising from Steiner triple systems.
\end{corollary}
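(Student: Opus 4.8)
The plan is to reduce the isomorphism problem for these block-incidence graphs to \algprobm{Quasigroup Isomorphism} via a composition of two $\textsf{AC}^0$-computable many-one reductions, and then to invoke the bound $\algprobm{Quasigroup Isomorphism} \in \beta_2\textsf{L} \cap \beta_2\textsf{FOLL}$ of Chattopadhyay, Tor\'an, and Wagner \cite{ChattopadhyayToranWagner}. Since the input is promised to be a block-incidence graph arising from a Steiner triple system (STS), and the block size of an STS is the constant $3$, I would first apply the recovery procedure discussed above to reconstruct, in $\textsf{AC}^0$, the underlying Steiner triple systems $S_1, S_2$ from the graphs $G_1, G_2$. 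I would then apply the $\textsf{AC}^0$-computable correspondence between Steiner triple systems and idempotent commutative (Steiner) quasigroups to produce quasigroups $Q_1, Q_2$.

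The correctness of this composite reduction rests on the chain of biconditionals $G_1 \cong G_2 \iff S_1 \cong S_2 \iff Q_1 \cong Q_2$. The second equivalence is the standard functorial correspondence between isomorphism classes of Steiner triple systems and isomorphism classes of the associated quasigroups. For the first equivalence, the forward direction is immediate: an isomorphism of Steiner triple systems is a bijection of points preserving blocks, hence induces a bijection of blocks preserving intersection, i.e.\ an isomorphism of block-incidence graphs. For the reverse direction I would use that the recovery map $\Phi$ is \emph{canonical}, namely isomorphism-invariant and satisfying $\Phi(\text{block-incidence graph of } S) \cong S$; applying $\Phi$ to an isomorphism $G_1 \cong G_2$ then yields $S_1 \cong S_2$. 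Any finitely many small orders $v$ for which the block-incidence graph fails to determine the STS can be hard-coded as special cases resolved by table lookup in $\textsf{AC}^0$.

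With correctness in hand, membership follows from closure: $\beta_2\textsf{L} \cap \beta_2\textsf{FOLL}$ is closed under $\textsf{AC}^0$-computable many-one reductions, since the deterministic $\textsf{AC}^0$ preprocessing composes with logspace (which is closed under $\textsf{AC}^0$ reductions) and with depth-$O(\log\log n)$ circuits (giving total depth $O(\log\log n)$), while leaving the existential guess budget at $O(\log^2 n)$ on inputs of polynomial size. Composing the two $\textsf{AC}^0$ reductions above with the $\beta_2\textsf{L} \cap \beta_2\textsf{FOLL}$ algorithm for \algprobm{Quasigroup Isomorphism} therefore places the block-incidence-graph isomorphism problem in $\beta_2\textsf{L} \cap \beta_2\textsf{FOLL}$.

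For the non-reducibility statement I would argue by contradiction as in \cite{ChattopadhyayToranWagner}. If \algprobm{GI} were $\beta_i\textsf{FO}((\log\log n)^c)$-reducible to this isomorphism problem, then composing that reduction with the $\beta_2\textsf{FOLL}$ upper bound just established would decide \algprobm{GI} within a class of the form $\beta_{i'}\textsf{FO}((\log\log n)^{c'})$ for some constants $i', c'$; since \algprobm{Parity} is $\textsf{AC}^0$-reducible to \algprobm{GI} \cite{Toran}, \algprobm{Parity} would then be computable in such a class, contradicting the result of \cite{ChattopadhyayToranWagner} that $\beta_i\textsf{FO}((\log\log n)^c)$ cannot compute \algprobm{Parity}. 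The main obstacle I expect is the recovery step: confirming that $\Phi$ is simultaneously canonical and $\textsf{AC}^0$-computable (exploiting the bounded block size $3$ to identify, in constant depth, the point-cliques of the block-incidence graph), together with verifying that $\beta_2\textsf{L} \cap \beta_2\textsf{FOLL}$ is genuinely closed under the relevant reductions so that the composition does not inflate the guessing budget beyond $\beta_2$.
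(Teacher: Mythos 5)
Your proposal is correct and follows essentially the same route as the paper: recover the Steiner triple system from the block-incidence graph in $\textsf{AC}^{0}$ (exploiting the bounded block size $3$, as in Proposition~\ref{ReconstructSteinerDesign}), pass to the associated quasigroup in $\textsf{AC}^{0}$ (as in Observation~\ref{ThmSTSIsomorphism}), invoke the $\beta_{2}\textsf{L} \cap \beta_{2}\textsf{FOLL}$ bound for \algprobm{Quasigroup Isomorphism} of Chattopadhyay, Tor\'an, and Wagner, and conclude non-reducibility via the \algprobm{Parity} argument. Your explicit attention to canonicity of the recovery map and to closure of $\beta_{2}\textsf{L} \cap \beta_{2}\textsf{FOLL}$ under $\textsf{AC}^{0}$ many-one reductions makes precise points the paper leaves implicit, but does not change the argument.
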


We finally consider the case of conference graphs.

\begin{theorem} \label{thm:MainConference}
Isomorphism testing between a conference graph $G$ and an arbitrary graph $H$ belongs to $\beta_{2}\textsf{AC}^{0}$. Consequently, for any $i, c \geq 0$, there is no $\beta_{i}\textsf{FO}((\log \log n)^{c})$-reduction from \algprobm{GI} to isomorphism testing of conference graphs.
\end{theorem}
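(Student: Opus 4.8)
The plan is to isolate the one structural feature of conference graphs that makes the generator-enumeration search both short and shallow: a conference graph admits a \emph{distinguishing set} of only $O(\log n)$ vertices, after whose individualization every remaining vertex is pinned down by its adjacencies alone, with no iterated refinement required. Concretely, I would have a $\beta_2\textsf{AC}^0$ machine nondeterministically guess an ordered tuple $B = (b_1,\dots,b_k) \in V(G)^k$ with $k = O(\log n)$, together with an ordered tuple $B' = (b_1',\dots,b_k') \in V(H)^k$ intended as its image; specifying $2k = O(\log n)$ vertices costs $O((\log n)^2)$ nondeterministic bits, which is exactly the $\beta_2$ budget. The map $b_i \mapsto b_i'$ together with the rule ``$v \notin B$ maps to the unique $w \notin B'$ having the same adjacency pattern to $B'$ as $v$ has to $B$'' then defines a candidate bijection $\phi$, and the verifier must check in $\textsf{AC}^0$ that $\phi$ is a well-defined isomorphism.

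The crux is the combinatorial lemma guaranteeing such a $B$ exists. For a conference graph with parameters $(n,\tfrac{n-1}{2},\tfrac{n-5}{4},\tfrac{n-1}{4})$, a direct count using $\lambda = \tfrac{n-5}{4}$ and $\mu = \tfrac{n-1}{4}$ shows that any two distinct vertices $u,v$ are separated by exactly $\tfrac{n-1}{2}$ other vertices, i.e. $\tfrac{n-1}{2}$ vertices are adjacent to exactly one of $u,v$ (the adjacent case contributes $2k-2\lambda$ symmetric-difference vertices, two of which are $u,v$ themselves and are discarded; the non-adjacent case contributes $2k-2\mu$ with neither $u$ nor $v$ present; both reduce to $\tfrac{n-1}{2}$). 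Hence a uniformly random vertex separates any fixed pair with probability $\approx \tfrac12$, and a union bound over the $\binom{n}{2}$ pairs shows that a random tuple of $k = C\log n$ vertices separates all pairs with positive probability for a suitable constant $C$. Thus a distinguishing set of size $O(\log n)$ exists; equivalently, $\Aut(G)$ has a base of logarithmic size, since any automorphism fixing $B$ pointwise preserves each vertex's adjacency pattern to $B$ and hence fixes every vertex. I expect this separation estimate, and the observation that it forces the base to be logarithmic, to be the main obstacle: it is short, but it is what simultaneously bounds the length of the guess and drives the constant depth of the check.

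Given $B$ and $B'$, each verification step is a bounded-depth, polynomial-size predicate and so lies in $\textsf{AC}^0$: (i) that $B$ is distinguishing in $G$ --- for all distinct $u,v \notin B$ there is some $b_j \in B$ adjacent to exactly one of them --- is a $\forall\,\exists$ over a comparison of single adjacency bits; (ii) the analogous check for $B'$ in $H$; (iii) that the induced signature map is a bijection between $V(G)\setminus B$ and $V(H)\setminus B'$; and (iv) that $\phi$ preserves adjacency, where ``$\phi(u) \sim_H \phi(v)$'' is expressed as ``$\exists\, w_u,w_v$ matching the signatures of $u,v$ with $w_u \sim_H w_v$.'' The essential point, and the reason we obtain $\beta_2\textsf{AC}^0$ rather than the $\beta_2\textsf{FOLL}$ bound of the earlier results, is that no color-refinement or iterated-product computation is ever invoked: for a graph the ``closure'' of a partial map is just its adjacency table, so a single level of neighborhood comparison to the fixed base suffices, and that is constant depth. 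If $G \cong H$, taking $B$ to be a distinguishing set and $B' = \phi_0(B)$ for a genuine isomorphism $\phi_0$ makes every check pass; conversely, any accepting branch exhibits a bona fide isomorphism, so the machine is correct (and branches with $|V(H)| \neq |V(G)|$ are rejected outright).

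Finally, for the ``consequently'' clause I would argue exactly as for the other families. Since isomorphism testing of conference graphs lies in $\beta_2\textsf{AC}^0 \subseteq \beta_2\textsf{FOLL}$, a hypothetical $\beta_i\textsf{FO}((\log\log n)^c)$-reduction from \algprobm{GI} to this problem would, composed with the upper bound, place \algprobm{GI} --- and hence \algprobm{Parity}, which is $\textsf{AC}^0$-reducible to \algprobm{GI} by Tor\'an --- inside a class of the form $\beta_{j}\textsf{FO}((\log\log n)^{c'})$. As such classes cannot compute \algprobm{Parity} \cite{ChattopadhyayToranWagner}, this is a contradiction, ruling out the reduction for every $i,c \geq 0$.
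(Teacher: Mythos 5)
Your proposal is correct, and its algorithmic core is the same as the paper's: guess ordered tuples of $O(\log n)$ vertices in each graph (costing $O(\log^{2} n)$ nondeterministic bits), individualize them, and verify in $\textsf{AC}^{0}$ that the map induced by adjacency patterns to the guessed tuples is an isomorphism; your explicit signature-matching is just an unrolled version of the paper's ``individualize and run two rounds of count-free Color Refinement,'' since after individualization the count-free refinement colors are exactly these adjacency signatures. The one genuine difference is how the key lemma is obtained: the paper simply cites Babai's Lemma~3.2 from \cite{BabaiCanonicalLabeling1} for the existence of $O(\log n)$-size distinguishing sets, whereas you reprove it from scratch for conference parameters, computing that any two distinct vertices are separated by exactly $(n-1)/2$ others (both the adjacent case $2k-2\lambda-2$ and the non-adjacent case $2k-2\mu$ collapse to $(n-1)/2$) and then applying a union bound over pairs. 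That self-contained probabilistic argument is correct and is essentially the specialization of Babai's proof to conference graphs; it buys independence from the citation at the cost of generality (Babai's lemma covers a wider range of strongly regular parameters). Your soundness/completeness discussion and the final \algprobm{Parity}-based argument ruling out $\beta_{i}\textsf{FO}((\log \log n)^{c})$-reductions match the paper's reasoning.
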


To prove \Thm{thm:MainConference}, we use a result of Babai \cite{BabaiCanonicalLabeling1}, who showed that conference graphs admit so called \textit{distinguishing sets} (see Section~\ref{sec:Conference}) of size $O(\log n)$. Distinguishing sets are quite powerful: after indvidiualizing the vertices in a distinguishing set, only two rounds of the count-free Color Refinement algorithm are needed to assign each vertex in the graph a unique label. The $\beta_{2}\textsf{AC}^{0}$ algorithm thus works as follows: we first guess such a set $S$ using $O(\log^{2} n)$ non-deterministic bits, individualize the vertices in $S$, and then run two rounds of the count-free Color Refinement algorithm. 

\begin{remark}
\Thm{thm:MainConference} provides evidence that isomorphism testing of conference graphs might in fact be easier than \algprobm{Group Isomorphism} (and hence, isomorphism testing of Latin square graphs and Steiner designs). The best known complexity-theoretic upper bound for \algprobm{Group Isomorphism} is $\beta_{2}\textsf{L} \cap \beta_{2}\textsf{FOLL} \cap \beta_{2}\textsf{SC}^{2}$ (the $\beta_{2}\textsf{L} \cap \beta_{2}\textsf{FOLL}$ is due to Chattopadhyay, Tor\'an, \& Wagner \cite{ChattopadhyayToranWagner}; Tang \cite{TangThesis} established the $\beta_{2}\textsf{SC}^{2}$ bound and provided an alternative proof of the $\beta_{2}\textsf{L}$ bound). Even after fixing a cube generating sequence, it is not clear that an $\textsf{AC}$ circuit of depth $o(\log \log n)$ and polynomial size could uniquely identify each group element.

These differences are quite opaque when examined from the algorithmic perspective. Both conference graphs and groups admit $n^{\Theta(\log n)}$-time isomorphism tests that have seen little improvement in the worst case in over 40 years. For conference graphs, this bound is achieved using Babai's \cite{BabaiCanonicalLabeling1} result showing the existence of size $O(\log n)$ distinguishing sets. In the setting of groups, Tarjan (c.f., \cite{MillerTarjan}) leveraged the fact that every group admits a generating set of size $\leq \lceil \log_{p} n \rceil$ (where $p$ is the smallest prime dividing $n$) to obtain an $n^{\log_{p}(n) + O(1)}$ isomorphism test via the generator enumeration strategy. Rosenbaum \cite{Rosenbaum2013BidirectionalCD} (see \cite[Sec. 2.2]{GR16}) improved this to $n^{(1/4)\log_p(n) + O(1)}$. And even the impressive body of work on practical algorithms for this problem, led by Eick, Holt, Leedham-Green and O'Brien (e.\,g., \cite{BEO02, ELGO02, BE99, CH03}) still results in an $n^{\Theta(\log n)}$-time\footnote{Where we stress $n = |G|$. In the setting of practical algorithms, the groups are often given by generating sequences, and so polynomial time in the size of the succinct input is $\poly \log |G|$.} algorithm in the general case (see \cite[Page 2]{WilsonSubgroupProfiles}).
\end{remark}

\noindent \textbf{Further Related Work.} There has been significant work on isomorphism testing of strongly regular graphs. In 1980, Babai used a simple combinatorial approach to test strongly regular graphs on $n$ vertices and degree $\rho < n/2$ in time $\text{exp}(O( (n/\rho) \log^{2} n))$ \cite{BabaiCanonicalLabeling1, BabaiCanonicalLabeling2}. We note that the complement of a strongly-regular graph is strongly regular, hence the assumptions on the degree. As $\rho \geq \sqrt{n-1}$, this gives an algorithm in moderately exponential time $\text{exp}(O(\sqrt{n} \log^{2} n))$ for all $\rho$. Spielman \cite{SpielmanSRGs} subsequently improved this $\text{exp}(O( (n/\rho) \log^{2} n))$ bound to $\exp(O(n^{1/3} \log^{2} n))$. This improved upon what was at the time, the best known bound of $\exp(O(\sqrt{n \log n}))$ for isomorphism testing of general graphs \cite{BabaiLuksCanonicalLabeling, BabaiKantorLuksCFSG, Zemlyachenko1985} based on Luks' group theoretic method \cite{LuksBoundedValence}. 

We note that Babai's \cite{BabaiCanonicalLabeling1} work already handled the case of conference graphs in time $n^{O(\log n)}$. Furthermore, the works of Babai and Spielman sufficed to handle isomorphism testing of strongly regular graphs that satisfy Neumaier's claw bound. Namely, Spielman  \cite{SpielmanSRGs} handled the case of $\rho \in o(n^{2/3})$ in time $\text{exp}(O(n^{1/4} \log^{2} n))$, and Babai \cite{BabaiCanonicalLabeling1} handled the case of $\rho \in \Omega(n^{2/3})$ in time $\text{exp}(O(n^{1/3} \log^{2} n))$.

In the case of Latin square graphs, Miller \cite{MillerTarjan} resolved this in time $n^{O(\log n)}$. Wolf improved the complexity theoretic bound to $\beta_{2}\textsf{NC}^{2}$ (uniform $\textsf{NC}^{2}$-circuits that accept $O(\log^{2} n)$ existentially quantified non-deterministic bits). 

The isomorphism problem for Steiner $2$-designs also dates back to Miller \cite{MillerTarjan}, who considered the special cases of Steiner triple systems (Steiner $(2,3,n)$-designs), projective planes (Steiner $(2, q+1, q^{2} + q + 1)$-designs), and affine planes (Steiner $(2, q, q^{2})$-designs). In the case of Steiner triple systems, Miller leveraged a standard reduction to \algprobm{Quasigroup Isomorphism} to obtain an $n^{O(\log n)}$ upper bound. M. Colbourn \cite{colborun_1979} subsequently extended this result to obtain an $v^{O(\log v)}$ canonization procedure for Steiner $(t, t+1)$-designs.

For the cases of projective and affine planes, Miller \cite{MillerTarjan} showed that isomorphism testing can be done in time $n^{O(\log \log n)}$. In particular, Miller showed that the corresponding structures may be recovered from the block-intersection graphs in polynomial-time, providing the block-intersection graphs are of sufficiently high degree \cite{MillerTarjan}. Thus, isomorphism testing of block-intersection graphs arising from Steiner triple systems can be done in time $n^{O(\log n)}$, and the block-intersection graphs arising from projective and affine planes can be identified in time $n^{O(\log \log n)}$. 

In 1983, Babai \& Luks showed that Steiner $2$-designs with blocks of bounded size admit an $n^{O(\log n)}$ canonization procedure (and hence, isomorphism test). In paritcular, isomorphism testing of the corresponding block-intersection graphs is solvable in time $n^{O(\log n)}$ \cite{BabaiLuksCanonicalLabeling}. Here, Babai \& Luks utilized Luks' group theoretic method \cite{LuksBoundedValence}. Huber later extended this result to obtain an $n^{O(\log n)}$-runtime isomorphism test for arbitrary $t$-designs, where both $t$ and the block size are bounded, as well as the corresponding block-intersection graphs \cite{Huber}. 

Spielman solved the general case of strongly-regular graphs of degree $f(n) \sim n^{3/4}$ arising from Steiner $2$-designs in time $\text{exp}(O(n^{1/4} \log^{2} n))$ \cite{SpielmanSRGs}. Independently, Babai \& Wilmes \cite{BabaiWilmes} and Chen, Sun, \& Teng \cite{ChenSunTeng} exhibited an $n^{O(\log n)}$-runtime isomorphism test for both Steiner $2$-designs and the corresponding block-intersection graphs. Babai \& Wilmes extended their result to Steiner $t$-designs. Here, each of these papers utilized the individualize and refine technique \cite{SpielmanSRGs, BabaiWilmes, ChenSunTeng}.

The isomorphism problem for combinatorial designs is another candidate $\textsf{NP}$-intermediate problem. In the general case, testing whether two combinatorial designs are isomorphic is \textsf{GI}-complete under polynomial-time Turing reductions \cite{ColbournColbournGI}. Cyclic Steiner $2$-designs of prime order are known to admit a polynomial-time isomorphism test \cite{ColbournMathon}. To the best of our knowledge, the complexity of deciding whether two cyclic Steiner $2$-designs are isomorphic remains open for arbitrary orders.

\section{Preliminaries} \label{sec:Preliminaries}

\subsection{Algebra and Combinatorics}

\textbf{Graph Theory}. A \textit{strongly regular graph} with parameters $(n, k, \lambda, \mu)$ is a simple, undirected $k$-regular, $n$-vertex graph $G(V, E)$ where any two adjacent vertices share $\lambda$ neighbors, and any two non-adjacent vertices share $\mu$ neighbors. The complement of a strongly regular graph is also strongly regular, with parameters $(n, n-k-1, n-2-2k+\mu, n-2k+\lambda)$. \\

\noindent \textbf{Quasigroups}. A \textit{quasigroup} consists of a set $G$ and a binary operation $\star : G \times G \to G$ satisfying the following. For every $a, b \in G$, there exist unique $x, y$ such that $a \star x = b$ and $y \star a = b$. When the multiplication operation is understood, we simply write $ax$ for $a \star x$.

Unless otherwise stated, all quasigroups are assumed to be finite and represented using their Cayley (multiplication) tables. For a quasigroup of order $n$, thet Cayley table has $n^{2}$ entries, each represented by a string of size $\lceil \log_{2}(n) \rceil$. 

A \textit{Latin square} of order $n$ is an $n \times n$ matrix $L$ where each cell $L_{ij} \in [n]$, and each element of $[n]$ appears exactly once in a given row or a given column. Latin squares are precisely the Cayley tables corresponding to quasigroups. We will abuse notation by referring to a quasigroup and its multiplication table interchangeably. An \textit{isotopy} of Latin squares $L_{1}$ and $L_{2}$ is an ordered triple $(\alpha, \beta, \gamma)$, where $\alpha, \beta, \gamma : L_{1} \to L_{2}$ are bijections satisfying the following: whenever $ab = c$ in $L_{1}$, we have that $\alpha(a)\beta(b) = \gamma(c)$ in $L_{2}$. Alternatively, we may view $\alpha$ as a permutation of the rows of $L_{1}$, $\beta$ as a permutation of the rows of $L_{2}$, and $\gamma$ as a permutation of the values in the table. Here, $L_{1}$ and $L_{2}$ are isotopic precisely if (i) the $(i,j)$ entry of $L_{1}$ is the $(\alpha(i), \beta(j))$ entry of $L_{2}$, and (ii) $x$ is the $(i,j)$ entry of $L_{1}$ if and only if $\gamma(x)$ is the $(\alpha(i), \beta(j))$ entry of $L_{2}$.

As quasigroups are non-associative, the parenthesization of a given expression may impact the resulting value. We restrict attention to balanced parenthesizations, which ensure that words of the form $g_{0}g_{1} \cdots g_{k}$ are evaluated using a balanced binary tree with $k+1$ leves and therefore depth $O(\log \log n)$.  

For a sequence $S := (s_{0}, s_{1}, \ldots, s_{k})$ from a quasigroup, define:
\[
\text{Cube}(S) = \{ s_{0}s_{1}^{e_{1}} \cdots s_{k}^{e_{k}} : e_{1}, \ldots, e_{k} \in \{0,1\} \}.
\]

We say that $S$ is a \textit{cube generating sequence} if each element $g$ in the quasigroup can be written as $g = s_{0}s_{1}^{e_{1}} \cdots s_{k}^{e_{k}}$, for $e_{1}, \ldots, e_{k} \in \{0,1\}$. Every quasigroup is known to admit a cube generating sequence of size $O(\log n)$ \cite{ChattopadhyayToranWagner}.

For a given Latin square $L$ of order $n$, we associate a \textit{Latin square graph} $G(L)$ that has $n^{2}$ vertices; one for each triple $(a, b, c)$ that satisfies $ab = c$. Two vertices $(a, b, c)$ and $(x, y, z)$ are adjacent in $G(L)$ precisely if $a = x$, $b = y$, or $c = z$. Miller showed that two Latin square graphs $G_{1}$ and $G_{2}$ are isomorphic if and only if the corresponding Latin squares, $L_{1}$ and $L_{2}$, are \textit{main class isotopic}; that is, if $L_{1}$ and $L_{2}$ can be placed into compatible normal forms that correspond to isomorphic quasigroups \cite{MillerTarjan}. A Latin square graph on $n^{2}$ vertices is a strongly regular graph with parameters $(n^{2}, 3(n-1), n, 6)$. Conversely, a strongly regular graph with these same parameters $(n^{2}, 3(n-1), n, 6)$ is called a \textit{pseudo-Latin square graph}. Bruck showed that for $n > 23$, a pseudo-Latin square graph is a Latin square graph \cite{Bruck}.

Albert showed that a quasigroup $Q$ is isotopic to a group $G$ if and only if $Q$ is isomorphic to $G$. In general, isotopic quasigroups need not be isomorphic \cite{Albert}. \\

\noindent \textbf{Nets.} A \textit{net} or \textit{partial geometry} of order $n \geq 1$ and degree $k \geq 1$, denoted $\mathcal{N}(n, k)$, consists of a set $P$ of $n^{2}$ points and a set $L$ of $kn$ lines satisfying the following.
\begin{enumerate}[label=(\alph*)]
\item Each line in $L$ has $n$ points.
\item Each point in $P$ lies on exactly $k$ distinct lines.
\item The $kn$ lines of $L$ fall into $k$ parallel classes. No two lines in the same parallel class intersect. If $\ell_{1}, \ell_{2} \in L$ are in different parallel classes, then $\ell_{1}$ and $\ell_{2}$ share exactly one common point.
\end{enumerate}

Necessarily, $k \leq n+1$. If $k = n+1$, then the net is called an \textit{affine plane} of order $n$. A \textit{projective plane} is the extension of an affine plane $\mathcal{N}$ by adding $n+1$ new points $p_{1}, \ldots, p_{n+1}$ (one per parallel class). The point $p_{i}$ is then added to each line of the $i$th parallel class. We finally add a new line containing $\{ p_{1}, \ldots, p_{n+1}\}$. We may recover an affine plane from a projective plane by removing a line and the associated points from the projective plane. We assume that nets are given by the point-block incidence matrix.

Given a net $\mathcal{N}(n, k)$ where $k \leq n$, we may construct a \textit{net graph} of order $n$ and degree $k$ $G(V, E)$ on $n^{2}$ vertices corresponding to the points in $\mathcal{N}$. Two vertices are adjacent in $G$ if their corresponding points in $\mathcal{N}$ determine a line. We note that a net graph uniquely determines the net $\mathcal{N}(n, k)$ \cite{Bruck}. Miller showed that, provided $n > (k-1)^{2}$, this equivalence holds under polynomial-time reductions \cite{MillerTarjan}. A net graph is strongly-regular with parameters $(n^{2}, k(n-1), n-2+(k-1)(k-2), k(k-1))$. Conversely, a strongly regular graph with parameters $(n^{2}, k(n-1), n-2+(k-1)(k-2), k(k-1))$ is referred to as a \textit{pseudo-net graph}. Bruck showed that for $n$ sufficiently large, a pseudo-net graph of order $n$ and degree $k$ is a net graph \cite{Bruck}.

\noindent \\ \textbf{Designs}. Let $t \leq k \leq v$ and $\lambda$ be positive integers. A $(t, k, \lambda, v)$ design is an incidence structure $\mathcal{D} = (X, \mathcal{B}, I)$, where $X$ denotes our set of $v$ points. Now $\mathcal{B}$ is a subset of $\binom{X}{k}$, where each element of $\mathcal{B}$ is referred to as a \textit{block}, satisfying the property that each $t$-element subset of $X$ belongs to exactly $\lambda$ blocks. If $t < k < v$, we say that the design is \textit{non-trivial}. If $\lambda = 1$, the design is referred to as a \textit{Steiner design}. We denote Steiner designs as $(t, k, v)$-designs when we want to specify $v$ the number of points, or Steiner $(t, k)$-designs when referring to a family of designs. We note that Steiner $(2, 3)$-designs are known as \textit{Steiner triple systems}. Projective planes are Steiner $(2, q+1, q^{2} + q + 1)$-designs, and affine planes are Steiner $(2, q, q^{2})$-designs. We assume that designs are given by the point-block incidence matrix. 

Let $\mathcal{D} = (X, \mathcal{B}, I)$ be a design, and let $A \subseteq X$. The \textit{derived design} at $A$, denoted $\mathcal{D}(A)$, has the set of points $X \setminus A$ and blocks $\{ B \setminus A : B \in \mathcal{B}, A \subsetneq B\}$. If $\mathcal{D}$ is a Steiner $(t, k, v)$-design, then $\mathcal{D}(A)$ is a Steiner $(t-|A|, k-|A|, v-|A|)$-design. 

For a design $\mathcal{D} = (X, \mathcal{B}, I)$, we may define a \textit{block intersection graph} (also known as a \textit{line graph}) $G(V, E)$, where $V(G) = \mathcal{B}$ and two blocks $B_{1}, B_{2}$ are adjacent in $G$ if and only if $B_{1} \cap B_{2} \neq \emptyset$. In the case of a Steiner $2$-design, the block-intersection graph is strongly regular. For Steiner triple systems, the block-intersection graphs are strongly regular with parameters $(n(n-1)/6, 3(n-3)/2, (n+3)/2, 9)$. Conversely, strongly regular graphs with the parameters $(n(n-1)/6, 3(n-3)/2, (n+3)/2, 9)$ are referred to as \textit{pseudo-STS graphs}. Bose showed that pseudo-STS graphs graphs with strictly more than $67$ vertices are in fact STS graphs \cite{Bose}.

\subsection{Computational Complexity} \label{sec:Complexity}

We assume familiarity with Turing machines and the complexity classes $\textsf{P}, \textsf{NP}, \textsf{L}$, and $\textsf{NL}$- we refer the reader to standard references \cite{ComplexityZoo, AroraBarak}. \\

\noindent \textbf{Circuit Complexity.} For a standard reference, see \cite{VollmerText}. We consider Boolean circuits using the \textsf{AND}, \textsf{OR}, \textsf{NOT}, and \textsf{Majority}, where $\textsf{Majority}(x_{1}, \ldots, x_{n}) = 1$ if and only if $\geq n/2$ of the inputs are $1$. Otherwise, $\textsf{Majority}(x_{1}, \ldots, x_{n}) = 0$. In this paper, we will consider \textit{logspace uniform} circuit families $(C_{n})_{n \in \mathbb{N}}$, in which a deterministic logspace Turing machine can compute the map $1^{n} \mapsto \langle C_{n} \rangle$ (here, $\langle C_{n} \rangle$ denotes an encoding of the circuit $C_{n}$).

\begin{definition}
Fix $k \geq 0$. We say that a language $L$ belongs to (logspace uniform) $\textsf{NC}^{k}$ if there exist a (logspace uniform) family of circuits $(C_{n})_{n \in \mathbb{N}}$ over the $\textsf{AND}, \textsf{OR}, \textsf{NOT}$ gates such that the following hold:
\begin{itemize}
\item The $\textsf{AND}$ and $\textsf{OR}$ gates take exactly $2$ inputs. That is, they have fan-in $2$.
\item $C_{n}$ has depth $O(\log^{k} n)$ and uses (has size) $n^{O(1)}$ gates. Here, the implicit constants in the circuit depth and size depend only on $L$.

\item $x \in L$ if and only if $C_{|x|}(x) = 1$. 
\end{itemize}
\end{definition}

\noindent The complexity class $\textsf{AC}^{k}$ is defined analogously as $\textsf{NC}^{k}$, except that the $\textsf{AND}, \textsf{OR}$ gates are permitted to have unbounded fan-in. That is, a single $\textsf{AND}$ gate can compute an arbitrary conjunction, and a single $\textsf{OR}$ gate can compute an arbitrary disjunction. The complexity class $\textsf{TC}^{k}$ is defined analogously as $\textsf{AC}^{k}$, except that our circuits are now permitted $\textsf{Majority}$ gates of unbounded fan-in. For every $k$, the following containments are well-known:
\[
\textsf{NC}^{k} \subseteq \textsf{AC}^{k} \subseteq \textsf{TC}^{k} \subseteq \textsf{NC}^{k+1}.
\]

\noindent In the case of $k = 0$, we have that:
\[
\textsf{NC}^{0} \subsetneq \textsf{AC}^{0} \subsetneq \textsf{TC}^{0} \subseteq \textsf{NC}^{1} \subseteq \textsf{L} \subseteq \textsf{NL} \subseteq \textsf{AC}^{1}.
\]

\noindent We note that functions that are $\textsf{NC}^{0}$-computable can only depend on a bounded number of input bits. Thus, $\textsf{NC}^{0}$ is unable to compute the $\textsf{AND}$ function. It is a classical result that $\textsf{AC}^{0}$ is unable to compute \algprobm{Parity} \cite{Smolensky87algebraicmethods}. The containment $\textsf{TC}^{0} \subseteq \textsf{NC}^{1}$ (and hence, $\textsf{TC}^{k} \subseteq \textsf{NC}^{k+1}$) follows from the fact that $\textsf{NC}^{1}$ can simulate the \textsf{Majority} gate. The class $\textsf{NC}$ is:
\[
\textsf{NC} := \bigcup_{k \in \mathbb{N}} \textsf{NC}^{k} = \bigcup_{k \in \mathbb{N}} \textsf{AC}^{k} = \bigcup_{k \in \mathbb{N}} \textsf{TC}^{k}.
\]

\noindent It is known that $\textsf{NC} \subseteq \textsf{P}$, and it is believed that this containment is strict. 

Let $d : \mathbb{N} \to \mathbb{N}$ be a function. The complexity class $\textsf{FO}(d(n))$ is the set of languages decidable by uniform $\textsf{AC}$ circuit families of depth $O(d(n))$ and polynomial size. The complexity class $\textsf{FOLL} = \textsf{FO}(\log \log n)$. It is known that $\textsf{AC}^{0} \subsetneq \textsf{FOLL} \subsetneq \textsf{AC}^{1}$, and it is open as to whether $\textsf{FOLL}$ is contained in $\textsf{NL}$ \cite{BKLM}.

For a complexity class $\mathcal{C}$, we define $\beta_{i}\mathcal{C}$ to be the set of languages $L$ such that there exists an $L' \in \mathcal{C}$ such that $x \in L$ if and only if there exists $y$ of length at most $O(\log^{i} |x|)$ such that $(x, y) \in L'$. For any $i, c \geq 0$, $\beta_{i}\textsf{FO}((\log \log n)^{c})$ cannot compute \algprobm{Parity} \cite{ChattopadhyayToranWagner}. \\

\noindent \textbf{Reductions.} We will consider several notions of reducibility. Let $L_{1}, L_{2}$ be languages. We say that $L_{1}$ is \textit{many-one} reducible to $L_{2}$, denoted $L_{1} \leq_{m} L_{2}$, if there exists a computable function $\varphi : \{0,1\}^{*} \to \{0,1\}^{*}$ such that $x \in L_{1} \iff \varphi(x) \in L_{2}$. We will often ask that, for a given complexity $\mathcal{C}$, $\varphi$ is $\mathcal{C}$-computable. 

We now turn towards recalling the notion of a Turing reduction. Intuitively, we say that the language $L_{1}$ is \textit{Turing-reducible} to $L_{2}$ if, given an algorithm to solve $L_{2}$, we can design an algorithm to solve $L_{1}$. This is made precise with oracle Turing machines, which we recall now. An \textit{oracle Turing machine} is a Turing machine with a separate oracle tape and oracle query state. When the Turing machine enters the query state, it transitions to one of two specified states based on whether the string on the oracle tape belongs to the oracle. When the oracle is unspecified, we write $M^{\square}$. When the oracle $\mathcal{O}$ is specified, we write $M^{\mathcal{O}}$. We now make precise the notion of a Turing reduction. 

A \textit{Turing reduction} from $L_{1}$ to $L_{2}$ is an oracle Turing machine $M^{\square}$ such that $M^{L_{2}}$ decides $L_{1}$. If there exists a Turing reduction from $L_{1}$ to $L_{2}$, we write $L_{1} \leq_{T} L_{2}$. We will be particularly interested in Turing reductions from $L_{1}$ to $L_{2}$, where $M^{L_{2}}$ runs in polynomial time. 
 
We finally recall the notion of a truth-table reduction. Again let $L_{1}, L_{2}$ be languages. We say that $L_{1}$ is \textit{truth-table} reducible to $L_{2}$, denoted $L_{1} \leq_{tt} L_{2}$, if there exists a computable function $g$ mapping an input $X$ to inputs $Y_{1}, \ldots, Y_{k}$, and a computable function $f$ (called the \textit{truth-table predicate}), which depends on $X$ and maps $\{0,1\}^{k} \to \{0,1\}$ such that $X \in L_{1}$ if and only if $f(\chi_{L_{2}}(Y_{1}), \ldots, \chi_{L_{2}}(Y_{k})) = 1$ (where $\chi_{L_{2}}$ is the characteristic function for $L_{2}$). The truth-table reduction is \textit{disjunctive} if $f$ is the $\textsf{OR}$ function. We say that the truth-table reduction is $\mathcal{C}$-computable if both $f, g$ are $\mathcal{C}$-computable. 

It is known that:
\[
L_{1} \leq_{m} L_{2} \implies L_{1} \leq_{tt} L_{2} \implies L_{1} \leq_{T} L_{2},
\]

\noindent and that this chain of implications holds even in the presence of resource constraints (e.g., if we were to insist that all steps be $\textsf{P}$-computable). Some care has to be taken to formulate a precise notion of a Turing reduction for circuit classes.  Fortunately, we will not need the notion of a Turing reduction for circuit classes, and so we will not discuss this further.

\noindent \\ \textbf{Graph Isomorphism.} We refer to \algprobm{GI} as the decision problem \algprobm{Graph Isomorphism}, which takes as input two graphs and asks whether there is an isomorphism between them. The complexity class \textsf{GI} is the set of decision problems that are polynomial-time Turing reducible to \algprobm{GI}.

\subsection{Color Refinement}
Let $G$ be a graph. The \textit{Color Refinement} (or $1$-dimensional Weisfeiler--Leman) algorithm works by iteratively coloring the vertices of $G$ in an isomorphism invariant manner. The algorithm begins by assigning each vertex an initial color based on their degree; that is, $\chi_{0}(v) = \text{deg}(v)$. Now for $r \geq 0$, we refine the coloring in the following manner:
\[
\chi_{r+1}(u) = (\chi_{r}(u), \{\!\{ \chi_{r}(v) : v \in N(u) \}\!\}),
\]

\noindent where $\{\!\{ \cdot \}\!\}$ denotes a multiset. The \textit{count-free} variant of Color Refinement works analogously, except the refinement step considers the set rather than multiset of colors at each round: 
\[
\chi_{r+1}(u) = (\chi_{r}(u), \{ \chi_{r}(v) : v \in N(u) \}).
\]

\noindent The algorithm terminates when the partition on the vertices induced by the coloring is not refined. We can use Color Refinement can as a non-isomorphism test by running the algorithm on $G \dot \cup H$ and comparing the multiset of colors. Note that if at round $r$, if the multiset of colors for $G$ differs from $H$, then we can conclude that $G \not \cong H$. Each iteration of the standard counting Color Refinement can be implemented using a logspace uniform $\textsf{TC}^{0}$-circuit, and each iteration of the count-free Color Refinement can be implemented using a logspace uniform $\textsf{AC}^{0}$-circuit (c.f., the work of Grohe \& Verbitsky \cite{GroheVerbitsky} who observed this for $k$-dimensional Weisfeiler--Leman where $k \geq 2$. The analogous parallel implementation holds for Color Refinement).

The \textit{individualize-and-refine} paradigm works as follows. Let $S$ be a sequence of vertices in $G$. We first assign each vertex in $S$ a unique color (so no two vertices in $S$ have the same color, and no vertex in $V(G) - S$ has the same color as any vertex in $S$), and then run Weisfeiler--Leman. The refinement step is usually performed using Color Refinement ($1$-WL), but may be performed using higher-dimensional Weisfeiler--Leman. For the purposes of this paper, we will use the count-free variant of Color Refinement in the refinement step. 

It is known that even higher-dimensional Weisfeiler--Leman fails to place \algprobm{GI} into $\textsf{P}$ \cite{CFI, NeuenSchweitzerIR}. We refer to Sandra Kiefer's dissertation \cite{KieferThesis} for a thorough and current survey of Weisfeiler--Leman.

\section{Latin Square Isotopy}

In this section, we show that the \algprobm{Latin Square Isotopy} problem is in $\beta_{2}\textsf{L} \cap \beta_{2}\textsf{FOLL}$.   The key technique is to guess cube generating sequences $A$ and $B$ for the Latin square $L_{1}$, and cube generating sequences $A^{\prime}, B^{\prime}$ for the Latin square $L_{2}$. We then (attempt to) construct appropriate bijections $\alpha, \beta : L_{1} \to L_{2}$, where $\alpha$ extends the map from $A \to A^{\prime}$ and $\beta$ extends the map from $B \to B^{\prime}$. We can construct such bijections in $\textsf{FOLL} \cap \textsf{L}$ using the techniques from Chattopadhyay, Tor\'an, and Wagner \cite{ChattopadhyayToranWagner}. 

The key step remains in checking whether the map sending the product of each pair $(x, y) \in L_{1} \times L_{1}$, $xy \mapsto \alpha(x)\beta(y)$ a bijection. Wolf approaches this in the following manner. First, construct sets $C = \{ a_{i}b_{i} : a_{i} \in A, b_{i} \in B\}$ and $C^{\prime} = \{a_{i}^{\prime}b_{i}^{\prime} : a_{i}^{\prime} \in A^{\prime}, b_{i}^{\prime} \in B^{\prime}\}$. Then check whether the map $a_{i}b_{i} \mapsto a_{i}^{\prime}b_{i}^{\prime}$ extends to a bijection $\gamma : L_{1} \to L_{2}$. Finally, check whether $\alpha(x)\beta(y) = \gamma(xy)$ for all $x, y \in L_{1}$. We note that Wolf is able to do this in $\textsf{NC}^{2}$ \cite{Wolf}. If $C$ and $C^{\prime}$ are cube generating sequences, then we would be able to apply the technique of Chattophadyay, Tor\'an, and Wagner \cite{ChattopadhyayToranWagner} to compute the bijection $\gamma$ in $\textsf{FOLL} \cap \textsf{L}$. However, $C$ and $C^{\prime}$ need not be cube generating sequences. As an example in the case of groups, suppose that $B = A^{-1}$. Then $B$ is a cube generating sequence. Furthermore, $a_{i}b_{i} = 1$ for each $i$, and so $C$ has only the identity element. In general, we do not expect $C$ and $C^{\prime}$ to be cube generating sequences, even if they do generate $L_{1}$ and $L_{2}$ respectively. 

Instead of extending Wolf's technique, we extend Miller's technique in his second proof that\\ $\algprobm{Latin Square Isotopy}$ is decidable in time $n^{\log(n) + O(1)}$. Miller constructs the relation 
\[
R = \{ (xy, \alpha(x)\beta(y)) : x, y \in L_{1}\},
\]

\noindent and checks whether $R$ is a bijection. If $R$ is a bijection; then by construction, the triple $(\alpha, \beta, R)$ is an isotopy \cite[Theorem 2, Proof 2]{MillerTarjan}. Using the fact that $A$ and $B$ are cube generating sequences, we can compute $x$ and $y$ in $\textsf{L} \cap \textsf{FOLL}$. In the process of computing $\alpha$ and $\beta$, we obtain a data structure which allows us to compute $\alpha(x)$ and $\beta(y)$ in $\textsf{L} \cap \textsf{FOLL}$.

\begin{theorem} \label{LatinSquareIsotopy}
$\algprobm{Latin Square Isotopy}$ is in $\beta_{2}\textsf{L} \cap \beta_{2}\textsf{FOLL}$.
\end{theorem}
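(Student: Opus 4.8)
The plan is to combine the technique of cube generating sequences from Chattopadhyay, Tor\'an, and Wagner \cite{ChattopadhyayToranWagner} with Miller's second proof that \algprobm{Latin Square Isotopy} is decidable in time $n^{\log(n) + O(1)}$ \cite{MillerTarjan}. The overall structure is a $\beta_{2}$ guess-and-verify: we nondeterministically guess, using $O(\log^{2} n)$ bits, cube generating sequences $A = (a_{0}, \ldots, a_{k})$ and $B = (b_{0}, \ldots, b_{k})$ for $L_{1}$ together with images $A^{\prime} = (a_{0}^{\prime}, \ldots, a_{k}^{\prime})$ and $B^{\prime} = (b_{0}^{\prime}, \ldots, b_{k}^{\prime})$ in $L_{2}$, where each sequence has length $O(\log n)$ and each entry is a $\lceil \log_{2} n\rceil$-bit index. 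Since \emph{every} quasigroup admits a cube generating sequence of size $O(\log n)$, at least one correct guess exists whenever an isotopy is present, so the total number of guessed bits is $O(\log^{2} n)$, placing the outer layer in $\beta_{2}$. It then remains to verify, in $\textsf{L} \cap \textsf{FOLL}$, that the guessed data yields a genuine isotopy.

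First I would build the bijections $\alpha, \beta : L_{1} \to L_{2}$ extending $A \mapsto A^{\prime}$ and $B \mapsto B^{\prime}$. Given a cube generating sequence, each $g \in L_{1}$ can be written as $g = a_{0}a_{1}^{e_{1}} \cdots a_{k}^{e_{k}}$ for some exponent vector $(e_{1}, \ldots, e_{k}) \in \{0,1\}^{k}$; using the balanced-parenthesization evaluation of depth $O(\log \log n)$, one verifies in $\textsf{L} \cap \textsf{FOLL}$ that a candidate exponent vector evaluates to $g$, exactly as in \cite[Theorem~3.4]{ChattopadhyayToranWagner}. I would define $\alpha(g)$ to be $a_{0}^{\prime}(a_{1}^{\prime})^{e_{1}} \cdots (a_{k}^{\prime})^{e_{k}}$ evaluated in $L_{2}$, and analogously define $\beta$ from $B$ and $B^{\prime}$. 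The process of computing these values furnishes a data structure (the exponent vector for each element) that lets us recover, for any $x, y \in L_{1}$, the decompositions of $x$ and $y$ and hence the values $\alpha(x)$ and $\beta(y)$, all within $\textsf{L} \cap \textsf{FOLL}$. One must also check that $\alpha$ and $\beta$ are well-defined bijections: well-definedness requires that the $\{0,1\}$-exponent representation be consistent (i.e. the guessed images induce a single-valued map), and bijectivity can be checked by verifying surjectivity in parallel across all target elements; both are count-free and $\textsf{AC}^{0}$-uniform given the decompositions.

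Following Miller's second proof, rather than trying to force the products $C = \{a_{i}b_{i}\}$ to be a cube generating sequence (which, as the excerpt notes with the example $B = A^{-1}$, fails in general), I would directly form the relation
\[
R = \{ (xy, \alpha(x)\beta(y)) : x, y \in L_{1}\}
\]
and check that $R$ is (the graph of) a bijection $\gamma : L_{1} \to L_{2}$. The key observation from \cite[Theorem~2, Proof~2]{MillerTarjan} is that if $R$ is single-valued and bijective, then $(\alpha, \beta, \gamma)$ is automatically an isotopy by construction, so no further compatibility check is needed. Concretely, for each product value $c \in L_{1}$ I would verify that all pairs $(x,y)$ with $xy = c$ yield the same image $\alpha(x)\beta(y)$ in $L_{2}$ (single-valuedness), and that the resulting map is a bijection. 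Since $xy$, $\alpha(x)$, and $\beta(y)$ are each computable in $\textsf{L} \cap \textsf{FOLL}$ from the data structure above, and the consistency and bijectivity conditions are universally/existentially quantified predicates over $L_{1} \times L_{1}$ of polynomial size with no counting, this verification stays within $\textsf{L} \cap \textsf{FOLL}$.

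The main obstacle I anticipate is the single-valuedness check for $R$: a priori there are $n^{2}$ pairs $(x,y)$ mapping to the various product values, and one must confirm that $\alpha(x)\beta(y)$ depends only on $xy$, not on the particular factorization $x \cdot y$. Establishing that this check is count-free (so that it lands in $\textsf{FOLL}$ and not merely $\textsf{TC}^{0}$) requires phrasing it as a purely existential/universal condition — for each candidate pair $(c, d) \in L_{1} \times L_{2}$, asking whether every $(x,y)$ with $xy = c$ satisfies $\alpha(x)\beta(y) = d$ — rather than tallying multiplicities. Care is also needed so that the depth incurred by composing the $O(\log \log n)$-depth product evaluations with the refinement/verification steps does not exceed $O(\log \log n)$, and so that the whole computation is simultaneously implementable in deterministic logspace; both follow because the individual ingredients (product evaluation, table lookups, unbounded \textsf{AND}/\textsf{OR} over the decompositions) are available in $\textsf{L} \cap \textsf{FOLL}$, but the bookkeeping is the delicate part.
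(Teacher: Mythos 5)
Your proposal follows essentially the same route as the paper's own proof: guess cube generating sequences $A, B$ for $L_{1}$ and images $A^{\prime}, B^{\prime}$ in $L_{2}$ with $O(\log^{2} n)$ bits, construct $\alpha$ and $\beta$ as cube-word extensions in $\textsf{L} \cap \textsf{FOLL}$ via Chattopadhyay--Tor\'an--Wagner, then form Miller's relation $R = \{(xy, \alpha(x)\beta(y)) : x, y \in L_{1}\}$ and check in $\textsf{AC}^{0}$ that it is a well-defined bijection, which by construction makes $(\alpha, \beta, R)$ an isotopy. The only differences are cosmetic: you enumerate pairs $(x,y) \in L_{1} \times L_{1}$ where the paper enumerates pairs of exponent vectors, and you are somewhat more explicit about the single-valuedness check, both of which stay within the same resource bounds.
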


We begin with the following lemmas.

\begin{lemma} \label{lem:Surjectivity}
Let $A, B$ be finite sets of the same size, and let $R \subseteq A \times B$ be a relation. We can decide whether $R$ is a well-defined surjection (and as $|A| = |B|$, consequently whether $R$ is a well-defined bijection) in $\textsf{AC}^{0}$.
\end{lemma}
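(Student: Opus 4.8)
The plan is to observe that ``$R$ is a well-defined surjection'' decomposes into a small number of conditions, each of which is a statement with bounded quantifier alternation over the characteristic bits of $R$, and therefore compiles directly into a constant-depth, polynomial-size circuit with unbounded fan-in $\textsf{AND}$ and $\textsf{OR}$ gates. I will assume $R$ is presented by its characteristic bits $r_{a,b} = [(a,b) \in R]$ for $a \in A$ and $b \in B$; the circuit reads these as its inputs.

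First I would check single-valuedness (i.e., that $R$ is a well-defined partial function): for every $a \in A$ there is at most one $b$ with $(a,b) \in R$. This is computed by
\[
\bigwedge_{a \in A} \; \bigwedge_{\substack{b, b' \in B \\ b \neq b'}} \neg\bigl( r_{a,b} \wedge r_{a,b'} \bigr),
\]
a depth-$O(1)$ circuit of size $O(|A|\,|B|^{2})$. Next I would check surjectivity, that every $b \in B$ is hit by some $a$, via $\bigwedge_{b \in B} \bigvee_{a \in A} r_{a,b}$, and (if one wishes to record totality explicitly) $\bigwedge_{a \in A} \bigvee_{b \in B} r_{a,b}$; both are manifestly $\textsf{AC}^{0}$. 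The decision is the conjunction of these circuits, so the whole test is in $\textsf{AC}^{0}$.

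For correctness I would argue that these conditions exactly capture the claim. If $R$ is single-valued then it contains at most $|A|$ pairs; if in addition $R$ is surjective, then its second coordinates already exhaust all $|B| = |A|$ elements of $B$, forcing $R$ to contain exactly $|A|$ pairs, one for each $a \in A$ and one for each $b \in B$. Hence $R$ is total, single-valued, and injective, i.e., a bijection; this is the pigeonhole argument that gives the parenthetical ``consequently'' clause, so no separate injectivity test is needed.

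The point to stress---and the only place where care is required---is that none of these checks involves counting. A naive reading of ``surjection'' might suggest computing the size of the image of $R$ and comparing it to $|B|$, which would require \textsf{Majority} gates and land us in $\textsf{TC}^{0}$ (indeed, deciding such cardinalities is tied to \algprobm{Parity}, which $\textsf{AC}^{0}$ cannot compute). Instead, surjectivity is a pure coverage condition expressible as an $\textsf{AND}$ of $\textsf{OR}$s, and injectivity is obtained for free from equinumerosity rather than verified by counting. Recognizing this is what keeps the construction inside $\textsf{AC}^{0}$.
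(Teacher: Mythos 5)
Your proposal is correct, and its core is the same as the paper's: compile the relevant conditions into constant-depth, polynomial-size formulas over the indicator bits of $R$, with surjectivity expressed exactly as the paper does, as $\bigwedge_{b \in B} \bigvee_{a \in A} r_{a,b}$. The difference is that you are more thorough than the paper's own proof, which \emph{only} writes down the surjectivity test and never checks that $R$ is single-valued or total; taken literally, that test alone would accept, e.g., $R = A \times B$, which is surjective but not a well-defined function. Your explicit single-valuedness check $\bigwedge_{a} \bigwedge_{b \neq b'} \neg(r_{a,b} \wedge r_{a,b'})$, together with the pigeonhole argument that single-valuedness plus surjectivity plus $|A| = |B|$ forces totality and injectivity, supplies exactly the justification for the ``well-defined'' and ``consequently a bijection'' clauses of the statement, which the paper leaves implicit. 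Your closing remark that no counting (hence no \textsf{Majority} gate) is needed is also the right point to stress, since it is precisely why the lemma lands in $\textsf{AC}^{0}$ rather than $\textsf{TC}^{0}$.
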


\begin{proof}
For each $b \in B$, define the relation:
\[
Y(b) = \bigvee_{a \in A} 1_{(a, b) \in R}.
\]

\noindent Observe that $Y(b)$ is $\textsf{AC}^{0}$-computable. Now $R$ is surjective if and only if $Y(b) = 1$ for all $b \in B$, which is defined by the following condition:
\[
\varphi := \bigwedge_{b \in B} Y(b).
\]

\noindent Observe that $\varphi$ is $\textsf{AC}^{0}$ computable. 
\end{proof}

\begin{lemma} \label{LatinSquareBijection}
Let $L_{1}$ and $L_{2}$ be Latin squares of order $n$, and let $k = 4\lceil \log(n) \rceil$. Suppose that $(g_{0}, g_{1}, \ldots, g_{k})$ and $(h_{0}, \ldots, h_{k})$ are cube generating sequences for $L_{1}$ and $L_{2}$ respectively, with balanced parenthesization $P$. Deciding whether the map $g_{i} \mapsto h_{i}$ for all $i \in \{0, \ldots, k\}$ extends to a bijection is in $\textsf{L} \cap \textsf{FOLL}$. 
\end{lemma}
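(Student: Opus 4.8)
The plan is to read the generator correspondence through cube coordinates. For an exponent vector $e = (e_1,\dots,e_k) \in \{0,1\}^k$ write $g(e) = g_0 g_1^{e_1}\cdots g_k^{e_k}$ and $h(e) = h_0 h_1^{e_1}\cdots h_k^{e_k}$, in each case evaluated under the balanced parenthesization $P$, and let $\alpha$ be the map induced on cube coordinates, $\alpha(g(e)) = h(e)$. Because $(g_0,\dots,g_k)$ is a cube generating sequence, every element of $L_1$ arises as some $g(e)$, so $\alpha$ is defined on all of $L_1$ and the all-zero vector gives $\alpha(g_0) = h_0$; the only way this assignment can fail to describe an actual map is if two coordinate vectors name the same element of $L_1$ but different elements of $L_2$. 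I would therefore reduce the lemma to deciding whether the relation
\[
R = \{\, (g(e),\, h(e)) : e \in \{0,1\}^k \,\} \subseteq L_1 \times L_2
\]
is well defined as a function, i.e. whether $g(e) = g(f)$ implies $h(e) = h(f)$ for all $e, f \in \{0,1\}^k$.

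The point I would emphasize is that well-definedness is the only condition that needs testing, since it already forces bijectivity. If $R$ is a well-defined function then it is total (cube generation for $(g_i)$) and its image is $\{h(e) : e \in \{0,1\}^k\} = L_2$, using that $(h_i)$ is a cube generating sequence for $L_2$; a total surjection between finite sets with $|L_1| = |L_2| = n$ is automatically a bijection. One may also package the surjectivity half of this argument through \Lem{lem:Surjectivity} rather than invoking the cube generation hypothesis on $(h_i)$ directly. Conversely, if $R$ is not well defined then no bijection is compatible with the cube coordinates, and the answer is ``no.''

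For the complexity bound, note first that there are $2^k = 2^{4\lceil \log n\rceil} = \poly(n)$ exponent vectors, so every product $g(e)$ and $h(e)$ can be enumerated within budget. Following Chattopadhyay, Tor\'an, and Wagner, each such product lies in $\textsf{L} \cap \textsf{FOLL}$: under the balanced parenthesization $P$ it is the value of a binary tree of depth $O(\log\log n)$ whose internal nodes are single Cayley-table lookups, and each lookup is an $\textsf{AC}^0$ operation on $O(\log n)$-bit indices, which yields the $\textsf{FOLL}$ bound, while their coordinate-computation technique yields the matching logspace bound. Given the products, the membership indicator $1_{(a,b)\in R}$ is the disjunction over $e$ of $[\,g(e) = a \wedge h(e) = b\,]$, well-definedness is the $\textsf{AC}^0$ predicate $\bigwedge_{a}\bigwedge_{b \neq b'} \neg\big(1_{(a,b)\in R} \wedge 1_{(a,b')\in R}\big)$, and surjectivity is decided by \Lem{lem:Surjectivity}, again in $\textsf{AC}^0$. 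Placing a constant-depth decision on top of the $\textsf{L} \cap \textsf{FOLL}$ product computation keeps us within budget: for $\textsf{FOLL}$ the circuit depths add to $O(\log\log n)$, and for $\textsf{L}$ one iterates over pairs $(e,f)$, tests the implication $g(e) = g(f) \Rightarrow h(e) = h(f)$, and reuses the work tape between iterations, so that $R$ need never be materialized.

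The step I expect to be the main obstacle is the logspace evaluation of products under the balanced parenthesization $P$. A direct recursive evaluation of a depth-$O(\log\log n)$ tree must keep one $O(\log n)$-bit partial value alive per level, for a total of $\Theta(\log n \cdot \log\log n)$ bits, which overshoots the logspace budget; so the $\textsf{L}$ bound genuinely rests on the coordinate technique of Chattopadhyay, Tor\'an, and Wagner rather than on a naive traversal, and this is the place where I would be most careful to follow their construction exactly. The only other delicate point is conceptual rather than computational: making precise that the cube-coordinate map is the correct meaning of ``extends to a bijection,'' and that well-definedness is simultaneously necessary and, in the presence of both cube generation hypotheses, sufficient.
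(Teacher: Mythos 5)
Your proposal is correct, and it follows essentially the same route as the paper: the same relation (the paper's $X(g,h)$ is your $R$) defined by matching cube coordinates, the same appeal to Chattopadhyay, Tor\'an, and Wagner for evaluating cube words in $\textsf{L} \cap \textsf{FOLL}$ (with the balanced parenthesization supplying the $O(\log\log n)$ depth), and a constant-depth predicate layered on top. The one place you genuinely diverge is the final check, and there your version is the sound one. The paper argues that since $|L_1| = |L_2|$, ``well-defined bijection iff injective iff surjective,'' and then tests surjectivity of the relation via \Lem{lem:Surjectivity} (whose proof, note, tests only surjectivity, not well-definedness). But that chain of equivalences is valid for functions, not for relations: under the hypotheses of Lemma~\ref{LatinSquareBijection} the relation $R$ is \emph{automatically} surjective, since every element of $L_2$ equals $h(e)$ for some $e$ by the cube-generating hypothesis on $(h_i)$, so the surjectivity test is vacuous and cannot detect the one failure mode that actually matters, namely $g(e) = g(f)$ with $h(e) \neq h(f)$. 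Concretely, in $\Z_4$ take the $g$-sequence to begin $0,1,2,3$ and the $h$-sequence to begin $0,1,2,1$ (both padded with the identity): both are cube generating, $R$ is surjective, yet $3$ is related to both $3$ and $1$, so no bijection is compatible with the cube coordinates. Your reduction to functionality of $R$---with bijectivity then following from totality, surjectivity, and $|L_1| = |L_2|$---is exactly the right condition, is testable by the same kind of $\textsf{AC}^0$ predicate within the same depth and space budgets, and is what the paper's argument (including its later use for $\varphi_C$ in Theorem~\ref{LatinSquareIsotopy}, where the same issue recurs) actually needs. Your closing caveat about the logspace bound is also well placed: a naive recursive evaluation of the depth-$O(\log \log n)$ tree stores $\Theta(\log n \cdot \log\log n)$ bits, so the $\textsf{L}$ bound genuinely rests on the coordinate technique of Chattopadhyay, Tor\'an, and Wagner, which is precisely the citation the paper makes.
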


\begin{proof}
For each $(g, h) \in L_{1} \times L_{2}$, define $X(g, h) = 1$ if and only if there exists $(\epsilon_{1}, \ldots, \epsilon_{k}) \in \{0,1\}^{k}$ such that:
\begin{align*}
&g := g_{0}g_{1}^{\epsilon_{1}} \cdots g_{k}^{\epsilon_{k}}, \\
&h := h_{0}h_{1}^{\epsilon_{1}} \cdots h_{k}^{\epsilon_{k}}.
\end{align*}

Chattophadyay, Tor\'an, and Wagner showed that the cube words for $g$ and $h$ can be computed in $\textsf{L} \cap \textsf{FOLL}$  \cite{ChattopadhyayToranWagner}, so $X(g, h)$ is computable in $\textsf{L} \cap \textsf{FOLL}$. We note that the $\textsf{FOLL}$ bound follows from the fact that $P$ is a balanced parenthesization. As the two quasigroups are the same size, the map on the quasigroups induced by $(g_{0}, g_{1}, \ldots, g_{k}) \mapsto (h_{0}, \ldots, h_{k})$ is a well-defined bijection iff the induced map is injective iff the induced map is surjective. So it now suffices to check whether the induced map is surjective. By \Lem{lem:Surjectivity}, we may check whether the induced map is surjective in $\textsf{AC}^{0}$.
\end{proof}

\begin{proof}[Proof of Theorem \ref{LatinSquareIsotopy}]
Let $k = 4\lceil \log_{2}(n) \rceil$. We use $4k^{2}$ non-deterministic bits to guess cube generating sequences $A, B \subseteq L$ and $A^{\prime}, B^{\prime} \subseteq L^{\prime}$, where $A = \{a_{0}, a_{1}, \ldots, a_{k} \}$, $B = \{ b_{0}, b_{1}, \ldots, b_{k}\}$, $A^{\prime} = \{ a_{0}^{\prime}, a_{1}^{\prime}, \ldots, a_{k}^{\prime}\}$, and $B^{\prime} = \{ b_{0}^{\prime}, b_{1}^{\prime}, \ldots, b_{k}^{\prime}\}$. We may then in $\textsf{L} \cap \textsf{FOLL}$ check the following.
\begin{itemize}
\item We first check that the map $a_{i} \mapsto a_{i}^{\prime}$ extends to a bijection of $\langle A \rangle$ and $\langle A^{\prime} \rangle$. In particular, we may check in $\textsf{L} \cap \textsf{FOLL}$ whether $L_{1} = \text{Cube}(A)$ and $L_{2} = \text{Cube}(A^{\prime})$ (see \cite[Theorem 5]{ChattopadhyayToranWagner}). The procedure in Lemma \ref{LatinSquareBijection} that decides whether the map $a_{i} \mapsto a_{i}^{\prime}$ for all $i \in \{0, \ldots, k\}$ extends to a bijection $L_{1} \to L_{2}$, also explicitly computes a bijection $\varphi_{A} : L_{1} \to L_{2}$ if one exists. 

\item We proceed analogously for the map $b_{i} \mapsto b_{i}^{\prime}$ for all $i \in \{0, \ldots, k\}$. In the case that $L_{1} = \text{Cube}(B)$ and $L_{2} = \text{Cube}(B^{\prime})$, let $\varphi_{B} : L_{1} \to L_{2}$ be the bijection computed by Lemma \ref{LatinSquareBijection}.
\end{itemize}

Suppose now that the bijections $\varphi_{A}, \varphi_{B} : L_{1} \to L_{2}$ have been constructed. We now attempt to construct a relation $\varphi_{C} : L_{1} \to L_{2}$ in such a way that $\varphi_{C}$ is a bijection if and only if $(\varphi_{A}, \varphi_{B}, \varphi_{C})$ is an isotopism. For each pair $(\ell, m) \in L_{1} \times L_{2}$, define $X(\ell, m) = 1$ if and only if we define $\varphi_{C}$ to map $\ell \mapsto m$.  For each $(\epsilon_{1}, \ldots, \epsilon_{k}), (\nu_{1}, \ldots, \nu_{k}) \in \{0, 1\}^{k}$, we do the following:
\begin{enumerate}[label=(\alph*)]
\item Compute:
\begin{align*}
&g := a_{0}a_{1}^{\epsilon_{1}} \cdots a_{k}^{\epsilon_{k}}, \\
&h := b_{0}b_{1}^{\nu_{1}} \cdots b_{k}^{\nu_{k}}.
\end{align*}

We note that computing $g$ and $h$ can be done in $\textsf{L} \cap \textsf{FOLL}$ (see Chattopadhyay, Tor\`an, and Wagner \cite{ChattopadhyayToranWagner}).

\item Compute $\ell := gh$, $\varphi_{A}(g)$, $\varphi_{B}(h)$, and $m := \varphi_{A}(g)\varphi_{B}(h)$. We set $\varphi_{C}(\ell) = m$, which we indicate by defining $X(\ell, m) = 1$. The computations at this stage are computable using an $\textsf{AC}^{0}$ circuit. 

\item It remains to check whether $\varphi_{C}$ is a bijection. By \Lem{lem:Surjectivity}, we may test whether $\varphi_{C}$ is a bijection in $\textsf{AC}^{0}$. 
\end{enumerate}

Now $\varphi_{C}$ was constructed so that $(\varphi_{A}, \varphi_{B}, \varphi_{C})$ satisfies the homotopy condition, so, as they are also bijective, they are an isotopy. Thus, checking whether $L_{1}$ and $L_{2}$ are isotopic is in $\beta_{2}\textsf{L} \cap \beta_{2}\textsf{FOLL}$.
\end{proof}

Now for any $i, c \geq 0$, we have that $\beta_{i}\textsf{FO}((\log \log n)^{c})$ cannot compute \algprobm{Parity} \cite{ChattopadhyayToranWagner}. As \algprobm{Latin Square Isotopy} belongs to $\beta_{2}\textsf{FOLL}$, we obtain the following corollary.

\begin{corollary}
For any $i, c \geq 0$, \algprobm{GI} is not $\beta_{i}\textsf{FO}((\log \log n)^{c})$-reducible to \algprobm{Latin Square Isotopy}.
\end{corollary}

Miller \cite{MillerTarjan} showed that isomorphism testing of Latin square graphs is polynomial-time reducible to the Latin square isotopy problem. Wolf \cite{Wolf} improved this bound, showing that isomorphism testing of Latin square graphs is $\textsf{NC}^{1}$-reducible to testing for Latin square isotopy. We recall Wolf's result below.

\begin{lemma}[{\cite[Lemma 4.11]{Wolf}}]
Let $G$ be a Latin square graph derived from a Latin square of size $n$. We can retrieve a Latin square from $G$ with a polynomial-sized $\textsf{NC}$ circuit with $O(\log n)$ depth.
\end{lemma}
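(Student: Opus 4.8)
The plan is to show that a Latin square can be recovered from its Latin square graph $G$ in $\textsf{NC}^1$ (indeed, a closer analysis gives $\textsf{AC}^0$, as the paper later claims in Lemma~\ref{RecoverLSGAC0}, but here I aim for Wolf's stated $O(\log n)$-depth bound). Recall that a Latin square graph on $n^2$ vertices is strongly regular with parameters $(n^2, 3(n-1), n, 6)$, and each vertex corresponds to a triple $(a,b,c)$ with $ab = c$, with two vertices adjacent iff they agree in exactly one coordinate. The structural fact I would exploit is that the maximal cliques of $G$ fall into three families of ``lines'': the row-cliques (all triples with a fixed first coordinate $a$), the column-cliques (fixed $b$), and the symbol-cliques (fixed $c$). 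Each such clique has exactly $n$ vertices, every vertex lies in exactly three of them (one per family), and any two cliques from different families meet in at most one vertex. Recovering the Latin square amounts to reconstructing this partition of the edge set into the three parallel classes of $n$-cliques and then reading off, for each vertex, which row-, column-, and symbol-clique it belongs to.

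First I would compute, for every edge $\{u,v\}$, the set of common neighbors; since adjacent vertices share $\lambda = n$ neighbors, one identifies the unique maximal clique through an edge by the common-neighbor structure (two adjacent vertices lie in a common line-clique, and the edges within a line are distinguished from ``accidental'' edges by how their endpoints' common neighborhoods intersect). Computing adjacency, counting common neighbors, and comparing neighborhoods are all bounded-depth operations on the $\poly(n)$-sized adjacency data, so this step sits comfortably in $\textsf{AC}^0$ or low-depth $\textsf{NC}$. Next I would cluster the edges into maximal $n$-cliques; because the cliques are rigid (each vertex is in exactly three), this clustering is essentially a fixed local computation per vertex, again low-depth. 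The nontrivial bookkeeping is to separate the three families from one another into three parallel classes: two cliques are in the same family iff they are vertex-disjoint (cliques in the same parallel class never share a vertex, whereas cliques from different families meet in exactly one vertex), so the three classes are the connected components of the ``vertex-disjointness'' graph on cliques, and this component computation is what drives the $O(\log n)$ depth.

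Once the three parallel classes are in hand, I would assign coordinate labels: arbitrarily index the $n$ cliques in each class by $1,\dots,n$, and then for each vertex $v$ output the triple $(\rho(v),\kappa(v),\sigma(v))$ recording which row-clique, column-clique, and symbol-clique it lies in. Setting $L_{\rho(v),\kappa(v)} = \sigma(v)$ fills in an $n\times n$ array, and the strong-regularity guarantees this array is a Latin square isotopic to the original. The labelings are arbitrary, which is exactly why the reduction lands in the \emph{isotopy} (indeed main-class) problem rather than requiring a canonical choice.

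The main obstacle I anticipate is correctly separating the three clique-families into parallel classes in bounded depth, since a naive component computation on the clique-disjointness graph is what forces the logarithmic depth; the subtlety is to verify that the line-cliques are genuinely rigid (no spurious $n$-cliques arise from the parameter $\mu = 6$, which requires $n$ large enough, matching Bruck's threshold $n > 23$ for pseudo-Latin square graphs to be Latin square graphs) so that the three families are unambiguously recoverable. Handling this rigidity cleanly — i.e.\ proving the only maximal $n$-cliques are the lines — is the crux, after which the coordinate extraction is routine and low-depth.
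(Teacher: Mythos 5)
Your proposal is correct in substance, but it takes a genuinely different route from the paper's. The statement itself is cited from Wolf, and the paper's own proof content is the strengthened Lemma~\ref{RecoverLSGAC0}, whose reconstruction is local and anchored: it fixes one edge $\{x_{1}, x_{2}\}$, extracts the unique $(n-2)$-subset of their $n$ common neighbors that completes the edge to an $n$-clique (the first ``row''), uses the leftover common neighbor $y_{2}$ to seed a second line through $x_{1}$ (the first ``column''), aligns the $y_{j}$ against the $x_{i}$ via value edges, and then fills in each of the remaining $(n-1)^{2}$ cells by a unique local adjacency condition --- no global inventory of cliques is ever formed. You instead reconstruct the entire line system: all $3n$ line-cliques, partitioned into three parallel classes, from which the array is read off. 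Both work; yours buys a cleaner conceptual picture (it is essentially recovery of the $3$-net from the graph, and it hands you the parallel classes explicitly), at the price of needing the global rigidity statement that the only $n$-cliques are lines, where the paper only needs the localized version of this fact at the chosen edge.

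Two corrections to your own error analysis. First, the rigidity threshold has nothing to do with Bruck's bound $n > 23$: that bound concerns when a \emph{pseudo}-Latin square graph (an SRG with the right parameters) must be a Latin square graph, whereas here $G$ is promised to come from a Latin square. A short direct argument gives rigidity for $n \geq 5$: any clique containing three collinear vertices lies entirely in their line, and a clique with no three collinear vertices has size at most $4$ (its edges would need to be properly $3$-colored by row/column/symbol at every vertex), so for $n \geq 5$ the $n$-cliques are exactly the lines; smaller $n$ is handled by brute force. Second, your claim that separating the three families ``drives the $O(\log n)$ depth'' via a connected-components computation is a misdiagnosis, and would be a liability if taken literally, since general undirected connectivity is not known to lie in $\textsf{NC}^{1}$. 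Fortunately no connectivity computation is needed: disjointness is already an equivalence relation on the line-cliques (the disjointness graph is a disjoint union of three complete graphs), so ``same parallel class'' is a single pairwise-disjointness test. With unbounded fan-in your whole procedure is constant-depth, i.e.\ $\textsf{AC}^{0}$, matching Lemma~\ref{RecoverLSGAC0}; the $O(\log n)$ depth in Wolf's statement arises only from simulating wide \textsf{AND}/\textsf{OR} gates with fan-in two.
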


\begin{remark}
The statement of Lemma 4.11 in Wolf actually claims a depth of $O(\log^{2} n)$. However, in the proof of Lemma 4.11, Wolf shows that only $O(\log n)$ depth is needed \cite{Wolf}.
\end{remark}

Now by Remark~\ref{rmk:MainClass}, we can place the Latin squares into their main isotopy class in $\textsf{AC}^{0}$. In light of this, Wolf's result, Theorem \ref{LatinSquareIsotopy},  and Bruck's result that for $n > 23$, a pseudo-Latin square graph is a Latin square graph \cite{Bruck}, we obtain the following corollary.

\begin{corollary}
Isomorphism of pseudo-Latin square graphs can be decided in $\beta_{2}\textsf{L}$. 
\end{corollary}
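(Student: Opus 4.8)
The plan is to reduce isomorphism testing of two pseudo-Latin square graphs $G_{1}, G_{2}$ to a bounded number of instances of \algprobm{Latin Square Isotopy}, and then appeal to \Thm{LatinSquareIsotopy}. Recall that a pseudo-Latin square graph is a strongly regular graph with parameters $(n^{2}, 3(n-1), n, 6)$, so the order of each input graph determines the relevant parameter $n$. I would first dispatch the finitely many exceptional orders $n \leq 23$ separately: here each graph has bounded size, so isomorphism is decidable by brute force, which we may hardwire into an $\textsf{AC}^{0}$ (indeed $\textsf{NC}^{0}$) circuit. The substantive regime is $n > 23$, where Bruck's theorem guarantees that every pseudo-Latin square graph is in fact a genuine Latin square graph $G(L)$ arising from some Latin square $L$.

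For $n > 23$, I would next recover the underlying Latin squares. Since Bruck's theorem certifies that $G_{1}$ and $G_{2}$ are Latin square graphs, Wolf's Lemma 4.11 applies, and we may extract Latin squares $L_{1}$ and $L_{2}$ by a polynomial-size circuit of depth $O(\log n)$, hence in $\textsf{NC}^{1} \subseteq \textsf{L}$. By Miller's characterization, $G_{1} \cong G_{2}$ if and only if $L_{1}$ and $L_{2}$ are main class isotopic, i.e.\ isotopic up to one of the (at most six) conjugates obtained by permuting the roles of rows, columns, and symbols.

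I would realize this main-class test as a disjunction of ordinary isotopy tests. Each of the at most six conjugates of $L_{2}$ is computable from $L_{2}$ by an $\textsf{AC}^{0}$ circuit, since it is merely a fixed relabeling of coordinates; this is the $\textsf{AC}^{0}$ placement into the main isotopy class described in Remark~\ref{rmk:MainClass}. Now $L_{1}$ and $L_{2}$ are main class isotopic exactly when $L_{1}$ is isotopic to one of these conjugates. For each fixed conjugate, testing isotopy against $L_{1}$ is an instance of \algprobm{Latin Square Isotopy}, which by \Thm{LatinSquareIsotopy} lies in $\beta_{2}\textsf{L}$; since there are only a constant number of conjugates, I would run these in parallel and combine their answers with an $\textsf{AC}^{0}$ disjunction.

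It remains to account for the overall complexity, and this is where I expect the only real work. The pipeline consists of an $\textsf{NC}^{1}$ recovery, an $\textsf{AC}^{0}$ computation of conjugates, and a constant number of $\beta_{2}\textsf{L}$ isotopy tests, all composed and combined by an $\textsf{AC}^{0}$ disjunction. Using that $\textsf{L}$ is closed under composition and that the $O(\log^{2} n)$ nondeterministic bits may simply be carried through the deterministic preprocessing, the whole procedure lies in $\beta_{2}\textsf{L}$. The main obstacle is this complexity bookkeeping rather than any new combinatorial input: the recovery step is only $\textsf{NC}^{1}$ and not $\textsf{AC}^{0}$, which is precisely why the bound degrades to $\beta_{2}\textsf{L}$ and we cannot also retain the $\beta_{2}\textsf{FOLL}$ half obtained for genuine Latin square graphs in \Cor{cor:LatinSquareGI}. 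A secondary subtlety is that Bruck's theorem is used only as an existence guarantee, so the recovery must be carried out uniformly for all $n > 23$ without presupposing the Latin square, with the exceptional small orders handled by the separate brute-force circuit above.
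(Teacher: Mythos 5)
Your proposal is correct and follows essentially the same route as the paper: Bruck's theorem for $n > 23$, Wolf's $O(\log n)$-depth (hence $\textsf{NC}^{1} \subseteq \textsf{L}$) recovery of the Latin squares, an $\textsf{AC}^{0}$ reduction of main-class isotopy to plain isotopy, and then Theorem~\ref{LatinSquareIsotopy}, with the $\textsf{NC}^{1}$ recovery step being exactly why only the $\beta_{2}\textsf{L}$ bound survives. The only cosmetic difference is that you handle the main class via the six conjugates rather than via Miller's normal-form placement as in Remark~\ref{rmk:MainClass}, but these are equivalent and both $\textsf{AC}^{0}$-computable.
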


\begin{remark}
To the best of the author's knowledge, the largest pseudo-Latin square graph that is not a  Latin square graph is the Hall--Janko graph, where $n = 10$ (that is, the Hall--Janko graph has $n^{2} = 100$ vertices) \cite{Suzuki, Nakasora}. The author would be grateful for references to larger examples.
\end{remark}

We next show that the reduction from \cite[Lemma 4.11]{Wolf}, which effectively parallelizes the reduction found in \cite{MillerTarjan}, can be implemented in $\textsf{AC}^{0}$. Furthermore, we can place the recovered Latin square into its main isotopy class in $\textsf{AC}^{0}$ (see Remark~\ref{rmk:MainClass}). It follows that \algprobm{Latin Square Graph Isomorphism} is also in $\beta_{2}\textsf{FOLL}$. As $\beta_{2}\textsf{FOLL}$ cannot compute \algprobm{Parity}, we obtain that for any $i, c \geq 0$, \algprobm{GI} is not $\beta_{i}\textsf{FO}((\log \log n)^{c})$-reducible to \algprobm{Latin Square Graph Isomorphism}.

\begin{lemma} \label{RecoverLSGAC0}
Let $G$ be a Latin square graph obtained from a Latin square of order $n$. We can recover a Latin square from $G$ using an $\textsf{AC}^{0}$ circuit.
\end{lemma}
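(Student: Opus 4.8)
The plan is to follow Miller's and Wolf's recovery procedure but to replace every counting step by a threshold-free, existential test, so that each stage lies in $\textsf{AC}^{0}$. Throughout I use the structural facts about $G = G(L)$: two distinct vertices agree in at most one coordinate, so every edge lies in a unique \emph{line}, a maximal $n$-clique obtained by fixing one of the three coordinates. The lines split into three parallel classes (row-, column-, and value-lines); lines in the same class are vertex-disjoint and partition $V(G)$, while two lines in different classes meet in exactly one vertex. Since we are handed a genuine Latin square graph, this structure holds for all $n$, and for the finitely many small orders below the threshold where the tests below apply, the graph has constant size and can be decided by a hard-wired $\textsf{AC}^{0}$ circuit; so I assume $n$ is large (say $n \ge 6$).

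First I would recover lines directly from edges. For an edge $\{u,v\}$ the common neighborhood $N(u)\cap N(v)$ has size $\lambda = n$ and, by a short case analysis, consists of the $n-2$ remaining vertices on the line through $u,v$ together with exactly two ``transversal'' vertices. A line vertex is adjacent to the other $n-3$ line vertices but to neither transversal vertex, whereas each transversal vertex is adjacent to at most one member of $N(u)\cap N(v)$. Hence for $n \ge 5$ I can set
\[
\mathrm{line}(u,v) = \{u,v\} \cup \Big\{ w \in N(u)\cap N(v) : \exists\, w_{1} \ne w_{2} \in N(u)\cap N(v)\ \text{with}\ w_{1},w_{2} \sim w \Big\},
\]
and this equals the full line on $u$ and $v$. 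The ``$\exists\, w_{1}\ne w_{2}$'' is an unbounded \textsf{OR} over pairs, and the intersection and adjacency tests are unbounded \textsf{AND}s/\textsf{OR}s, so $\mathrm{line}(u,v)$ is computable (as a characteristic vector over $V(G)$) in $\textsf{AC}^{0}$; crucially this never counts up to $n$.

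Next I would set up coordinates. Fix the base vertex $v_{0}$ of least label and select its three lines $\ell_{1},\ell_{2},\ell_{3}$ greedily: $\ell_{1}=\mathrm{line}(v_{0},w)$ for the least-labeled neighbor $w$, $\ell_{2}=\mathrm{line}(v_{0},w)$ for the least-labeled neighbor outside $\ell_{1}$, and $\ell_{3}$ likewise outside $\ell_{1}\cup\ell_{2}$; each ``least subject to a condition'' is $\textsf{AC}^{0}$. I declare class $i$ to consist of the lines equal to or disjoint from $\ell_{i}$ (disjointness of two vertex sets is an unbounded \textsf{AND}, hence $\textsf{AC}^{0}$), which recovers the parallel classes since parallel $\Leftrightarrow$ disjoint. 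For each vertex $v$ I extract its three lines as $\{v\}\cup\{u\sim v : \mathrm{line}(v,u)\ \text{in class}\ i\}$ for $i=1,2,3$, and represent each line by its least-labeled vertex; all of this is $\textsf{AC}^{0}$. Writing $r(v),c(v),s(v)$ for the representatives of the row-, column-, and value-line through $v$, I output the Latin square whose $(r(v),c(v))$ entry is $s(v)$. The net structure guarantees well-definedness: the row- and column-line through $v$ meet only in $v$, so each cell is filled once, and distinct vertices on a common row-line lie on distinct value-lines, giving the Latin property.

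I expect the main obstacle to be organizational rather than a single hard computation: verifying that \emph{every} stage can be phrased without counting. The line-recovery test is the linchpin — it replaces the \textsf{Majority}-type condition ``$w$ has roughly $n$ common neighbors inside $N(u)\cap N(v)$'' (which is $\textsf{TC}^{0}$) by the threshold-free ``$w$ has at least two,'' and its correctness rests on the exact count of two transversal vertices each adjacent to $\le 1$ common neighbor, valid once $n\ge 5$. The second subtlety is labeling: compressing the $n$ representatives down to $[n]$ would require ranking, i.e. counting, which is not $\textsf{AC}^{0}$; I sidestep this by keeping the representatives (vertex labels in $[n^{2}]$, hence $O(\log n)$ bits) as the names of rows, columns, and symbols, which is harmless because the downstream reductions to \algprobm{Quasigroup Isomorphism} and \algprobm{Latin Square Isotopy} are invariant under relabeling the three coordinate sets. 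Finally, one must dispatch the small-$n$ cases by a constant-size circuit and double-check the exact threshold at which the transversal/line dichotomy becomes unambiguous.
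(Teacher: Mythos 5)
Your proposal is correct, and although it follows the same Miller--Wolf strategy as the paper (recover the row/column/value clique structure of $G$, then read off the square), the execution differs at every step, generally to your advantage. Where the paper identifies the $n$-clique through an edge $\{x_{1},x_{2}\}$ by enumerating all $\binom{n}{n-2} \in \Theta(n^{2})$ subsets of the common neighborhood and testing each for cliqueness, you use the local two-witness test: a common neighbor of $u,v$ lies on their line iff it has two distinct neighbors inside $N(u)\cap N(v)$. This is sound --- each of the $n-2$ line vertices has $n-3\geq 2$ such witnesses once $n\geq 5$, while each of the two transversal vertices is adjacent to at most one member of $N(u)\cap N(v)$ (namely the other transversal, exactly when the two corresponding cells form an intercalate) --- and it isolates precisely why no counting is needed. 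Second, the paper reads off the square asymmetrically: it labels the first row, value-matches to label the first column, and locates every other vertex by its adjacencies to labeled vertices, which forces a separate patch for the diagonal cells $\ell_{ii}=i$ (its steps 6b and 7). Your symmetric treatment --- recover all three parallel classes via pairwise disjointness of lines and output, for each vertex, the triple of line representatives through it --- avoids that case analysis entirely, since values are named by value-lines rather than by matching against the first row. Third, your handling of labels is more careful than the paper's: the paper writes ``label the vertices of the clique as $x_{3},\ldots,x_{n}$'' and sets $\ell_{1j}=j$, which implicitly compresses an $n$-element subset of $[n^{2}]$ down to $[n]$; done canonically this is ranking, hence counting, hence not in $\textsf{AC}^{0}$. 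Keeping vertex identifiers as row/column/symbol names sidesteps this, and, as you note, the downstream uses (main-class isotopy, \algprobm{Quasigroup Isomorphism}, \algprobm{Latin Square Isotopy}) are indifferent to the label universe. If you write this up, state explicitly that the output is a Latin square over a polynomially bounded label set rather than literally over $[n]$ --- the same caveat applies, silently, to the paper's own proof.
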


\begin{proof}
We carefully analyze the reduction from \cite{MillerTarjan, Wolf}. By construction, two vertices $u$ and $v$ in $G$ are adjacent precisely if $u$ and $v$ correspond to elements in the Latin square that are either in the same row, the same column, or have the same value. As a result, each row, each column, and the nodes corresponding to a fixed given value all induce cliques of size $n$. The algorithm effectively identifies these cliques and their relations to the other cliques in order to recover a Latin square.

Denote $L = (\ell_{ij})_{1 \leq i, j \leq n}$ to be the $n \times n$ matrix, which our algorithm will populate with values for the Latin square. We proceed as follows.

\begin{enumerate}
\item We begin by selecting two adjacent vertices $x_{1}$ and $x_{2}$. Without loss of generality, we may assume $x_{1}$ and $x_{2}$ belong to the same row of the Latin square. We next find the $n$ vertices $v_{1}, \ldots, v_{n}$ adjacent to both $x_{1}$ and $x_{2}$. Precisely, for a vertex $v$, we have the indicator
\[
X(v) = E(x_{1}, v) \land E(x_{2}, v),
\]

where $E(u, v) = 1$ precisely if $u$ and $v$ are adjacent. Exactly $n$ indicators will evaluate to $1$. All but two of these nodes form a clique of size $n$ with $x_{1}$ and $x_{2}$. As $v_{1}, \ldots, v_{n}$ are adjacent to $x_{1}$ and $x_{2}$, it suffices to check which $n-2$ vertices of $v_{1}, \ldots, v_{n}$ form a clique of size $(n-2)$. For a given set $S \in \binom{[n]}{n-2}$, it suffices to check:
\[
\bigwedge_{\substack{i, j \in S \\ i \neq j}} E(v_{i}, v_{j}),
\]

which is $\textsf{AC}^{0}$-computable. There are $\binom{n}{n-2} \in \Theta(n^{2})$ such sets to check. Thus, identifying the clique is $\textsf{AC}^{0}$-computable. 

In parallel, we label the vertices of the clique as $x_{3}, \ldots, x_{n}$. One node not adjacent to any of $x_{3}, \ldots, x_{n}$ is labeled $y_{2}$. 

\item We associate $\ell_{1j}$ with $x_{j}$. Precisely, we set (in parallel) $\ell_{1j} = j$ for each $j \in [n]$. This step is $\textsf{AC}^{0}$-computable.

\item We next find the $n$-clique associated with $x_{1}$ and $y_{2}$, and label the additional vertices as $y_{3}, \ldots, y_{n}$. By similar argument as in Step 1, this step is $\textsf{AC}^{0}$-computable. Here, we view $x_{1}, y_{2}, y_{3}, \ldots, y_{n}$ as the first column of the Latin square.

\item For each $3 \leq i \leq n$, there exists a $3 \leq j \leq n$ such that $x_{i}$ and $y_{j}$ are adjacent. In particular, as $x_{i}$ and $y_{j}$ are neither in the same row nor the same column, it must be the case that $x_{i}$ and $y_{j}$ correspond to elements in the Latin square with the same value. It follows that our choice of $j$ is in fact unique. We reorder $y_{3}, \ldots, y_{n}$ so that $x_{i}$ is adjacent to $y_{i}$. This step is $\textsf{AC}^{0}$-computable.

\item For $2 \leq j \leq n$, we associate $\ell_{j1}$ with $y_{j}$. Precisely, we set (in parallel), $\ell_{j1} = j$ for each $2 \leq j \leq n$. This step is $\textsf{AC}^{0}$-computable.

\item For each of the remaining $(n-1)^{2}$ nodes $z$, we do the following in parallel:
\begin{enumerate}
\item If $z$ is adjacent to $x_{1}$, then the edge $\{ x_{1}, z\}$ is a value edge (as $z$ is not in the same row or column as $x_{1}$). So there exist unique $i, j > 1$ such that $z$ is adjacent to $x_{j}$ and $y_{i}$. In this case, we set $\ell_{ij} = 1$. This case is $\textsf{AC}^{0}$-computable.

\item Suppose that $z$ is not adjacent to $x_{1}$. As each row, each column, and each value induce an $n$-clique, there exist unique $i, j, k \in [n]$ such that $z$ is adjacent to $x_{j}, y_{i}, x_{k}$, and $y_{k}$. Unless $i = j = k$, we set $\ell_{ij} = k$. If $i = j = k$, we do nothing at this step and defer to step 7. This case is $\textsf{AC}^{0}$-computable.
\end{enumerate}

\item We note that step 6b does not account for diagonal entries where $\ell_{ii} = i$. To this end, we do the following. For each $i \geq 2$, we (in parallel) set $\ell_{ii}$ to be the value that does not appear in row $i$. This step is $\textsf{AC}^{0}$-computable.
\end{enumerate}

As we have a finite number of steps, each of which are $\textsf{AC}^{0}$-computable, it follows that we may recover a Latin square from $G$ using an $\textsf{AC}^{0}$ circuit.
\end{proof}

\begin{proposition}
Isomorphism testing of pseudo-Latin square graphs is in $\beta_{2}\textsf{FOLL}$. In particular, for any $i, c \geq 0$, \algprobm{GI} is not $\beta_{i}\textsf{FO}((\log \log n)^{c})$-reducible to isomorphism testing of pseudo-Latin square graphs.
\end{proposition}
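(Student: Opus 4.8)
The plan is to reduce isomorphism testing of pseudo-Latin square graphs to the case of genuine Latin square graphs — which has already been handled — paying only a constant overhead to dispose of the finitely many exceptional graphs. Recall from the preliminaries that a pseudo-Latin square graph is a strongly regular graph with parameters $(n^{2}, 3(n-1), n, 6)$, and that Bruck's theorem guarantees every pseudo-Latin square graph with $n > 23$ is in fact a Latin square graph. Hence the only pseudo-Latin square graphs that fail to be Latin square graphs have $n \leq 23$, i.e.\ at most $529$ vertices.

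Given two pseudo-Latin square graphs $G_{1}$ and $G_{2}$, I would first compare their vertex counts in $\textsf{AC}^{0}$: if $|V(G_{1})| \neq |V(G_{2})|$ the graphs are non-isomorphic, so we reject. Otherwise write $|V(G_{i})| = n^{2}$. The case split ``$n \leq 23$ versus $n > 23$'' is decided in $\textsf{AC}^{0}$ simply by testing whether $|V(G_{1})| \leq 529$, since comparing the vertex count against a fixed constant is an $\textsf{AC}^{0}$ predicate and does not require extracting $n$ exactly.

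In the case $n \leq 23$ each graph has at most $529$ vertices, so testing $G_{1} \cong G_{2}$ is a problem of bounded size; its adjacency data depends on only $O(1)$ input bits, so a hard-wired constant-size $\textsf{AC}^{0}$ circuit decides it. In the case $n > 23$, Bruck's theorem tells us that $G_{1}$ and $G_{2}$ are both genuine Latin square graphs, so I would invoke the pipeline already established for that class: recover Latin squares from $G_{1}$ and $G_{2}$ in $\textsf{AC}^{0}$ via \Lem{RecoverLSGAC0}, place them into their main isotopy classes in $\textsf{AC}^{0}$ (\Rmk{rmk:MainClass}), and test main class isotopy, which lies in $\beta_{2}\textsf{FOLL}$ by \Cor{cor:LatinSquareGI}. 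Since $\textsf{AC}^{0} \subseteq \beta_{2}\textsf{FOLL}$ and the two branches are selected by an $\textsf{AC}^{0}$ predicate, the whole procedure runs in $\beta_{2}\textsf{FOLL}$.

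The consequence is then standard, following the same pattern as the earlier corollaries: for any $i, c \geq 0$, $\beta_{i}\textsf{FO}((\log \log n)^{c})$ cannot compute \algprobm{Parity} \cite{ChattopadhyayToranWagner}, whereas \algprobm{Parity} is $\textsf{AC}^{0}$-reducible to \algprobm{GI} \cite{Toran}; composing a hypothetical $\beta_{i}\textsf{FO}((\log \log n)^{c})$-reduction from \algprobm{GI} to pseudo-Latin square graph isomorphism with the $\beta_{2}\textsf{FOLL}$ algorithm above would place \algprobm{Parity} in $\beta_{i}\textsf{FO}((\log \log n)^{c})$, a contradiction. The only genuinely delicate point is isolating the exceptional graphs; but because Bruck's bound is an absolute constant, these collapse to a bounded-size instance contributing nothing beyond $\textsf{AC}^{0}$, and the remainder is simply the observation that the Latin square graph machinery applies verbatim once $n > 23$.
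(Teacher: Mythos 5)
Your proposal is correct and follows essentially the same route as the paper's own proof: dispose of the bounded exceptional cases $n \leq 23$ in $\textsf{AC}^{0}$, invoke Bruck's theorem for $n > 23$, recover Latin squares via Lemma~\ref{RecoverLSGAC0}, normalize via Remark~\ref{rmk:MainClass}, and test main class isotopy in $\beta_{2}\textsf{FOLL}$ (the paper cites Theorem~\ref{LatinSquareIsotopy} together with Miller's equivalence between graph isomorphism and main class isotopy, which is the more precise reference than Corollary~\ref{cor:LatinSquareGI}), concluding with the standard \algprobm{Parity} argument. Your treatment of the bounded case and the $\textsf{AC}^{0}$ case split is slightly more explicit than the paper's, but the argument is the same.
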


\begin{proof}
We may handle the cases when $n \leq 23$ in $\textsf{AC}^{0}$. So suppose $n \geq 23$, and let $G$ and $H$ be pseudo-Latin square graphs. As $n \geq 23$, we have by Bruck that $G$ and $H$ are Latin square graphs \cite{Bruck}. By Lemma \ref{RecoverLSGAC0}, we may in $\textsf{AC}^{0}$ recover canonical Latin squares $L_{G}$ and $L_{H}$ corresponding to $G$ and $H$. Now by \cite[Lemma~3]{MillerTarjan}, $G \cong H$ if and only if $L_{G}$ and $L_{H}$ are main class isotopic. By Remark~\ref{rmk:MainClass}, we may place $L_{G}$ (respectively, $L_{H}$) into a normal form corresponding to its main isotopy class in $\textsf{AC}^{0}$. By Theorem \ref{LatinSquareIsotopy}, we can test whether $L_{G}$ and $L_{H}$ are isotopic in $\beta_{2}\textsf{FOLL}$. Chattopadhyay, Tor\'an, and Wagner showed that $\beta_{2}\textsf{FOLL}$ cannot compute \algprobm{Parity} \cite{ChattopadhyayToranWagner}. The result now follows.
\end{proof}

\section{Isomorphism Testing of Steiner Designs}
In this section, we show that for any $i, c \geq 0$, $\algprobm{GI}$ is not $\beta_{i}\textsf{FO}((\log \log n)^{c})$-reducible to isomorphism testing of several families of Steiner designs.

\subsection{Nets}

In this section, we show that for any $i, c \geq 0$, $\algprobm{GI}$ is not $\beta_{i}\textsf{FO}((\log \log n)^{c})$-reducible to isomorphism testing of nets or the corresponding strongly regular graphs. We note that projective and affine planes are special cases of nets. 

\begin{theorem}
Deciding whether two $k$-nets are isomorphic is in $\beta_{2}\textsf{L} \cap \beta_{2}\textsf{FOLL}$.
\end{theorem}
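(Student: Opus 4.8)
The plan is to reduce isomorphism testing of $k$-nets to a problem we have already placed in $\beta_{2}\textsf{L} \cap \beta_{2}\textsf{FOLL}$, namely \algprobm{Latin Square Isotopy} (via \Thm{LatinSquareIsotopy}). The key structural fact is the classical correspondence between nets and sets of mutually orthogonal Latin squares (MOLS): a net $\mathcal{N}(n,k)$ is equivalent to a collection of $k-2$ mutually orthogonal Latin squares of order $n$, where two of the $k$ parallel classes are used to coordinatize the $n^{2}$ points as an $n \times n$ grid (rows and columns), and each of the remaining $k-2$ parallel classes determines one Latin square by recording, for each point, which line of that class passes through it. First I would make this correspondence explicit and verify that it is $\textsf{AC}^{0}$-computable in both directions: given the point-block incidence matrix, we can guess (or canonically select) two parallel classes to serve as rows and columns, and then fill in the $k-2$ Latin squares in parallel, since each cell's value is determined by a single incidence lookup.

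The correct notion of equivalence is the next thing to pin down. Two nets are isomorphic precisely when their associated families of MOLS are equivalent under the natural action that simultaneously permutes rows, columns, symbols, and the Latin squares themselves (main-class / paratopy-type equivalence lifted to the whole family). For $k \leq n$ this is the statement that a net graph uniquely determines the net, which the preliminaries already record following Bruck \cite{Bruck} and Miller \cite{MillerTarjan}. So the reduction I would carry out is: recover the family of Latin squares in $\textsf{AC}^{0}$, then test the appropriate isotopy/equivalence relation between the two families. The technical heart is that testing equivalence of a \emph{family} of $k-2 = O(n)$ mutually orthogonal Latin squares still fits in $\beta_{2}\textsf{L} \cap \beta_{2}\textsf{FOLL}$: I would guess cube generating sequences for the underlying algebraic structure and, exactly as in the proof of \Thm{LatinSquareIsotopy}, verify in $\textsf{L} \cap \textsf{FOLL}$ that the guessed bijections extend to an equivalence respecting all $k-2$ squares simultaneously, using \Lem{lem:Surjectivity} to check bijectivity and \Lem{LatinSquareBijection} to check that the maps are well defined.

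I expect the main obstacle to be bookkeeping around the number of parallel classes and the fact that we must fix a consistent pair of "coordinate" parallel classes across both nets. Because $k$ can be as large as $n$, there are up to $\binom{kn}{2}$ candidate pairs of lines to serve as the origin's row and column, but these can all be examined in parallel, and the guess of which parallel classes play the coordinate role can be absorbed into the $O(\log^{2} n)$ nondeterministic bits already available in the $\beta_{2}$ prefix. A second subtlety is ensuring that the equivalence we test on the Latin-square side corresponds exactly to net isomorphism and not something coarser or finer; here I would lean on the same normal-form argument sketched in \Rmk{rmk:MainClass}, placing each recovered square into a canonical form in $\textsf{AC}^{0}$ so that the residual check is a genuine isotopy/isomorphism test handled by \Thm{LatinSquareIsotopy}. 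Once the reduction is shown to be $\textsf{AC}^{0}$-computable and the target lies in $\beta_{2}\textsf{L} \cap \beta_{2}\textsf{FOLL}$, the theorem follows, and the usual corollary that \algprobm{GI} is not $\beta_{i}\textsf{FO}((\log\log n)^{c})$-reducible to net isomorphism comes for free since $\beta_{2}\textsf{FOLL}$ cannot compute \algprobm{Parity} \cite{ChattopadhyayToranWagner}.
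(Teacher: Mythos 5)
Your overall strategy---guess parallel classes to coordinatize the net, run the cube-generating-sequence machinery on the algebraic core, and verify compatibility with the remaining incidence structure in small depth---is the same strategy the paper uses, but your decomposition differs in one place, and that is exactly where a gap appears. The paper guesses \emph{three} pairwise non-parallel lines in each net, identifies the three parallel classes containing them in $\textsf{AC}^{0}$, extracts from the resulting degree-$3$ subnets a pair of quasigroups, obtains an explicit point bijection $\varphi$ from the machinery of \cite{ChattopadhyayToranWagner}, and then checks in $\textsf{L} \cap \textsf{FOLL}$ that $\varphi$ carries every line of $\mathcal{N}_{1}$ to a line of $\mathcal{N}_{2}$. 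This final check is indifferent to \emph{which} class a given parallel class gets mapped to, so the matching between the remaining parallel classes of the two nets never has to be produced explicitly.

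Your version coordinatizes with \emph{two} classes and then asks for an equivalence of the two families of $k-2$ MOLS. As you correctly note, the right equivalence notion permutes the squares among themselves; the gap is that you never say where this permutation comes from. It cannot be guessed: with $k-2$ as large as $n-1$, specifying it takes $\Theta(n \log n)$ nondeterministic bits, far beyond the $O(\log^{2} n)$ available in the $\beta_{2}$ prefix. And if you instead test equivalence with the identity matching of squares, the reduction is simply wrong, since a net isomorphism may send a non-coordinate class of $\mathcal{N}_{1}$ to any non-coordinate class of $\mathcal{N}_{2}$. Fortunately the repair is cheap, but it must be stated: once the row and column bijections $(\alpha, \beta)$ are fixed, pairwise orthogonality forces each $L_{i}$ to be symbol-equivalent under $(\alpha, \beta)$ to \emph{at most one} $L_{j}'$ (two orthogonal squares cannot differ by a mere relabeling of symbols, as that would produce only $n$ of the required $n^{2}$ symbol pairs), so the matching is uniquely determined; it can be computed in $\textsf{AC}^{0}$ by testing all pairs $(i,j)$ with \Lem{lem:Surjectivity} and then checking that every $i$ is matched. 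Alternatively, collapse your final check into the paper's: verify directly that the point map $(r,c) \mapsto (\alpha(r), \beta(c))$ sends lines to lines. With either repair, and with the coordinate-class guessing you already describe, your argument goes through and is in substance the paper's proof in different clothing.
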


\begin{proof}
For $k = 0, 1, 2$, the pair $(k, n)$ determines the net uniquely, and so deciding isomorphism is trivial in these cases. So assume that $n+1 \geq k \geq 3$. Let $\mathcal{N}_{1}(n, k), \mathcal{N}_{2}(n, k)$ be nets. We now start by guessing three non-parallel lines $\ell_{a}, \ell_{b}, \ell_{c} \in L(\mathcal{N}_{1})$ and three non-parallel lines $\ell_{a}^{\prime}, \ell_{b}^{\prime}, \ell_{c}^{\prime} \in L(\mathcal{N}_{2})$. By definition, no two lines in the same parallel class share any points in common, and two lines in different classes share exactly one point in common. As there are $kn$ lines in $\mathcal{N}_{i}$, we only require $O(\log(kn))$ bits to identify a given line. As $k \leq n+1$, in fact only $O(\log(n))$ bits are required. 

We may identify whether two lines are disjoint in $\textsf{AC}^{0}$. In particular, we may check in $\textsf{AC}^{0}$ whether $\ell_{a}, \ell_{b}$, and $\ell_{c}$ (respectively, $\ell_{a}^{\prime}, \ell_{b}^{\prime}$, and $\ell_{c}^{\prime}$) belong to different parallel classes. Suppose now that $\ell_{a}, \ell_{b}$, and $\ell_{c}$ (respectively, $\ell_{a}^{\prime}, \ell_{b}^{\prime}$, and $\ell_{c}^{\prime}$) belong to different parallel classes. As there are $\binom{kn}{2}$ such pairs to check in a given $\mathcal{N}_{i}$, we may identify in $\textsf{AC}^{0}$ the lines belonging to the three parallel classes $a, b, c$ in each $\mathcal{N}_{i}$.

Now our three parallel classes determine a net $X_{1}(n, 3)$ in $\mathcal{N}_{1}$ and a net $X_{2}(n, 3)$ in $\mathcal{N}_{2}$. We note that a net of degree $3$ identifies a quasigroup (Latin square) up to isomorphism. As \algprobm{Quasigroup Isomorphism} is in $\beta_{2}\textsf{L} \cap \beta_{2}\textsf{FOLL}$ \cite[Theorem 3.4]{ChattopadhyayToranWagner}, we may now in $\beta_{2}\textsf{L} \cap \beta_{2}\textsf{FOLL}$ decide whether $X_{1} \cong X_{2}$. We note that \cite[Theorem 3.4]{ChattopadhyayToranWagner} actually yields an isomorphism $\varphi : X_{1} \to X_{2}$. We may now in $\textsf{L} \cap \textsf{FOLL}$ check whether $\varphi$ extends to an isomorphism of $\mathcal{N}_{1}$ and $\mathcal{N}_{2}$. The result follows. 
\end{proof}

\begin{corollary}
For any $i, c \geq 0$, $\algprobm{GI}$ is not $\beta_{i}\textsf{FO}((\log \log n)^{c})$-reducible to isomorphism testing of two $k$-nets.
\end{corollary}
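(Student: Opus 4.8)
The plan is to derive this corollary exactly as the analogous statements for \algprobm{Latin Square Isotopy} and the other families are derived earlier: by transferring the unconditional complexity upper bound just established into a lower bound against weak reductions. Three ingredients are needed. First, the preceding theorem gives that isomorphism testing of two $k$-nets lies in $\beta_{2}\textsf{L} \cap \beta_{2}\textsf{FOLL}$, so in particular in $\beta_{2}\textsf{FOLL} = \beta_{2}\textsf{FO}(\log \log n)$. Second, Tor\'an showed that \algprobm{Parity} is $\textsf{AC}^{0}$-reducible to \algprobm{GI} \cite{Toran}. Third, Chattopadhyay, Tor\'an, \& Wagner proved that for every $i, c \geq 0$ the class $\beta_{i}\textsf{FO}((\log \log n)^{c})$ cannot compute \algprobm{Parity} \cite{ChattopadhyayToranWagner}.

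With these in hand, I would argue by contradiction. Suppose that for some $i, c \geq 0$ there is a $\beta_{i}\textsf{FO}((\log \log n)^{c})$-computable many-one reduction from \algprobm{GI} to isomorphism testing of $k$-nets. Composing Tor\'an's $\textsf{AC}^{0}$ reduction from \algprobm{Parity} to \algprobm{GI}, this hypothesized reduction, and the $\beta_{2}\textsf{FOLL}$ algorithm for net isomorphism supplied by the theorem, yields a computation of \algprobm{Parity}. I would then check that this composite computation still lives in the nondeterministic bounded-depth hierarchy, i.e.\ in $\beta_{j}\textsf{FO}((\log \log n)^{c'})$ for some constants $j, c'$ depending only on $i, c$. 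Concretely, the depths add as $O(1) + O((\log \log n)^{c}) + O(\log \log n) = O((\log \log n)^{\max(c,1)})$, and the existentially quantified advice concatenates to $O(\log^{i} n) + O(\log^{2} n) = O(\log^{\max(i,2)} n)$ bits. This contradicts the third ingredient, completing the proof.

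The only point requiring genuine care is the composition bookkeeping, since each reduction may blow the input size up polynomially before the next stage runs. Here I would use that a polynomial blowup $n \mapsto n^{O(1)}$ changes the relevant resource bounds only by constant factors: $\log \log (n^{O(1)}) = O(\log \log n)$ and $\log^{i}(n^{O(1)}) = O(\log^{i} n)$, so neither the depth budget nor the advice budget escapes the hierarchy. I would also note the standard point that the uniform $\textsf{AC}$ circuit families underlying $\textsf{FO}(d(n))$ compose while preserving logspace uniformity. This closure is implicit in the treatment of the earlier corollaries, and once it is granted the present statement follows with no new combinatorial content beyond the net-isomorphism upper bound already proved.
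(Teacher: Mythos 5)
Your proposal is correct and follows exactly the paper's (largely implicit) argument: the paper states this corollary as an immediate consequence of the preceding theorem placing $k$-net isomorphism in $\beta_{2}\textsf{L} \cap \beta_{2}\textsf{FOLL}$, combined with Tor\'an's $\textsf{AC}^{0}$-reduction from \algprobm{Parity} to \algprobm{GI} and the result of Chattopadhyay, Tor\'an, \& Wagner that $\beta_{i}\textsf{FO}((\log \log n)^{c})$ cannot compute \algprobm{Parity}. Your explicit composition bookkeeping (depth and advice budgets surviving polynomial input blowup) simply spells out what the paper leaves to the reader.
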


Miller previously showed that isomorphism testing of nets and the corresponding net graphs are equivalent under polynomial-time reductions when $n > (k-1)^{2}$ \cite[Theorem~8]{MillerTarjan}. A closer analysis shows that this equivalence holds under $\textsf{TC}^{0}$-reductions in general; and that when $k$ is bounded, the equivalence holds under $\textsf{AC}^{0}$-reductions.

\begin{lemma} \label{LemmaReconstructNets}
Suppose that $G(V, E)$ is a $k$-net graph of order $n$ and $n > (k-1)^{2}$. We can reconstruct the net $\mathcal{N}(n, k)$ associated with $G$ in $\textsf{TC}^{0}$. If $k$ is bounded, we reconstruct the net $\mathcal{N}(n, k)$ associated with $G$ in $\textsf{AC}^{0}$.
\end{lemma}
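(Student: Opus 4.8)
The plan is to treat the $n^{2}$ vertices of $G$ as the points of the net and to recover the $kn$ lines as certain cliques, after which the point-block incidence matrix is immediate. The starting observation is that each line is a set of $n$ pairwise-collinear points, hence induces a clique of size $n$ in $G$, and that every edge $\{u,v\}$ lies on a \emph{unique} line $\ell_{uv}$ (two distinct points lie on at most one common line). So it suffices, for each edge $\{u,v\}$, to recover $\ell_{uv}$; collecting the distinct lines obtained this way (say, indexing each line canonically by the lexicographically least pair of points it contains) yields all $kn$ lines and the full incidence structure. First I would compute the parameters $n=\sqrt{|V|}$ and $k=\deg(u)/(n-1)$ (for bounded $k$ these are constants), and dispose of $k\le 2$ as the trivial cases.

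To recover $\ell_{uv}$, I would work inside the common neighborhood $C = N(u)\cap N(v)$, which has $\lambda = n-2+(k-1)(k-2)$ vertices. Exactly $n-2$ of these are the remaining points of $\ell_{uv}$ (the ``line-mates''), and the other $(k-1)(k-2)$ are ``off-line'' common neighbors, each arising as the unique intersection of a line through $u$ (other than $\ell_{uv}$) with a line through $v$ (other than $\ell_{uv}$) in a different parallel class. The key combinatorial step is to separate these two types by how they sit inside $G[C]$. Using that a point off $\ell_{uv}$ is collinear with exactly $k-1$ of the points of $\ell_{uv}$ (one per parallel class other than that of $\ell_{uv}$), one checks that each off-line common neighbor is adjacent to only $k-3$ line-mates, hence is \emph{non}-adjacent to at least $n-k+1$ vertices of $C$; whereas a line-mate is adjacent to all of the other $n-3$ line-mates and so is non-adjacent to at most $(k-1)(k-2)$ vertices of $C$. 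The hypothesis $n>(k-1)^{2}$ is exactly what makes these two ranges disjoint, since $n>(k-1)^{2}$ is equivalent to $n-k+1>(k-1)(k-2)$. Thus $w\in C$ lies on $\ell_{uv}$ if and only if $w$ has at most $(k-1)(k-2)$ non-neighbors in $C$, giving $\ell_{uv}=\{u,v\}\cup\{w\in C : w \text{ has } \le (k-1)(k-2) \text{ non-neighbors in } C\}$.

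This characterization is what drives the complexity bounds. In general, counting the non-neighbors of $w$ within $C$ and comparing the count against the threshold $(k-1)(k-2)$ is a single threshold computation, implementable with \textsf{Majority} gates; together with the $\textsf{TC}^{0}$ arithmetic needed to compute $n,k$ and to deduplicate lines, the whole reconstruction is $\textsf{TC}^{0}$. When $k$ is bounded, the threshold $(k-1)(k-2)$ is a constant, so the counting can be avoided: $w$ fails the test iff there exist $(k-1)(k-2)+1$ vertices of $C$ all non-adjacent to $w$, which is a disjunction over polynomially many constant-size conjunctions and hence $\textsf{AC}^{0}$. All remaining bookkeeping---forming $C$, assembling each clique, and writing out the incidence bits $[\,p\in\ell\,]$ after selecting canonical representatives---is a bounded number of $\textsf{AC}^{0}$ steps. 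I expect the main obstacle to be the combinatorial separation in the second paragraph: carefully counting the on-line versus off-line common neighbors and their mutual adjacencies to pin down the exact threshold, and verifying that $n>(k-1)^{2}$ is precisely the inequality guaranteeing a clean gap. Once that gap is established, the split between $\textsf{TC}^{0}$ (count to a $k$-dependent threshold) and $\textsf{AC}^{0}$ (bounded $k$, so a constant threshold testable by a bounded existential) is exactly the dichotomy asserted in the lemma.
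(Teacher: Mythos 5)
Your proposal is correct and follows essentially the same route as the paper: for each edge, form the common neighborhood and separate the on-line points from the off-line common neighbors by an adjacency-count threshold, with the gap coming exactly from $n>(k-1)^{2}$, yielding $\textsf{TC}^{0}$ in general and $\textsf{AC}^{0}$ (constant threshold via a bounded existential) when $k$ is bounded. The only difference is cosmetic---you count non-neighbors within $N(u)\cap N(v)$ while the paper counts degrees within the common neighborhood together with $\{u,v\}$---and your version is in fact slightly more carefully argued (e.g., the explicit equivalence $n>(k-1)^{2}\iff n-k+1>(k-1)(k-2)$).
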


\begin{proof}
Suppose that $x_{1}, x_{2} \in V(G)$ are adjacent. As there are $kn$ lines, it suffices to show that in $\textsf{AC}^{0}$, we can identify the remaining vertices of $V(G)$ that are on the same line as $x_{1}, x_{2}$. We first note that we can, in $\textsf{AC}^{0}$, identify the vertices adjacent to both $x_{1}$ and $x_{2}$. 

Let $H$ be the subgraph induced by these vertices together with $x_{1}$ and $x_{2}$. We note that $H$ contains the maximum clique (of $n$ vertices) containing both $x_{1}$ and $x_{2}$. The vertices of this clique have degree $n-2$. As two adjacent vertices have $(n-2) + (k-1)(k-2)$ neighbors in common, there are $(k-1)(k-2)$ vertices of $H$ that do not belong to this clique. Recall that a net has $k$ parallel classes, and any two lines in a given parallel class have empty intersection. Furthermore, two lines from different parallel classes share exactly one point in common. Thus, each nonclique vertex of $H$ is adjacent to exactly $(k-1)$ elements of the clique. Thus, each nonclique vertex of $H$ has degree at most:
\[
(k-1) + (k-1)(k-2) - 1 = (k-1)^{2} - 1.
\]

In general, using the difference in degrees, we may identify the clique and nonclique vertices in $\textsf{TC}^{0}$. If $k$ is bounded, then we may identify the clique and nonclique vertices in $\textsf{AC}^{0}$. The result follows.
\end{proof}

We combine Lemma \ref{LemmaReconstructNets} with Bruck's result \cite[Theorem~7]{Bruck} that for fixed $k$, pseudo-net graphs with sufficiently many vertices are net graphs to obtain the following.

\begin{corollary}
For fixed $k$, isomorphism testing of pseudo-net graphs of order $n$ and degree $k$ is in $\beta_{2}\textsf{L} \cap \beta_{2}\textsf{FOLL}$. In particular, for any $i, c \geq 0$, \algprobm{GI} is not $\beta_{i}\textsf{FO}((\log \log n)^{c})$-reducible to isomorphism testing of pseudo-net graphs of order $n$ and degree $k$.
\end{corollary}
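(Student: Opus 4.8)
The plan is to mirror the argument used for pseudo-Latin square graphs: reduce isomorphism testing of pseudo-net graphs to isomorphism testing of the underlying nets in $\textsf{AC}^0$, and then invoke our earlier theorem that $k$-net isomorphism lies in $\beta_2\textsf{L} \cap \beta_2\textsf{FOLL}$. Fix $k$. First I would dispose of the finitely many exceptional orders. Bruck's result \cite[Theorem~7]{Bruck} guarantees that for $k$ fixed there is a threshold $n_0 = n_0(k)$ such that every pseudo-net graph of degree $k$ and order $n \geq n_0$ is genuinely a net graph, while Lemma~\ref{LemmaReconstructNets} carries the separate hypothesis $n > (k-1)^2$. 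Since $k$ is fixed, both thresholds are constants, so for every $n < \max(n_0, (k-1)^2 + 1)$ the input graphs have bounded size and isomorphism can be decided by a constant-size (hence $\textsf{AC}^0$) table lookup.

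For $n$ at or above this threshold, let $G$ and $H$ be pseudo-net graphs of order $n$ and degree $k$. By Bruck's theorem both are net graphs, and since $n > (k-1)^2$ with $k$ fixed, Lemma~\ref{LemmaReconstructNets} lets me reconstruct the associated nets from $G$ and $H$ in $\textsf{AC}^0$; the $\textsf{AC}^0$ (rather than $\textsf{TC}^0$) bound is precisely where fixing $k$ is used, as the clique/nonclique degree gap is then bounded and can be tested by unbounded fan-in gates alone. Because a net graph uniquely determines its net \cite{Bruck}, we have $G \cong H$ if and only if the reconstructed nets are isomorphic. I then feed these reconstructed nets into the theorem that isomorphism testing of $k$-nets is in $\beta_2\textsf{L} \cap \beta_2\textsf{FOLL}$.

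The one point requiring care is the composition. The net-isomorphism procedure begins by non-deterministically guessing line indices and cube generating sequences, but the net it references is not part of the raw input: it is the output of the $\textsf{AC}^0$ reconstruction. I would handle this by guessing all $O(\log^2 n)$ non-deterministic bits up front, running the deterministic $\textsf{AC}^0$ reconstruction, and only then performing the $\textsf{L} \cap \textsf{FOLL}$ verification against the reconstructed net. Prepending a deterministic $\textsf{AC}^0$ stage to a $\beta_2(\textsf{L} \cap \textsf{FOLL})$ computation keeps the whole procedure in $\beta_2\textsf{L} \cap \beta_2\textsf{FOLL}$. This composition-of-reductions bookkeeping, rather than any new combinatorial ingredient, is the main (and fairly mild) obstacle.

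For the second sentence of the corollary I would argue exactly as elsewhere in the paper. Since the problem lies in $\beta_2\textsf{FOLL}$, which cannot compute \algprobm{Parity} \cite{ChattopadhyayToranWagner}, whereas \algprobm{Parity} is $\textsf{AC}^0$-reducible to \algprobm{GI} \cite{Toran}, no $\beta_i\textsf{FO}((\log\log n)^c)$-reduction from \algprobm{GI} to isomorphism testing of pseudo-net graphs can exist: such a reduction, composed with the upper bound just established, would place \algprobm{Parity} inside $\beta_2\textsf{FOLL}$, a contradiction. The result follows.
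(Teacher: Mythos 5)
Your proposal is correct and takes essentially the same route as the paper, which obtains this corollary precisely by combining Lemma~\ref{LemmaReconstructNets} (the $\textsf{AC}^{0}$ reconstruction for fixed $k$) and Bruck's theorem with the earlier $\beta_{2}\textsf{L} \cap \beta_{2}\textsf{FOLL}$ bound for $k$-net isomorphism. Your handling of the finitely many small orders and the composition bookkeeping mirrors the paper's analogous explicit proof for pseudo-Latin square graphs.
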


\begin{remark}
Let $p(x) := (1/2)x^{4} + x^{3} + x^{2} + (3/2)x$. Bruck \cite[Theorem~7]{Bruck} showed that if $n > p(k-1)$ and $k > 1$, then a pseudo-net graph of order $n$ and degree $k$ is a net graph.
\end{remark}

\subsection{Steiner Triple Systems}

Using the standard connection between Steiner triple systems and quasigroups in tandem with \cite{ChattopadhyayToranWagner}, we observe the following.

\begin{observation} \label{ThmSTSIsomorphism}
Deciding whether two Steiner triple systems are isomorphic is in $\beta_{2}\textsf{L} \cap \beta_{2}\textsf{FOLL}$.
\end{observation}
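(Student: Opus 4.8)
The plan is to exhibit an $\textsf{AC}^{0}$-computable reduction from \algprobm{Steiner Triple System Isomorphism} to \algprobm{Quasigroup Isomorphism}, and then invoke the bound of Chattopadhyay, Tor\'an, and Wagner that $\algprobm{Quasigroup Isomorphism} \in \beta_{2}\textsf{L} \cap \beta_{2}\textsf{FOLL}$ \cite{ChattopadhyayToranWagner}. The classical correspondence (due to Miller \cite{MillerTarjan}) associates to a Steiner triple system $\mathcal{D}$ on point set $X$ a quasigroup $(X, \star)$ via the \emph{idempotent, commutative} (Steiner) quasigroup rule: set $x \star x = x$ for every $x \in X$, and for distinct $x, y \in X$ set $x \star y = z$, where $\{x, y, z\}$ is the unique block of $\mathcal{D}$ containing both $x$ and $y$. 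The defining property of a Steiner triple system (every pair of points lies in exactly one block) guarantees that $z$ is well-defined, and one checks that the resulting operation is indeed a quasigroup (each row and column of the table is a permutation). Two Steiner triple systems are isomorphic if and only if their associated Steiner quasigroups are isomorphic, since the blocks are exactly recoverable from the multiplication table as the triples $\{x, y, x \star y\}$ with $x \neq y$.

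First I would observe that, given $\mathcal{D}$ by its point-block incidence matrix, each entry of the quasigroup Cayley table can be computed independently and in parallel. For the diagonal entry we simply output the point itself, which is $\textsf{AC}^{0}$. For an off-diagonal entry $(x, y)$ with $x \neq y$, we must locate the unique block $B$ with $x, y \in B$ and then output the third point $z \in B \setminus \{x, y\}$. Locating $B$ amounts to checking, over all blocks $B$ (indexed by the columns of the incidence matrix), the condition that the incidence matrix has a $1$ in positions $(x, B)$ and $(y, B)$; this is a single unbounded fan-in $\textsf{OR}$ over an $\textsf{AND}$ of two input bits, hence $\textsf{AC}^{0}$. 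Extracting $z$ is then an $\textsf{AC}^{0}$ selection: $z$ is the unique point incident to the located block $B$ other than $x$ and $y$. Since blocks have size exactly $3$, this last step reads only a bounded number of relevant incidence bits once $B$ is fixed, so the whole table is produced by a single $\textsf{AC}^{0}$ circuit.

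With the $\textsf{AC}^{0}$ reduction in hand, the argument concludes by composing it with the $\beta_{2}\textsf{L} \cap \beta_{2}\textsf{FOLL}$ algorithm for \algprobm{Quasigroup Isomorphism}. Because $\textsf{AC}^{0} \subseteq \textsf{L} \cap \textsf{FOLL}$, precomposing the quasigroup-isomorphism procedure with the reduction keeps us inside $\beta_{2}\textsf{L} \cap \beta_{2}\textsf{FOLL}$: the nondeterministic bits of the outer procedure are unchanged, and the deterministic $\textsf{L}$ (respectively $\textsf{FOLL}$) computation can first build the Cayley tables and then run the isomorphism test. Thus deciding whether two Steiner triple systems are isomorphic lies in $\beta_{2}\textsf{L} \cap \beta_{2}\textsf{FOLL}$.

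I expect the main point requiring care, rather than a genuine obstacle, to be the verification that the index-manipulation steps (locating the unique block through a pair of points, then selecting the third point of that block) are honestly $\textsf{AC}^{0}$ and not merely $\textsf{TC}^{0}$; this hinges on the uniqueness guarantees of the Steiner condition and on the block size being the fixed constant $3$, so that each Cayley-table entry depends on selecting from a bounded-size structure once its host block is identified. No counting is needed because the "exactly one block" and "exactly three points" conditions let us replace what might naively look like threshold computations with plain disjunctions and conjunctions over the incidence matrix.
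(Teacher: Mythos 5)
Your proof is correct and follows the same route as the paper: both construct the idempotent Steiner quasigroup ($x \star x = x$, $x \star y = z$ for the unique block $\{x,y,z\}$) in $\textsf{AC}^{0}$ and then invoke the $\beta_{2}\textsf{L} \cap \beta_{2}\textsf{FOLL}$ bound for \algprobm{Quasigroup Isomorphism} from \cite{ChattopadhyayToranWagner}. The paper states the $\textsf{AC}^{0}$-computability and the isomorphism correspondence without elaboration, whereas you spell out the circuit-level details and the composition argument, but the underlying proof is identical.
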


\begin{proof}
Let $\mathcal{S}$ be a Steiner triple system of order $n$. We define a quasigroup $Q$ on the set $[n]$, with the multiplication operation satisfying the following: (i) for each $x \in [n]$, define $x^{2} = x$, and (ii) for each block $\{a, b, c\}$, define $ab = c$ (note that as the blocks are unordered, all such products $ba = c, ac = ca = b, bc = cb = a$ are required). The Steiner triple system determines $Q$ up to isomorphism. In particular, this construction is $\textsf{AC}^{0}$-computable. As \algprobm{Quasigroup Isomorphism} belongs to $\beta_{2}\textsf{L} \cap \beta_{2}\textsf{FOLL}$ \cite{ChattopadhyayToranWagner}, it follows that deciding whether two Steiner triple systems are isomorphic is in $\beta_{2}\textsf{L} \cap \beta_{2}\textsf{FOLL}$.
\end{proof}

\begin{proposition} \label{ReconstructSteinerDesign}
Let $G$ be a block-incidence graph on $n$ vertices derived from a Steiner $2$-design, in which each block has $k$ points and $\sqrt{n} - 2 > (k-1)^{2}$. We can reconstruct the Steiner $2$-design in $\textsf{TC}^{0}$. Furthermore, if $k$ is bounded, then we may reconstruct the Steiner $2$-design in $\textsf{AC}^{0}$.
\end{proposition}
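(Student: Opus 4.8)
The plan is to mimic the reconstruction strategy from \Lem{LemmaReconstructNets}, adapting it to the block-intersection graph of a Steiner $2$-design. The essential structural observation is that in such a graph, the blocks through a fixed point form a clique, and every vertex (block) lies in exactly $k$ such point-cliques. The goal is to recover the point-block incidence structure by identifying, for each point, the maximal clique of blocks passing through it. First I would fix two adjacent vertices $B_1, B_2$ (two intersecting blocks). Since the design has $\lambda = 1$, two distinct blocks meet in at most one point, so $B_1$ and $B_2$ share a unique common point $p$; the set of blocks through $p$ forms a clique, and this is the clique I want to isolate. In $\textsf{AC}^0$ I can compute the set of vertices adjacent to both $B_1$ and $B_2$ and let $H$ be the induced subgraph on these common neighbors together with $B_1, B_2$.

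The main work, exactly as in the net case, is separating the clique vertices (blocks genuinely through $p$) from the ``spurious'' common neighbors that intersect $B_1$ and $B_2$ in two \emph{different} points. I would argue via a degree computation inside $H$: the blocks through $p$ form a clique whose size is determined by the design's replication number $r = (v-1)/(k-1)$, and such vertices have a large degree within $H$, whereas a spurious common neighbor $B_3$ meets only a bounded number of the point-$p$ blocks (governed by $k$ and the number of lines through the two points $B_3$ shares with $B_1, B_2$). The hypothesis $\sqrt{n} - 2 > (k-1)^2$ plays the role of Bruck's bound $n > (k-1)^2$ in \Lem{LemmaReconstructNets}: it guarantees a strict gap between the degree of a clique vertex and the degree of a non-clique vertex inside $H$, so a threshold comparison separates them. (Here $n$ is the number of \emph{blocks} = vertices, and the number of points is on the order of $\sqrt{n}$, which is why the bound is stated with $\sqrt{n}$ rather than $n$.) This degree comparison is $\textsf{TC}^0$-computable in general, since it requires counting neighbors and comparing against a threshold; when $k$ is bounded, the degree of every non-clique vertex is bounded by the constant $(k-1)^2 - 1$, so the separation depends on only boundedly many bits and becomes $\textsf{AC}^0$-computable.

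Once the point-cliques are identified, reconstructing the design is routine and parallelizable: each maximal point-clique corresponds to a point of the design, and a block $B$ is incident to point $p$ precisely when the vertex $B$ lies in the clique associated to $p$. I can carry out the clique-identification step in parallel over all adjacent pairs $(B_1,B_2)$, collect the resulting point-cliques (deduplicating by equality of vertex sets, which is $\textsf{AC}^0$), and then emit the point-block incidence matrix whose $(p,B)$ entry records membership. Assigning canonical labels to the points (the cliques) and to the blocks (the vertices) is $\textsf{AC}^0$, so the entire output step stays within $\textsf{AC}^0$, and the overall complexity is dominated by the clique-separation step.

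I expect the degree-gap analysis to be the main obstacle: I must compute precisely the within-$H$ degree of a clique vertex versus a non-clique vertex and verify that $\sqrt{n}-2 > (k-1)^2$ forces the former to strictly exceed the latter. This is a counting argument analogous to the $(k-1)^2 - 1$ bound in \Lem{LemmaReconstructNets}, but the design parameters (replication number, the interaction of $\lambda = 1$ with $t = 2$) must be tracked carefully to pin down the two degrees, and to confirm that no spurious common neighbor can masquerade as a clique vertex under the stated hypothesis.
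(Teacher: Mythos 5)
Your proposal follows essentially the same route as the paper's proof (which itself closely tracks Spielman's Proposition~10): fix adjacent blocks $B_{1}, B_{2}$ meeting at a unique point $p$, take the induced subgraph on their $R-2+(k-1)^{2}$ common neighbors, separate the $R$-clique of blocks through $p$ from the $(k-1)^{2}$ spurious neighbors by a degree gap guaranteed by $R > \sqrt{n}$ together with the hypothesis $\sqrt{n}-2 > (k-1)^{2}$, with the threshold comparison giving $\textsf{TC}^{0}$ in general and $\textsf{AC}^{0}$ for bounded $k$. Your accounting of the degree gap is in fact slightly more careful than the paper's (a spurious block can meet up to $k$ of the point-$p$ blocks, whereas the paper asserts no intersection at all), but the argument and conclusion are the same.
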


\begin{proof}
We closely analyze the proof of \cite[Proposition~10]{SpielmanSRGs}. We note that $n$ is the number of blocks in the Steiner design. Let $v$ be the number of points in the Steiner design, and let $R = (v-1)/(k-1)$ be the number of blocks containing a given point. Let $B_{1}, B_{2}$ be two blocks that intersect uniquely at the point $p$. There are $R-2+(k-1)^{2}$ blocks that intersect both $B_{1}$ and $B_{2}$. We note that $R-2$ of these blocks also go through $p$, and the remaining $(k-1)^{2}$ blocks go through points other than $p$. 

We note that $p$ corresponds to the edge $\{B_{1}, B_{2}\} \in E(G)$. The $R$ blocks that intersect $p$ (including $B_{1}, B_{2}$) form a clique. Furthermore, these $R-2$ blocks intersecting $B_{1}, B_{2}$ at $p$ do not intersect with the remaining $(k-1)^{2}$ blocks that intersect with $B_{1}, B_{2}$ in points other than $p$. Let $N$ be the set of mutual neighbors for $B_{1}, B_{2}$. Now if $R-2 > (k-1)^{2}$, the $R$-clique is distinguished from the remaining $(k-1)^{2}$ blocks that don't contain $p$ by their degrees in the induced subgraph $G[N]$. We may distinguish these vertices in $\textsf{TC}^{0}$ in general. If $k$ is bounded, then we may distinguish these vertices in $\textsf{AC}^{0}$.

The fact that $R > \sqrt{n}$ was established in \cite[Proposition~10]{SpielmanSRGs}.
\end{proof}

We combine Proposition \ref{ReconstructSteinerDesign} with Bose's result that pseudo-STS graphs with strictly more than $67$ vertices are STS graphs \cite{Bose} to obtain the following.

\begin{corollary}
Deciding whether two pseudo-STS graphs are isomorphic is in $\beta_{2}\textsf{L} \cap \beta_{2}\textsf{FOLL}$. In particular, for any $i, c \geq 0$, $\algprobm{GI}$ is not $\beta_{i}\textsf{FO}((\log \log n)^{c})$-reducible to isomorphism testing of pseudo-STS graphs.
\end{corollary}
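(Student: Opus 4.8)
The plan is to combine the reconstruction result for pseudo-STS graphs with the already-established upper bound for isomorphism testing of Steiner triple systems. The overall strategy mirrors the structure used earlier for pseudo-Latin square graphs (see the Proposition establishing that isomorphism testing of pseudo-Latin square graphs is in $\beta_{2}\textsf{FOLL}$): first reduce from the graph problem to the underlying combinatorial object, and then invoke the quasigroup-based bound.

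First I would dispose of the finitely many small cases. Bose's result \cite{Bose} guarantees that a pseudo-STS graph on strictly more than $67$ vertices is in fact an STS graph, so for any pseudo-STS graph with at most $67$ vertices we can decide isomorphism by brute force in $\textsf{AC}^{0}$ (there are only finitely many such graphs, so membership can be hard-coded). Thus I would assume that both input graphs $G$ and $H$ have more than $67$ vertices, in which case both are genuine STS graphs. The next step is to apply \Prop{ReconstructSteinerDesign} to recover, in $\textsf{TC}^{0}$ (and in $\textsf{AC}^{0}$ if the block size $k$ is bounded; note that for STS graphs the block size is $k = 3$, which is bounded), the underlying Steiner triple systems $\mathcal{S}_{G}$ and $\mathcal{S}_{H}$ from $G$ and $H$ respectively.

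Having recovered the two Steiner triple systems, I would then invoke \Observation{ThmSTSIsomorphism}: deciding whether two Steiner triple systems are isomorphic is in $\beta_{2}\textsf{L} \cap \beta_{2}\textsf{FOLL}$. Since the reconstruction step for bounded block size is $\textsf{AC}^{0}$-computable and $\textsf{AC}^{0}$ is closed under composition with $\beta_{2}\textsf{L} \cap \beta_{2}\textsf{FOLL}$ as a precomputation, composing the reconstruction with the STS isomorphism test yields that isomorphism testing of pseudo-STS graphs lies in $\beta_{2}\textsf{L} \cap \beta_{2}\textsf{FOLL}$. Finally, because Chattopadhyay, Tor\'an, and Wagner \cite{ChattopadhyayToranWagner} showed that $\beta_{i}\textsf{FO}((\log\log n)^{c})$ cannot compute \algprobm{Parity} for any $i, c \geq 0$, while \algprobm{Parity} is $\textsf{AC}^{0}$-reducible to \algprobm{GI} \cite{Toran}, the membership in $\beta_{2}\textsf{FOLL}$ rules out any such reduction from \algprobm{GI}, giving the second assertion.

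The main obstacle I anticipate is not conceptual but rather in the clean composition of the reduction with the nondeterministic bits. One must verify that feeding the reconstructed design (an $\textsf{AC}^{0}$ or $\textsf{TC}^{0}$ function of the input graph) into the $\beta_{2}\textsf{L} \cap \beta_{2}\textsf{FOLL}$ procedure keeps the total nondeterministic-bit budget at $O(\log^{2} n)$ and preserves the circuit-depth bounds—in other words, that the relevant classes compose in the expected way. Since $\textsf{AC}^{0}$ precomputation can be absorbed into the subsequent $\beta_{2}\textsf{FOLL}$ or $\beta_{2}\textsf{L}$ computation without increasing the asymptotic guess length, this composition goes through, but it is the step that warrants the most care to state precisely.
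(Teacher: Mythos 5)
Your proposal is correct and follows essentially the same route as the paper, which obtains this corollary precisely by combining Bose's $67$-vertex threshold, Proposition~\ref{ReconstructSteinerDesign} (with block size $k=3$ bounded, so the reconstruction is $\textsf{AC}^{0}$), and Observation~\ref{ThmSTSIsomorphism}, mirroring the proof given for the analogous proposition on pseudo-Latin square graphs. The only point you leave implicit is that for $n > 67$ the hypothesis $\sqrt{n} - 2 > (k-1)^{2} = 4$ of Proposition~\ref{ReconstructSteinerDesign} is indeed satisfied, which is immediate.
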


\begin{remark}
To the best of the author's knowledge, it remains open whether the bound of $67$ due to Bose \cite{Bose} can be decreased.
\end{remark}

\subsection{Reduction to Steiner $2$-Designs}

Babai \& Wilmes \cite{BabaiWilmes} previously exhibited a reduction from isomorphism testing of Steiner $t$-designs to the case of Steiner $2$-designs. A careful analysis shows that their reduction is $\beta_{2}\textsf{AC}^{0}$-computable. As a corollary, we obtain that $\algprobm{GI}$ is not $\textsf{AC}^{0}$-reducible to isomorphism testing of Steiner $(t, t+1)$-designs. 

\begin{observation}[c.f. {\cite{BabaiWilmes}}]
If isomorphism testing of Steiner $2$-designs belongs to $\beta_{2}\textsf{L} \cap \beta_{2}\textsf{FOLL}$, then testing isomorphism of Steiner $t$-designs also belongs to $\beta_{2}\textsf{L} \cap \beta_{2}\textsf{FOLL}$.
\end{observation}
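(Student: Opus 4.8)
The plan is to realize the Babai--Wilmes reduction \cite{BabaiWilmes} through \emph{derived designs} and to argue that it has the shape of a guess-and-verify reduction that composes cleanly with the hypothesized $\beta_2\textsf{L}\cap\beta_2\textsf{FOLL}$ algorithm for Steiner $2$-designs. The key combinatorial observation is that, for \emph{any} $(t-2)$-element subset $A$ of the point set of a Steiner $(t,k,v)$-design $\mathcal{D}$, the derived design $\mathcal{D}(A)$ is a Steiner $(2,k-t+2,v-t+2)$-design: any pair of points in $X\setminus A$ together with $A$ forms a $t$-set, which lies in a unique block of $\mathcal{D}$; that block necessarily contains $A$, and so it yields the unique derived block through the pair. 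Hence no choice of $A$ is ``bad,'' and $A$ may be fixed once and for all in the first design, leaving only its image to be guessed.

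Concretely, to test $\mathcal{D}_1\cong\mathcal{D}_2$ I would proceed as follows. First, fix an arbitrary $(t-2)$-set $A\subseteq X_1$. Using $O(t\log v)=O(\log n)$ nondeterministic bits (recall $t$ is a fixed constant), guess a $(t-2)$-set $A'\subseteq X_2$ together with a bijection $A\to A'$. Next, form the point-block incidence matrices of the derived designs $\mathcal{D}_1(A)$ and $\mathcal{D}_2(A')$; selecting the blocks containing $A$ (resp. $A'$) via an $\textsf{AND}$ over the $O(1)$ membership bits of $A$, and then restricting incidence to $X_1\setminus A$ (resp. $X_2\setminus A'$), is an $\textsf{AC}^0$ computation, and by the observation above no two selected blocks collapse to the same derived block. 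Then invoke the hypothesized $\beta_2\textsf{L}\cap\beta_2\textsf{FOLL}$ algorithm for Steiner $2$-designs on $(\mathcal{D}_1(A),\mathcal{D}_2(A'))$; as in the Latin-square case (cf. the proof of \Thm{LatinSquareIsotopy}), its witness determines an explicit isomorphism $\psi:\mathcal{D}_1(A)\to\mathcal{D}_2(A')$ recoverable in $\textsf{L}\cap\textsf{FOLL}$. Finally, combine the guessed bijection $A\to A'$ with $\psi$ into a single map $\sigma:X_1\to X_2$, and verify in $\textsf{AC}^0$ that $\sigma$ carries every block of $\mathcal{D}_1$ to a block of $\mathcal{D}_2$; accept iff this check passes.

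For correctness, if $\sigma$ passes the final check then $\mathcal{D}_1\cong\mathcal{D}_2$ outright. Conversely, if $\tau:\mathcal{D}_1\to\mathcal{D}_2$ is an isomorphism, then taking $A':=\tau(A)$, the bijection $\tau|_A$, and the witness realizing $\psi:=\tau|_{X_1\setminus A}$ makes the guessed data correct, so $\sigma=\tau$ passes. Thus ranging over all guesses captures exactly the existence of a full isomorphism. For the complexity, the outer guesses cost $O(\log n)$ bits and the inner $2$-design algorithm contributes $O(\log^2 n)$ bits, for a total of $O(\log^2 n)$, keeping us in $\beta_2$; and forming the derived designs and the final verification are $\textsf{AC}^0$, which compose with the $\textsf{L}\cap\textsf{FOLL}$ verifier to stay in $\textsf{L}\cap\textsf{FOLL}$ (as $\textsf{AC}^0\subseteq\textsf{L}$ and $\textsf{AC}^0\subseteq\textsf{FOLL}$, with the space and depth bounds adding).

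\textbf{The main obstacle} is not the combinatorics but the bookkeeping around the black box. One must ensure that the $2$-design algorithm can be taken in witnessing form, so that a specific \emph{extending} isomorphism $\psi$ is produced rather than a mere yes/no answer, and that every isomorphism $\tau|_{X_1\setminus A}$ of the derived designs is realized by some witness (as it is for the concrete \cite{ChattopadhyayToranWagner}-style algorithm, via mapping a cube generating sequence to its image); then folding the inner nondeterminism into the outer guesses yields a single $\beta_2(\textsf{L}\cap\textsf{FOLL})\subseteq\beta_2\textsf{L}\cap\beta_2\textsf{FOLL}$ computation. I would also verify the minor point that the derived incidence matrix can be handed to the oracle without an index-compaction step (which would require counting and push the construction to $\textsf{TC}^0$); keeping the original point/block universe with inactive rows marked avoids this and preserves the $\textsf{AC}^0$ bound.
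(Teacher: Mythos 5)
Your proposal follows essentially the same route as the paper's proof: pass to derived designs at a $(t-2)$-set, guess its image in the second design, run the hypothesized Steiner $2$-design procedure in witnessing form, and lift the resulting isomorphism by the guessed bijection, verifying the lift in $\textsf{AC}^{0}$. Your two deviations are benign or better: fixing $A$ in $\mathcal{D}_{1}$ rather than guessing it is a valid economization (any $(t-2)$-set works, exactly as your derived-design observation shows), and your explicit attention to the witnessing-form requirement on the black box is a point the paper treats only implicitly ("we may use the added non-determinism to guess an isomorphism \ldots that lifts").

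The one genuine flaw is the parenthetical "recall $t$ is a fixed constant." The statement quantifies over Steiner $t$-designs with no bound on $t$, so this assumption is unjustified as written, and without some bound on $t$ the guess of the $(t-2)$-set could exceed the $\beta_{2}$ budget. The paper instead invokes the fact that nontrivial Steiner $t$-designs force $t \in O(\log n)$ (see the proof of \cite[Theorem~30]{BabaiWilmes}, citing \cite{ChaudhuriWilson}), so guessing the image set together with its ordering costs $O(\log^{2} n)$ nondeterministic bits rather than your claimed $O(\log n)$. This is still within $\beta_{2}$, so your argument survives intact once the incorrect justification is replaced by this citation; but as submitted, that step is a gap.
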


\begin{proof}
Let $\mathcal{D}_{1} = (X_{1}, \mathcal{B}_{1}, I_{1})$ and $\mathcal{D}_{2} = (X_{2}, \mathcal{B}_{2}, I_{2})$ be Steiner $(t, k, v)$-designs. We begin by guessing subsequences $a_{1}, \ldots, a_{t-2} \in X_{1}$ and $b_{1}, \ldots, b_{t-2} \in X_{2}$. While we think of $a_{1}, \ldots, a_{t-2}$ and $b_{1}, \ldots, b_{t-2}$ as determining subsets $A = \{ a_{1}, \ldots, a_{t-2} \} \subseteq X_{1}$ and $B = \{ b_{1}, \ldots, b_{t-2} \} \subseteq X_{2}$, we stress that we guess both the elements \textit{and} an ordering. As $t \in O(\log n)$ (see the proof of \cite[Theorem~30]{BabaiWilmes}, in which Babai \& Wilmes cite \cite{ChaudhuriWilson}), guessing $A$ and $B$ requires $O(\log^{2} n)$ bits.  We may write down the derived designs $\mathcal{D}_{1}(A)$ and $\mathcal{D}_{2}(B)$ in $\textsf{AC}^{0}$ (see Section~\ref{sec:Preliminaries} for the definition of a derived design).

Suppose first that $\mathcal{D}_{1} \cong \mathcal{D}_{2}$. Let $\varphi : X_{1} \to X_{2}$ be an isomorphism. Then for any $A \subseteq X_{1}$ of size $t-2$, the derived designs $\mathcal{D}_{1}(A)$ and $\mathcal{D}_{2}(\varphi(A))$ are isomorphic.

Conversely, let $\psi : X_{1} \setminus A \to X_{2} \setminus B$ be an isomorphism of the derived designs $\mathcal{D}_{1}(A) \cong \mathcal{D}_{2}(B)$. Define:
\[
\widehat{\psi}(x) = \begin{cases} \psi(x) & : x \not \in A, \\ b_{i} & : x = a_{i} \in A. \end{cases} 
\]

We may easily check in $\textsf{AC}^{0}$ whether $\widehat{\psi}$ is an isomorphism of $\mathcal{D}_{1}$ and $\mathcal{D}_{2}$.  In particular, if isomorphism testing of Steiner $2$-designs belongs to $\beta_{2}\textsf{L} \cap \beta_{2}\textsf{FOLL}$, we may use the added non-determinism to guess an isomorphism of $\mathcal{D}_{1}(A) \cong \mathcal{D}_{1}(B)$ that lifts to an isomorphism of $\mathcal{D}_{1} \cong \mathcal{D}_{2}$. In total, we are still using at most $O(\log^{2} n)$ non-deterministic bits.
\end{proof}

We obtain the following corollaries.

\begin{corollary}
The problem of deciding whether two Steiner $(t, t+1)$-designs are isomorphic is $\beta_{2}\textsf{AC}^{0}$-reducible to the problem of finding an isomorphism of two Steiner triple systems.
\end{corollary}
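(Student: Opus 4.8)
The plan is to specialize the reduction of the preceding observation to the case of block size $k = t+1$ and to verify that the derived designs that arise are exactly Steiner triple systems. Recall that the derived design of a Steiner $(t, k, v)$-design at a set $A$ is a Steiner $(t - |A|, k - |A|, v - |A|)$-design. Taking $k = t+1$ and $|A| = t-2$, the derived design $\mathcal{D}_{1}(A)$ is a Steiner $(2, 3, v - t + 2)$-design, i.e., a Steiner triple system. Thus the very same construction used in the preceding observation, when run on Steiner $(t, t+1)$-designs, queries only Steiner triple systems, and it is this parameter count that drives the corollary.

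Concretely, I would proceed as follows. First, as in the preceding observation, guess ordered subsequences $a_{1}, \ldots, a_{t-2} \in X_{1}$ and $b_{1}, \ldots, b_{t-2} \in X_{2}$; since $t \in O(\log n)$, this costs $O(\log^{2} n)$ nondeterministic bits and hence stays within the $\beta_{2}$ budget. Second, write down the derived designs $\mathcal{D}_{1}(A)$ and $\mathcal{D}_{2}(B)$; by the definition of a derived design this is an $\textsf{AC}^{0}$ computation, and by the parameter count above both are Steiner triple systems. Third, invoke the Steiner-triple-system isomorphism-finding subroutine on $\mathcal{D}_{1}(A)$ and $\mathcal{D}_{2}(B)$ to obtain a candidate isomorphism $\psi : X_{1} \setminus A \to X_{2} \setminus B$. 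Fourth, extend $\psi$ to $\widehat{\psi}$ by setting $\widehat{\psi}(a_{i}) = b_{i}$, and check in $\textsf{AC}^{0}$ whether $\widehat{\psi}$ is an isomorphism of the full designs $\mathcal{D}_{1}$ and $\mathcal{D}_{2}$; we accept precisely when some branch produces a successful lift. Since every stage apart from the single call to the Steiner-triple-system subroutine is $\textsf{AC}^{0}$, and the guessing uses only $O(\log^{2} n)$ bits, the entire reduction is $\beta_{2}\textsf{AC}^{0}$-computable.

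Correctness in the forward direction is immediate: if $\varphi : X_{1} \to X_{2}$ is an isomorphism of the full designs, then on the branch with $b_{i} = \varphi(a_{i})$ the map $\varphi$ restricts to an isomorphism of the derived triple systems, so a lifting isomorphism of the derived designs exists; conversely any successful $\widehat{\psi}$ is by construction a genuine isomorphism. The one point that requires care — and the main obstacle — is that the isomorphism returned by the Steiner-triple-system subroutine on a fixed pair of derived designs need not be one that lifts, even when a lifting isomorphism exists. This is precisely why the reduction targets the \emph{finding} (search) version rather than mere decision, and why the $\beta_{2}$ nondeterminism must range over certificates for the derived isomorphism as well: exactly as in the preceding observation, folding the search for the derived-design isomorphism into the nondeterministic guesses guarantees that a lifting isomorphism is produced on some branch whenever $\mathcal{D}_{1} \cong \mathcal{D}_{2}$, and the final $\textsf{AC}^{0}$ verification rejects all spurious branches.
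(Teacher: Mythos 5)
Your proposal is correct and follows essentially the same route as the paper: the corollary is obtained by specializing the preceding observation's reduction to $k = t+1$, noting that the derived designs at $(t-2)$-element sets are Steiner $(2,3,v-t+2)$-designs, i.e., Steiner triple systems, and folding the nondeterminism of the triple-system isomorphism search into the overall $O(\log^{2} n)$ guess so that a lifting isomorphism appears on some branch and is verified in $\textsf{AC}^{0}$. Your explicit remark that a single black-box answer from the search subroutine need not lift, and that the nondeterminism must therefore range over its certificates as well, is precisely the subtlety the paper handles the same way (via the Chattopadhyay--Tor\'an--Wagner procedure's guessed cube generating sequences).
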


Now deciding isomorphism testing of Steiner triple systems is $\textsf{AC}^{0}$-reducible to \algprobm{Quasigroup Isomorphism} (see Theorem \ref{ThmSTSIsomorphism}). Furthermore, Chattopadhyay, Tor\'an, and Wagner solved the search version of \algprobm{Quasigroup Isomorphism}; that is, their $\beta_{2}\textsf{L} \cap \beta_{2}\textsf{FOLL}$ procedure for \algprobm{Quasigroup Isomorphism} returns an isomorphism of the two quasigroups whenever an isomorphism exists. So in particular \algprobm{GI} is not $\textsf{AC}^{0}$-reducible to the search version of \algprobm{Quasigroup Isomorphism} \cite{ChattopadhyayToranWagner}. We may use the isomorphism for the quasigroups to construct an isomorphism of the Steiner triple systems, which may in turn be used to construct an isomorphism of the two Steiner $(t, t+1)$-designs. We summarize this observation with the following corollaries.

\begin{corollary}
The problem of deciding whether two Steiner $(t, t+1)$-designs are isomorphic is $\beta_{2}\textsf{AC}^{0}$-reducible to the problem of finding an isomorphism of two quasigroups.
\end{corollary}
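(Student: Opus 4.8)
The plan is to chain together two reductions already established in the excerpt, together with the search-version result for \algprobm{Quasigroup Isomorphism}. The statement asserts that deciding whether two Steiner $(t,t+1)$-designs are isomorphic reduces, under $\beta_{2}\textsf{AC}^{0}$-reductions, to \emph{finding} an isomorphism of two quasigroups. Since the preceding corollary already gives a $\beta_{2}\textsf{AC}^{0}$-reduction from Steiner $(t,t+1)$-design isomorphism to finding an isomorphism of two Steiner triple systems, it suffices to reduce the latter (the \emph{search} version for Steiner triple systems) to the search version of \algprobm{Quasigroup Isomorphism}, and then to argue that composing the two reductions stays within $\beta_{2}\textsf{AC}^{0}$.

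First I would invoke \Observation{ThmSTSIsomorphism}, whose proof exhibits an $\textsf{AC}^{0}$-computable map sending a Steiner triple system $\mathcal{S}$ to a quasigroup $Q(\mathcal{S})$ via the rule $x^{2}=x$ and $ab=c$ for each block $\{a,b,c\}$, in such a way that $\mathcal{S}$ determines $Q$ up to isomorphism. The key observation for the \emph{search} version is that this correspondence is not merely an isomorphism-invariant decision gadget: an isomorphism of quasigroups $\varphi : Q(\mathcal{S}_{1}) \to Q(\mathcal{S}_{2})$ is, by construction, a bijection on the point sets that carries blocks to blocks, hence \emph{is} an isomorphism of the underlying Steiner triple systems (and conversely). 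So no extra post-processing is needed: the same map $\varphi$ returned by the quasigroup search procedure directly serves as the desired isomorphism of Steiner triple systems. Building the two quasigroups from the two triple systems is $\textsf{AC}^{0}$-computable, giving an $\textsf{AC}^{0}$ (hence $\beta_{2}\textsf{AC}^{0}$) reduction from the search version for Steiner triple systems to the search version for quasigroups.

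Next I would compose this with the previous corollary. The reduction from Steiner $(t,t+1)$-design isomorphism to Steiner-triple-system isomorphism-finding uses $O(\log^{2} n)$ nondeterministic bits to guess the ordered subsets $A$ and $B$ and an $\textsf{AC}^{0}$ postcheck verifying that the lifted map $\widehat{\psi}$ is a genuine isomorphism of the full designs. Splicing the $\textsf{AC}^{0}$ Steiner-triple-system-to-quasigroup map into this pipeline keeps the nondeterministic budget at $O(\log^{2} n)$ bits and the deterministic part in $\textsf{AC}^{0}$, since composition of $\textsf{AC}^{0}$ functions is $\textsf{AC}^{0}$ and the nondeterministic prefix is unchanged. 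Thus the whole composite is a $\beta_{2}\textsf{AC}^{0}$-reduction.

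The main obstacle I anticipate is bookkeeping the lifting correctly rather than any genuine complexity difficulty: I must confirm that the isomorphism $\varphi$ of quasigroups, restricted appropriately, composes with the derived-design isomorphism $\psi$ and lifts through $\widehat{\psi}$ to an honest isomorphism of the original Steiner $(t,t+1)$-designs, and that each verification step remains a fixed-depth, polynomial-size unbounded-fan-in circuit. In particular I would be careful that the quasigroup search procedure from \cite{ChattopadhyayToranWagner} returns an \emph{actual} isomorphism (not just a yes/no answer) whenever one exists, which the excerpt already notes, so that the block-to-block correspondence can be read off and plugged into the $\textsf{AC}^{0}$ postcheck of the outer reduction.
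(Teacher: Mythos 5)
Your proposal is correct and follows essentially the same route as the paper: the paper also chains the preceding corollary (the $\beta_{2}\textsf{AC}^{0}$-reduction from Steiner $(t,t+1)$-design isomorphism to \emph{finding} an isomorphism of Steiner triple systems) with the $\textsf{AC}^{0}$ Steiner-triple-system-to-quasigroup construction of Observation~\ref{ThmSTSIsomorphism}, using the fact that the search version of \algprobm{Quasigroup Isomorphism} returns an actual isomorphism, which (being block-preserving) pulls back to an isomorphism of the triple systems and then lifts to one of the original designs. Your explicit check that quasigroup isomorphisms of the Steiner quasigroups are exactly isomorphisms of the underlying triple systems, and that the composition keeps the nondeterministic budget at $O(\log^{2} n)$ bits with $\textsf{AC}^{0}$ deterministic parts, is precisely what the paper leaves implicit.
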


\begin{corollary}
For any $i, c \geq 0$, \algprobm{GI} is not $\beta_{i}\textsf{FO}((\log \log n)^{c})$-reducible to the problem of deciding whether two Steiner $(t, t+1)$-designs are isomorphic.
\end{corollary}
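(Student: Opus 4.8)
The plan is to read the upper bound off the preceding corollary and then invoke the standard \algprobm{Parity} barrier used throughout the paper. The immediately preceding corollary gives a $\beta_2\textsf{AC}^0$-reduction from isomorphism of Steiner $(t,t+1)$-designs to the \emph{search} version of \algprobm{Quasigroup Isomorphism}: guessing a $(t-2)$-subset and passing to the derived design produces a Steiner $(2,3)$-design, i.e.\ a Steiner triple system, which in turn reduces to \algprobm{Quasigroup Isomorphism}. Chattopadhyay, Tor\'an, and Wagner solve the search version of \algprobm{Quasigroup Isomorphism} in $\beta_2\textsf{L}\cap\beta_2\textsf{FOLL}$ \cite{ChattopadhyayToranWagner}. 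Since this class is closed under $\beta_2\textsf{AC}^0$-reductions — the two $O(\log^2 n)$ guessing budgets combine, and an $\textsf{AC}^0$ prefix adds only constant depth, which is absorbed into the $O(\log\log n)$ depth budget — I would first conclude that isomorphism of Steiner $(t,t+1)$-designs lies in $\beta_2\textsf{L}\cap\beta_2\textsf{FOLL}$.

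With this membership in hand, I would argue by contradiction. Suppose that for some $i,c\ge 0$ there were a $\beta_i\textsf{FO}((\log\log n)^c)$-reduction from \algprobm{GI} to isomorphism of Steiner $(t,t+1)$-designs. By Tor\'an's result \cite{Toran}, \algprobm{Parity} is $\textsf{AC}^0$-many-one-reducible to \algprobm{GI}; composing the $\textsf{AC}^0$ reduction with the hypothesized one yields a $\beta_i\textsf{FO}((\log\log n)^c)$-reduction from \algprobm{Parity} to the design problem, since an $\textsf{AC}^0$ prefix increases neither the order of the depth nor the order of the nondeterministic budget. Feeding the produced instances into the $\beta_2\textsf{FOLL}$ decision procedure for the design problem would then place \algprobm{Parity} in $\beta_{\max(i,2)}\textsf{FO}((\log\log n)^{\max(c,1)})$, because the depth budgets $(\log\log n)^c$ and $\log\log n$, and the nondeterministic budgets $O(\log^i n)$ and $O(\log^2 n)$, each compose to a quantity of the same form.

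This is a contradiction: Chattopadhyay, Tor\'an, and Wagner proved that for every $j,d\ge 0$ the class $\beta_j\textsf{FO}((\log\log n)^d)$ fails to compute \algprobm{Parity} \cite{ChattopadhyayToranWagner}. Hence no such reduction can exist for any $i,c\ge 0$, which is exactly the claim. I expect the only delicate point to be the closure and composition bookkeeping in the last two steps — verifying that stacking an $\textsf{AC}^0$ reduction, a $\beta_i\textsf{FO}((\log\log n)^c)$ reduction, and a $\beta_2\textsf{FOLL}$ decision procedure leaves us inside a single class of the form $\beta_j\textsf{FO}((\log\log n)^d)$. This, however, is precisely the bookkeeping already carried out in the analogous corollaries for Latin square graphs, nets, and pseudo-STS graphs, so no genuinely new argument is required.
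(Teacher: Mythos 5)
Your proposal is correct and matches the paper's (largely implicit) argument: the corollary follows by composing the $\beta_{2}\textsf{AC}^{0}$-reduction to the search version of \algprobm{Quasigroup Isomorphism} with the $\beta_{2}\textsf{L} \cap \beta_{2}\textsf{FOLL}$ bound of Chattopadhyay, Tor\'an, and Wagner, and then invoking the \algprobm{Parity} barrier via Tor\'an's $\textsf{AC}^{0}$-reduction from \algprobm{Parity} to \algprobm{GI}. Your explicit bookkeeping of the nondeterministic and depth budgets, and your correct insistence on the \emph{search} version (needed so the derived-design isomorphism can be lifted and verified), are exactly the points the paper relies on.
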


\section{Conference Graphs} \label{sec:Conference}

In this section, we consider the complexity of identifying conference graphs. A \textit{conference} graph is a strongly regular graph with parameters $(n, (n-1)/2, (n-5)/4, (n-1)/4)$. For a graph $G$, a \textit{distinguishing set} $S$ is a subset of $V(G)$ such that for every $u, v \in V(G)$, we have that $u \in S, v \in S$, or $N(u) \cap S \neq N(v) \cap S$. We have the following observation.

\begin{observation} \label{obs:ColorRefinement}
Let $G$ be a graph, and let $S$ be a distinguishing set. After individualizing each vertex in $S$, we have that $2$ rounds of the count-free Color Refinement algorithm will assign each vertex in $G$ a unique color.
\end{observation}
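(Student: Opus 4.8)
The plan is to unwind the definitions of a distinguishing set and of count-free Color Refinement, and to verify directly that after individualizing the vertices of $S$, exactly two refinement rounds suffice to separate every pair of vertices. First I would fix the initial coloring: after individualization, each vertex $s \in S$ carries its own unique color, and every vertex $v \notin S$ starts with a common color (say, refined by degree, but since $G$ is regular this is a single class). The key structural fact I would extract from the distinguishing-set hypothesis is this: for any two distinct vertices $u, v \in V(G)$, either at least one of them lies in $S$, or $N(u) \cap S \neq N(v) \cap S$. In the first case the two vertices are already separated by their individualized colors, so only the second case requires work.

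Next I would analyze what one round of count-free refinement does to a vertex $v \notin S$. Since the vertices of $S$ all have pairwise distinct colors, the \emph{set} of colors appearing on $N(v)$ records exactly which elements of $S$ are neighbors of $v$, together with the (single) color of the non-$S$ neighbors. Thus after one round, $\chi_1(v)$ encodes the set $N(v) \cap S$ precisely. By the distinguishing-set property, any two vertices $u, v \notin S$ with $N(u) \cap S \neq N(v) \cap S$ therefore receive different colors $\chi_1(u) \neq \chi_1(v)$. This already separates all pairs outside $S$ whose $S$-neighborhoods differ; combined with the observation that vertices in $S$ are separated at round $0$, I claim every vertex now has a color that no vertex outside $S$ shares with it unless the two agree on their $S$-neighborhoods.

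The subtle point — and what I expect to be the main obstacle — is to confirm that \emph{exactly} two rounds, rather than one, are genuinely needed, and to pin down why the second round closes the gap. After the first round, distinct vertices of $V(G)\setminus S$ are already distinguished by the argument above, and vertices of $S$ are distinguished by individualization. What the second round must rule out is that some vertex $v \notin S$ accidentally shares a color with a vertex $s \in S$. One round already gives each $v \notin S$ a color built from $N(v)\cap S$; a clean way to finish is to note that the colors of $S$-vertices and the newly minted colors of non-$S$ vertices live in disjoint parts of the color space (the former inherited from individualization, the latter freshly created), so the second round — which re-reads neighborhoods now carrying these refined colors — propagates and stabilizes the coloring into one that is a total injection on $V(G)$. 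I would make this precise by checking that after round one the coloring already induces the discrete partition on $V(G)\setminus S$, and that a single further round suffices to integrate the $S$-vertices and confirm no collisions remain; care is needed because count-free refinement uses sets rather than multisets, so I must verify that no two vertices end up with identical color-sets despite differing neighborhoods, which is exactly where the distinguishing-set hypothesis is invoked a second time.

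Throughout, I would lean on the fact that $G$ is a conference graph only insofar as it is regular (so the degree-based initial coloring is trivial); the real content is purely the combinatorics of the distinguishing set, so the argument should go through for any graph equipped with such a set. The proof is essentially a bookkeeping verification, and its only delicate aspect is correctly tracking the count-free (set-based) semantics of the refinement operator across the two rounds.
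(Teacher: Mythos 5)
Your argument is correct in substance; in fact the paper states this observation without any proof at all, so there is nothing to compare against, but your case analysis is exactly the natural (and surely intended) one. Two points would tighten it considerably. First, the hand-wringing in your third paragraph about whether ``exactly two rounds, rather than one, are genuinely needed'' is misplaced: the observation claims only that two rounds \emph{suffice}, not that two are necessary, and your own first two paragraphs already show that one round suffices. Second, the worry that some $v \notin S$ might collide with some $s \in S$, and that a second round is required to rule this out, dissolves once you use the paper's definition of the refinement operator: $\chi_{r+1}(u) = (\chi_{r}(u), \{ \chi_{r}(w) : w \in N(u) \})$ carries the previous color as its first component, so $\chi_{r+1}(u) = \chi_{r+1}(v)$ forces $\chi_{r}(u) = \chi_{r}(v)$; distinctions, once made, can never be lost. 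Since individualization already separates every vertex of $S$ from every other vertex of $S$ and from every vertex of $V(G) \setminus S$ at round $0$, no such collision can occur at any later round, and no ``second invocation'' of the distinguishing-set hypothesis is needed. With that persistence property in hand the whole proof is three lines: round $0$ separates every pair meeting $S$; round $1$ separates every pair $u, v \notin S$, because any $s \in (N(u) \cap S) \setminus (N(v) \cap S)$ is the unique vertex carrying its color, so that color lies in the color set of $N(u)$ but not in that of $N(v)$; and round $2$ preserves all of this. Finally, your appeal to regularity of conference graphs in the opening paragraph is unnecessary, as you yourself note at the end: the statement is for arbitrary $G$, and nothing in the argument requires the vertices outside $S$ to start in a single color class.
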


Babai \cite[Lemma~3.2]{BabaiCanonicalLabeling1} (take $k = (n-5)/4$) showed that conference graphs admit distinguishing sets of size $O(\log n)$ (in fact, almost all such subsets of this size are distinguishing). As a consequence of this and Observation~\ref{obs:ColorRefinement}, we obtain the following.

\begin{theorem}
Let $G$ be a conference graph, and let $H$ be arbitrary. We can decide isomorphism between $G$ and $H$ in $\beta_{2}\textsf{AC}^{0}$.
\end{theorem}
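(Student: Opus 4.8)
The plan is to combine Babai's existence result on small distinguishing sets with Observation~\ref{obs:ColorRefinement} and the parallel complexity of count-free Color Refinement. By \cite[Lemma~3.2]{BabaiCanonicalLabeling1} (applied with $k = (n-5)/4$), a conference graph $G$ on $n$ vertices admits a distinguishing set $S$ of size $O(\log n)$. First I would nondeterministically guess such a set $S \subseteq V(G)$; since $|S| \in O(\log n)$ and each vertex is named by $O(\log n)$ bits, this uses $O(\log^{2} n)$ nondeterministic bits, which is exactly the budget afforded by the $\beta_{2}$ prefix. I would then individualize each vertex of $S$ (assign it a distinct fresh color), an $\textsf{AC}^{0}$-computable operation.

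Next I would verify that the guessed $S$ is genuinely a distinguishing set, and simultaneously use it to produce a canonical labeling of $G$. By Observation~\ref{obs:ColorRefinement}, once $S$ is a distinguishing set and its vertices are individualized, exactly two rounds of count-free Color Refinement assign every vertex of $G$ a unique color. Each round of count-free Color Refinement is $\textsf{AC}^{0}$-computable (as recalled in the Color Refinement subsection, following \cite{GroheVerbitsky}), so two rounds remain in $\textsf{AC}^{0}$. After running these two rounds I would check, in $\textsf{AC}^{0}$, whether all resulting colors are in fact distinct; if they are, then $S$ was a valid distinguishing set and the color classes constitute an ordering (canonical form) of $V(G)$. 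Because Babai guarantees that such an $S$ exists, at least one nondeterministic branch succeeds. To test $G \cong H$ against the arbitrary graph $H$, I would guess in parallel an ordered image $T \subseteq V(H)$ of $S$ (again $O(\log^{2} n)$ bits), individualize $T$ against $H$, run the same two rounds of count-free refinement on $H$, and then compare the two individualized, refined structures: the guessed correspondence $S \to T$ together with the unique colors induces a candidate bijection $V(G) \to V(H)$, and one checks in $\textsf{AC}^{0}$ that this bijection preserves both adjacency and the refinement colors. The overall procedure thus lies in $\beta_{2}\textsf{AC}^{0}$.

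The main subtlety, and the step I expect to be the real obstacle, is handling the \emph{arbitrary} graph $H$: the distinguishing-set guarantee is a property of conference graphs, not of $H$, so I cannot assume $H$ admits a small distinguishing set. The point is that I only need the guessed set $S$ in the conference graph $G$ to behave well; for $H$ I use the candidate image $T$ purely to pin down a \emph{candidate} bijection and then verify adjacency directly. If $H$ is isomorphic to $G$ via some $\sigma$, then $T = \sigma(S)$ is a distinguishing set for $H$ as well (distinguishing sets are an isomorphism invariant), so two rounds of count-free refinement on $H$ with $T$ individualized also yields unique colors, and the color-matching check succeeds on the correct branch. Conversely, if the verification succeeds on some branch, the explicit $\textsf{AC}^{0}$ adjacency check certifies a genuine isomorphism. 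I would be careful to phrase the final comparison so that it does not implicitly assume $H$ is strongly regular or even regular—all structural reasoning is carried by $G$, while $H$ is treated as an opaque adjacency matrix against which a fully-specified candidate bijection is tested.

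This yields the claimed bound, and since $\beta_{i}\textsf{FO}((\log\log n)^{c})$ cannot compute \algprobm{Parity} for any $i,c \geq 0$ \cite{ChattopadhyayToranWagner} while \algprobm{Parity} is $\textsf{AC}^{0}$-reducible to \algprobm{GI} \cite{Toran}, the stated non-reducibility consequence follows immediately.
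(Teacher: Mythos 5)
Your proposal is correct and follows essentially the same route as the paper: guess $O(\log^{2} n)$ nondeterministic bits naming a small vertex sequence in $G$ together with its candidate image in $H$, individualize the matched pairs, run two rounds of count-free Color Refinement (each round $\textsf{AC}^{0}$-computable), and accept only if all colors are unique and the color-induced bijection is verified to be an isomorphism, with completeness supplied by Babai's $O(\log n)$ distinguishing-set lemma and Observation~\ref{obs:ColorRefinement} and soundness by the explicit adjacency check. The only (harmless) imprecision is your claim that unique colors after two rounds certifies that $S$ was a distinguishing set — the paper notes the converse need not hold, but, as in your argument, only the unique-coloring outcome matters, not the distinguishing property itself.
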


\begin{proof}
We use $m := O(\log^{2} n)$ non-deterministic bits to guess a sequence $S = (s_{1}, \ldots, s_{m})$ for $G$-- while it would suffice for $\{ s_{1}, \ldots, s_{m}\}$ to be a distinguishing set for $G$, we only need that after individualizing the elements of $S$, two rounds of count-free Color Refinement will assign each vertex in $G$ a unique color. There may be non-distinguishing sets that acccomplish this. Let $S' = (s_{1}', \ldots, s_{m}')$ be the vertices of $H$ that were guessed. We now individualize $S$ and $S'$ so that $s_{i} \mapsto s_{i}'$ get the same color, and $s_{i}, s_{j}$ get different colors whenever $i \neq j$. The individualization step is $\textsf{AC}^{0}$-computable. We now run two rounds of count-free Color Refinement, which is $\textsf{AC}^{0}$-computable (c.f., \cite{GroheVerbitsky} and the discussion in Section~\ref{sec:Preliminaries}). Lastly, we check that each vertex in $G, H$ has a unique color, and whether the map induced by the colors is an isomorphism. This last step is $\textsf{AC}^{0}$-computable. The result now follows.
\end{proof}

In light of the previous work of Chattopadhyay, Tor\'an, \& Wagner  \cite{ChattopadhyayToranWagner}, we obtain the following corollary.

\begin{corollary}
For any $i, c \geq 0$, there is no $\beta_{i}\textsf{FO}((\log \log n)^{c})$-computable reduction from $\algprobm{GI}$ to isomorphism testing of conference graphs.
\end{corollary}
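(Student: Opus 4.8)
The plan is to obtain the non-reducibility as a direct consequence of the $\beta_{2}\textsf{AC}^{0}$ upper bound established in the preceding theorem, following the same \algprobm{Parity}-based argument used for the earlier corollaries in this paper. First I would record the two external ingredients: Tor\'an \cite{Toran} showed that \algprobm{Parity} is $\textsf{AC}^{0}$-many-one-reducible to \algprobm{GI}, and Chattopadhyay, Tor\'an, \& Wagner \cite{ChattopadhyayToranWagner} showed that for every $i, c \geq 0$ the class $\beta_{i}\textsf{FO}((\log \log n)^{c})$ cannot compute \algprobm{Parity}. Since the theorem places isomorphism testing of conference graphs in $\beta_{2}\textsf{AC}^{0}$, and since $\textsf{AC}^{0} \subseteq \textsf{FOLL} \subseteq \textsf{FO}((\log \log n)^{c})$ for every $c \geq 0$, this isomorphism problem lies in $\beta_{2}\textsf{FO}((\log \log n)^{c})$ for all $c$.

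I would then argue by contradiction. Suppose that for some fixed $i, c \geq 0$ there were a $\beta_{i}\textsf{FO}((\log \log n)^{c})$-computable reduction from \algprobm{GI} to conference graph isomorphism. Chaining the $\textsf{AC}^{0}$ reduction \algprobm{Parity} $\to$ \algprobm{GI}, this hypothetical reduction \algprobm{GI} $\to$ conference isomorphism, and finally the $\beta_{2}\textsf{AC}^{0}$ decision procedure for conference isomorphism, would yield a decision procedure for \algprobm{Parity}. The goal is to show that this composite procedure still lands in $\beta_{j}\textsf{FO}((\log \log n)^{c'})$ for suitable $j, c'$, which directly contradicts the Chattopadhyay--Tor\'an--Wagner lower bound and hence establishes the claim.

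The main obstacle is precisely this closure-under-composition step: I must verify that composing an $\textsf{AC}^{0}$ reduction, a $\beta_{i}\textsf{FO}((\log \log n)^{c})$ reduction, and a $\beta_{2}\textsf{AC}^{0}$ algorithm remains inside the \algprobm{Parity}-free hierarchy. The non-deterministic bit budgets add, contributing $O(\log^{\max(i,2)} n)$ guessed bits, so one takes $j = \max(i, 2)$; and since the constant-depth stages do not increase the asymptotic depth of the $(\log \log n)^{c}$-depth stage, the composite depth remains $O((\log \log n)^{c'})$ for some $c'$ (one must also check that substituting the polynomial-size output of one stage into the next only polynomially inflates the size, keeping the family polynomial-size). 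This is exactly the composition behavior exploited in the earlier corollaries of this paper, so I would invoke it in the same manner rather than re-deriving it from scratch.
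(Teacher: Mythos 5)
Your proposal is correct and takes essentially the same route as the paper: the corollary is derived directly from the $\beta_{2}\textsf{AC}^{0}$ upper bound for conference graph isomorphism, combined with Tor\'an's $\textsf{AC}^{0}$-reduction from \algprobm{Parity} to \algprobm{GI} and the Chattopadhyay--Tor\'an--Wagner result that no $\beta_{i}\textsf{FO}((\log \log n)^{c})$ class computes \algprobm{Parity}. Your explicit verification of the composition step (summing the non-deterministic bit budgets and checking that depth and size remain within the \algprobm{Parity}-free hierarchy) is precisely the closure behavior the paper invokes implicitly when it states the corollary ``in light of'' the prior work.
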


\section{Conclusion}

We showed that for any $i, c \geq 0$, \algprobm{GI} is not $\beta_{i}\textsf{FO}((\log \log n)^{c})$-reducible to several isomorphism problems characterized by the generator enumeration technique, including \algprobm{Latin Square Isotopy}, isomorphism testing of nets (which includes affine and projective planes), isomorphism testing of Steiner $(t, t+1)$-designs, and isomorphism testing of conference graphs. As a corollary, we obtained that \algprobm{GI} is not $\beta_{i}\textsf{FO}((\log \log n)^{c})$-reducible to isomorphism testing of Latin square graphs, $k$-net graphs (for fixed $k$), and the block-intersection graphs arising from Steiner triple systems. Our work leaves several directions for further research. 

In Proposition \ref{ReconstructSteinerDesign}, we showed that a Steiner $2$-design can be recovered from its block-incidence graph in $\textsf{AC}^{0}$ when the block size is bounded. Otherwise, the procedure is $\textsf{TC}^{0}$-computable, as we need to distinguish vertices by their degrees. As a step towards ruling out $\beta_{i}\textsf{FO}((\log \log n)^{c})$ reductions from $\algprobm{GI}$, it would be of interest to show that we can recover a Steiner $2$-design from its block-incidence graph in a complexity class that cannot compute \algprobm{Parity}. To this end, we ask the following.

\begin{question}
Can Steiner $2$-designs of unbounded block size be recovered from their block-incidence graphs in $\textsf{AC}^{0}$?
\end{question}

It would also be of interest to find additional families of Steiner $2$-designs where the isomorphism problem belongs to $\beta_{i}\textsf{FO}((\log \log n)^{c})$. As a starting point, we ask the following.

\begin{question}
For Steiner $2$-designs with bounded block size, can we decide isomorphism in $\beta_{2}\textsf{FOLL}$?
\end{question}

While a Steiner $2$-design $\mathcal{D}$ admits generating set $S$ of $O(\log v)$ points \cite{BabaiLuksCanonicalLabeling}, we have that $\Aut_{S}(\mathcal{D})$ may in general be non-trivial. This is the key barrier for the techniques here. 

Babai \& Wilmes \cite{BabaiWilmes} and Chen, Sun, \& Teng \cite{ChenSunTeng} independently showed that Steiner $2$-designs admit a set of $O(\log v)$ points where, once individualized, the Color Refinement algorithm assigns a unique color to each point. A priori, it seems plausible that only $\poly \log \log v$ iterations are required. However, Color Refinement takes into account the multiset of colors, and so each round can be implemented with a $\textsf{TC}^{0}$-circuit. In particular, Babai \& Wilmes rely crucially on counting  \cite[Target 2]{BabaiWilmes}. So we ask the following.

\begin{question}
Does there exist an absolute constant $c$, such that Steiner $2$-designs admit a set $S$ of $O(\log^{c} v)$ points where, after individuaizing $S$, the coloring from $\poly \log \log n$ rounds of count-free Color Refinement assigns each point a unique color?
\end{question}

In the remark following the proof of \cite[Lemma~3.2]{BabaiCanonicalLabeling1}, Babai outines a deterministic proof that leverages the greedy set cover algorithm (see \cite{LovaszGreedyCover}) to obtain a distinguishing set of the prescribed size. Now suppose that $G$ and $H$ are isomorphic, and the greedy set cover algorithm returns distinguishing sequences $S$ and $S'$ of the same size for $G, H$ respectively. A priori, $S$ and $S'$ need not be \textit{canonical} in the sense that there need not be an isomorphism $\varphi : V(G) \to V(H)$ mapping $\varphi(S) = S'$. Thus, we ask the following.

\begin{question} \label{q:5.4}
Is it possible to deterministically construct a canonical distinguishing set of the size prescribed by \cite[Lemma~3.2]{BabaiCanonicalLabeling1} for a graph in polynomial time? 
\end{question}

An answer of yes to Question~\ref{q:5.4} in tandem with the work in Section~\ref{sec:Conference} would immediately yield a polynomial-time isomorphism test for conference graphs. Babai's work \cite{BabaiCanonicalLabeling1} implies an $n^{O(\log n)}$-time algorithm, and to the best of my knowledge, no further improvements have been made to the runtime.

Alternatively, we ask the following.

\begin{question} \label{q:5.5}
Let $G$ be a conference graph. Does there exists a set of vertices $S$ of size $O(1)$ such that, after individualizing $S$ and running Color Refinement, each vertex of $G$ would receive a unique color?
\end{question}

An answer of yes to Question~\ref{q:5.5} would also yield a polynomial-time isomorphism test for conference graphs, using the individualize-and-refine paradigm. Even finding a such a set $S$ of size $o(\log n)$ would be a major advance, as it would yield an $n^{o(\log n)}$-time isomorphism test.

\bibliographystyle{alphaurl}
\bibliography{references}

\end{document}